\crefname{pleqs}{Equations}{Equations}
\tikzstyle{vecArrow} = [thick, decoration={markings,mark=at position
\definecolor{darkgreen}{RGB}{50,190,50}
\definecolor{darkblue}{RGB}{0,0,190}
\definecolor{darkred}{RGB}{238,0,0}
\definecolor{quantum}{RGB}{83,37,127}
\definecolor{quantumlight}{RGB}{169,146,191}
\newcommand{\pr}{^{\prime}}
\newcommand{\ket}[1]{\ensuremath{\left|\right.\!{#1}\!\left.\right\rangle}}
\newcommand{\bra}[1]{\ensuremath{\left\langle\right.\!{#1}\!\left.\right|}}
\newcommand{\braket}[2]{\ensuremath{\langle{#1}|{#2}\rangle}}
\newcommand{\ketbra}[2]{\ensuremath{|{#1}\rangle\!\langle{#2}|}}
\newcommand{\subtiny}[3]{\ensuremath{_{\hspace{#1 pt}\protect\raisebox{#2 pt}{\tiny{$ #3$}}}}}
\newcommand{\suptiny}[3]{\ensuremath{^{\hspace{#1 pt}\protect\raisebox{#2 pt}{\tiny{$ #3$}}}}}
\newcommand\scalemath[2]{\scalebox{#1}{\mbox{\ensuremath{\displaystyle #2}}}}
\DeclareMathOperator{\diag}{diag}
\newcommand{\djj}{d\kern-0.4em\char"16\kern-0.1em}
\newtheorem{theorem}{Theorem}[]
\newtheorem{definition}{Definition}
\newtheorem{coro}{Corollary}[section]
\newtheorem{prop}{Proposition}[section]
\newtheorem{teo}{Theorem}[section]
\newtheorem{defi}{Definition}
\numberwithin{defi}{section}
\renewcommand{\thesection}{\arabic{section}}
\renewcommand{\thesubsection}{\arabic{section}.\Alph{subsection}}
\renewcommand{\thesubsubsection}{\arabic{section}.\Alph{subsection}.\arabic{subsubsection}}
\renewcommand{\p@subsection}{}
\renewcommand{\p@subsubsection}{}
\newcommand{\M}[1]{\mathcal{#1}}
\tikzset{style green/.style={
    set fill color=green!50!lime!60,
    set border color=white,
  },
  style cyan/.style={
    set fill color=cyan!90!blue!60,
    set border color=white,
  },
  style orange/.style={
    set fill color=orange!80!red!60,
    set border color=white,
  },
  style hordash/.style={
    set fill color=white,
    set border color=black,
  },
    style white/.style={
    set fill color=white,
    set border color=white,
  },
     style rose/.style={
    set fill color= magenta!70!pink!70, %pink!20!blue!40!,
    set border color=white,
  },
    style redash/.style={
    set fill color=white,
    set border color=red,
  },
%for tiny font size
  hor/.style={
    above left offset={-0.09,0.25},
    below right offset={0.09,-0.05},
    #1
  },
%for normal font size
  ver/.style={
    above left offset={-0.09,0.35},
    below right offset={0.09,-0.1},
    #1
  }
%  ver/.style={
%    above left offset={-0.1,0.3},
%    below right offset={0.15,-0.15},
%    #1
%  }
}
\begin{document}

\title{Fermionic anyons: entanglement and quantum computation from a resource-theoretic perspective}

%%% ordem alfábetica no primeiro nome. Allan e Antônio como primeiro autores 
\author{Allan Tosta}
\thanks{A. T. and A. C. L. contributed equally to this work.}
\affiliation{Universidade Federal do Rio de Janeiro, Caixa Postal 68528, Rio de Janeiro, RJ 21941-972, Brazil}
\affiliation{Quantum Research Center, Technology Innovation Institute, Abu Dhabi, UAE}

\author{Ant\^onio C. Louren\c{c}o }
\thanks{A. T. and A. C. L. contributed equally to this work.}
\affiliation{Department of Physics and Astronomy, University of Iowa, Iowa City, Iowa 52242, USA}

\author{Daniel Brod}
\affiliation{Instituto de F\'isica, Universidade Federal Fluminense, 24210-346 Niter\'oi, Brazil}

\author{Fernando Iemini}
\affiliation{Instituto de F\'isica, Universidade Federal Fluminense, 24210-346 Niter\'oi, Brazil}
\affiliation{Institute for Physics, Johannes Gutenberg University Mainz, D-55099 Mainz, Germany}

\author{Tiago Debarba}
\email{debarba@utfpr.edu.br}
\affiliation{Departamento Acad{\^ e}mico de Ci{\^ e}ncias da Natureza, Universidade Tecnol{\'o}gica Federal do Paran{\'a} (UTFPR), Campus Corn{\'e}lio Proc{\'o}pio, Avenida Alberto Carazzai 1640, Corn{\'e}lio Proc{\'o}pio, Paran{\'a} 86300-000, Brazil}
\affiliation{Atominstitut, Technische Universit{\"a}t Wien, Stadionallee 2, 1020 Vienna, Austria}

\date{\today}
\begin{abstract}
Quantum computational models can be approached via the lens of resources needed to perform computational tasks, where a computational advantage is achieved by consuming specific forms of quantum resources, or, conversely, resource-free computations are classically simulable. Can we similarly identify quantum computational resources in the setting of more general quasi-particle statistics? In this work, we develop a framework to characterize the separability of a specific type of one-dimensional quasiparticle known as a fermionic anyon. As we evince, the usual notion of partial trace fails in this scenario, so we build the notion of separability through a fractional Jordan-Wigner transformation, leading to an entanglement description of fermionic-anyon states. We apply this notion of fermionic-anyon separability, and the unitary operations that preserve it, mapping it to the free resources of matchgate circuits. We also identify how entanglement between two qubits encoded in a dual-rail manner, as standard for matchgate circuits, corresponds to the notion of entanglement between fermionic anyons.
\end{abstract}
\maketitle

\emph{Introduction.} Over the last five decades, our notion of identical particles in nature has expanded beyond fermions and bosons. Many two-dimensional systems were shown to contain anyonic excitations \cite{wilczek_magnetic_1982,nayak_non-abelian_2008}, which are quasi-particles characterised by the non-trivial phases their wave functions acquire under particle exchange. These include fractional quantum Hall states \cite{arovas_fractional_1984,stern_anyons_2008}, topological spin liquids \cite{savary_quantum_2016,zhou_quantum_2017}, and semiconductor nanowire arrays \cite{stanescu_majorana_2013,sarma_majorana_2015}. These systems are seen as possible platforms for fault-tolerant quantum computing \cite{wilczek_magnetic_1982,sarma_majorana_2015}, given their inherent error-correcting properties \cite{fowler_surface_2012,brell_generalized_2015,litinski_lattice_2018} and recent experimental evidence of their existence and predicted properties \cite{nakamura_direct_2020}. 

{Although anyons are most commonly associated with two-dimensional systems, they can also be defined in one dimension.} Some notable examples are anyons obtained by dimensional reduction \cite{hansson_anyons_1991,ha_fractional_1995}, or appearing as a free-particle description of one-dimensional systems with two-body interactions \cite{lieb_exact_1963,calogero_ground_1969,haldane_model_1988,shastry_exact_1988,olshanetsky_quantum_1983}. {The one dimensional anyons considered here} are motivated by their role in solving many-body systems with three-body interactions \cite{kundu_exact_1999,batchelor_one-dimensional_2006,batchelor_fermionization_2006,calabrese_correlation_2007,patu_correlation_2007,hao_ground-state_2008,keilmann_statistically_2011} and have been investigated in optical lattice implementations \cite{greschner_density-dependent_2014,cardarelli_engineering_2016,liang_floquet_2018,schweizer_floquet_2019}. Although they lack the topological properties of their two-dimensional counterparts \cite{harshman_anyons_2020}, their relation to standard fermionic and bosonic systems via generalised Jordan-Wigner transformations \cite{meljanac_r_1994,doresic_generalized_1994,meljanac_unified_1996} makes them a good case study for generalisations of quantum computing with bosonic and fermionic linear optics.

In this letter, we develop a framework to define and investigate the separability of fermionic anyons. Since this is well-understood for fermions, the naive approach is to directly repurpose definitions of fermionic entanglement to their anyonic counterparts \cite{bonderson_anyonic_2017,mani_quantum_2020,zhang_statistically_2020,sreedhar_quantum_2022}. However, this sometimes leads to nonsensical results. {For example, we notice that {single-particle} transformations on a manifestly unentangled pure state can result in states with a nonzero entanglement entropy.}

{Within subspaces of fixed particle number, we circumvent these problems by a well-motivated approach for single-particle entanglement, and revise the definition of Schmidt coefficients of a composite fermionic-anyon state based on a non-canonical transformation over the anyonic states. Specifically, we map the anyonic algebra to another system that satisfies an anti-commutative algebra, and prove that the Schmidt
coefficients of the resulting mapped state coincide with those of the original anyonic state.}  

We showcase our approach by investigating the connection between separability and classical simulability in these systems. Free-fermionic quantum circuits  \cite{terhal_classical_2002} and matchgate computing \cite{valiant_quantum_2002,jozsa_matchgates_2008} are quantum computing settings where separability and computational power are tightly connected. Nearest-neighbor matchgate circuits can be mapped to free-fermion dynamics, and both are known to be classically simulable. However, it {is known} that supplementing these systems with any non-matchgate operation (in the fermionic picture, adding an interaction between particles) or any non-matchgate generated state (respectively, any non-Gaussian fermionic state) is enough to allow for universal quantum computation \cite{hebenstreit_all_2019}. Here, we leverage this connection to make a similar statement for fermionic anyons. In particular, we notice that both nonseparable states or transformations, as per our definition, can be seen as computational resources. Moreover, specific values of the fermionic-anyon phase $\varphi$ recover well-known results. Fermionic-anyon dynamics reduces to fermionic linear optics when $\varphi=0$, and to ``qubit linear optics'' (or matchgate quantum computing) when $\varphi=\pi$.\\

\emph{One-dimensional fermionic anyons.} Given a one-dimensional set of $m$ sites (or modes), we define a family of operator algebras $\{\mathcal{A}^{\varphi}_{m}|\text{ }\varphi\in[0,2\pi)\}$ over $\mathbb{C}$, generated by operators $\{a_{\varphi,i}|\text{ }i=1,...,m\}$, satisfying 
\begin{equation}\label{eq: anyon def}
\begin{split}
	&a_{\varphi,i}a_{\varphi,j}^{\dagger}+e^{-i\varphi\epsilon_{ij}}a_{\varphi,j}^{\dagger}a_{\varphi,i}=\delta_{ij}, \\
	&a_{\varphi,i}a_{\varphi,j}+e^{i\varphi\epsilon_{ij}}a_{\varphi,j}a_{\varphi,i}=0,
\end{split}
\end{equation}
with $\epsilon_{ij}$ given by
\begin{equation*}
    \epsilon_{ij}=\left\{\begin{matrix}
1, & \textnormal{if} \ \ i<j\\ 
0, & \textnormal{if} \ \ i=j\\ 
-1, & \textnormal{if} \ \ i>j.
\end{matrix}\right.
\end{equation*}

The variable $\varphi$ is called the \textit{statistical parameter} and determines the kind of particle described by the algebra. If $\varphi=0$ we identify $f_{i}:=a_{0,i}$, and $\mathcal{A}^{0}_{m}\equiv \mathcal{F}_{m}$, where $\mathcal{F}_{m}$ is the algebra of $m$-mode fermionic operators. If $\varphi=\pi$, then for all $i,j$ we have $[a_{\pi,i},a_{\pi,j}]=0$ as well as $\{a_{\pi,i},a^\dagger_{\pi,j}\}=0$, and we identify $\mathcal{A}^{\pi}_{m}\equiv\mathcal{Q}_{m}$, where $\mathcal{Q}_{m}$ is the algebra of operators for $m$-mode hardcore bosons, or qubits \cite{WuLydar2002}. For any other value of $\varphi$, the algebra $\mathcal{A}^{\varphi}_{m}$ describes particles with exotic exchange statistics called \textit{fermionic anyons}.

In the Supplemental Material \ref{app: anyon_description}, we show that, for all $\varphi$, the algebras $\mathcal{A}^{\varphi}_{m}$ have a well-defined Fock-space representation with number operators of the form $a^{\dagger}_{\varphi,i}a_{\varphi,i}$. Therefore, a general pure state of $N$ fermionic anyons has the form
\begin{equation}
 \ket{\psi} = \sum_{I_N} w_{I_N} a_{\varphi,i_1}^\dagger ... a_{\varphi,i_N}^\dagger \ket{\textrm{vac}},
\end{equation}
where $\ket{\textrm{vac}}$ is the vacuum state, $I_{N}=\{i_1<...<i_N\}$ is a shorthand for the list of particle indices, and 
$\sum_{I_N} |w_{I_N}|^2=1$.

\emph{Separability for fermionic anyons.} {For a quantum system with two sets of degrees of freedom, a standard quantifier of correlations for pure states is the entanglement entropy \cite{nielsen_quantum_2010}, 
\begin{equation}\label{eq.definition.ent.entropy}
 E(\ket{\psi}) = S( \rho_{\textrm{red}}).
\end{equation}
Here, $S(\rho)$ is the von Neumann entropy of $\rho$, and $\rho_{\textrm{red}}$ is the reduced state obtained by tracing out one of the subsystems.} 
{However, naively applying a particle partial trace on systems of fermionic anyons and computing its entanglement according to Eq.~\eqref{eq.definition.ent.entropy} can lead to nonsensical results.} {To illustrate that, consider the state 
$$\ket{\psi_{\theta}}= \frac{1}{\sqrt{2}}\left(a^{\dagger}_{\varphi,1}a_{\varphi,2}  + \cos\theta a^{\dagger}_{\varphi,1}a^{\dagger}_{\varphi,4} + i\sin\theta a^{\dagger}_{\varphi,2}a^{\dagger}_{\varphi,4}\right)\ket{\text{vac}}.$$ 
It can be obtained by applying a fermionic-anyon \emph{single-particle operation} on a manifestly separable state $\ket{\psi} = \frac{1}{\sqrt{2}}a^{\dagger}_{\varphi,1}(a^{\dagger}_{\varphi,2}+a^{\dagger}_{\varphi,4})\ket{\text{vac}}$, i.e.,
\begin{align}
    \ket{\psi_{\theta}} &= \exp[i\theta(a^{\dagger}_{\varphi,1}a_{\varphi,2}+a^{\dagger}_{\varphi,1}a_{\varphi,2})]\ket{\psi}.
\end{align}
Since particle entanglement is invariant under {single-particle} operations, the entanglement entropy of $\ket{\psi_{\theta}}$ should also be invariant, and hence zero by construction. As shown in Fig.~\ref{fig: entanglement entropy sp}, however, that is not the case. More details about this example can be found in Sec.~\ref{app: example} of Supplemental Material. 

For {standard} fermions, single-particle operations must act as changes of basis over single-particle systems, implying they have the second-quantized form 
\begin{equation}
{f}^{\dagger}_{i} \rightarrow Uf_{i}^{\dagger}U^{\dagger}=\sum_{j=1}^{m}U_{ij}f^{\dagger}_{j}, \label{eq:canonical}
\end{equation}
{where $U_{ij}$ are elements of an $m\cross m$ unitary matrix. This map is well-defined for fermions because it is canonical, i.e., does not change particle commutation relations.} However,  defining single-particle operations for fermionic anyons by analogy with \cref{eq:canonical} (i.e.\ replacing $f$ with $a_{\varphi}$) does not produce a canonical transformation. To properly define {these} operations for fermionic anyons, we must find an appropriate definition for their \emph{canonical transformations}.

%%%%%%%%%%%%%%%%%%%%%%%%%%%%%%%%%%%%%%%%%%%%%%%%%%
\begin{figure}[t]
    \centering
    \includegraphics[scale=0.57]{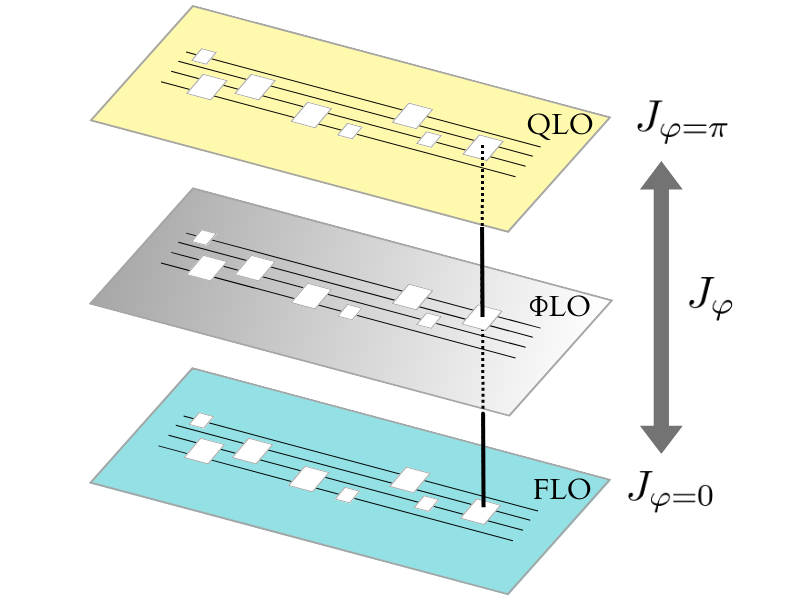}
    \caption{Fermionic anyon entanglement as defined in Eq.\eqref{eq.definition.ent.entropy} of the state $\ket{\psi\subtiny{0}{0}{\theta}} $. {The plot shows how the von Neumann entropy of the single particle state varies as a function of $\varphi$.}}
    \label{fig: entanglement entropy sp}
\end{figure}
%%%%%%%%%%%%%%%%%%%%%%%%%%%%%%%%%%%%%%%%%%%%%%%%%%%%

As shown in \cite{meljanac_generalized_1994,osterloh_fermionic_2000}, creation and annihilation operators for fermionic anyons (for any  $\varphi$) can be identified with operators in the usual fermionic algebra via the relation 
% escrever J_phi como sendo uma definicao
\begin{equation}\label{eq: jw-trad}
J_{\varphi}(a_{\varphi,j})=f_{j}e^{-i\varphi\sum^{j-1}_{k=1}f^{\dagger}_{k}f_{k}},%\quad J_{\varphi}(a^{\dagger}_{\varphi,j})=f^{\dagger}_{j}e^{i\varphi\sum^{j-1}_{k=1}f^{\dagger}_{k}f_{k}},
\end{equation} 
known as the \textit{fractional Jordan-Wigner transform} (JWT). It follows that  $a^{\dagger}_{\varphi,i}a_{\varphi,i}=f^{\dagger}_{i}f_{i}$, from which we obtain the inverse relationship
\begin{equation}\label{eq: jw-trad-inv}
J^{-1}_{\varphi}(f_{j})=a_{\varphi,j}e^{i\varphi\sum^{j-1}_{k=1}a^{\dagger}_{\varphi,k}a_{\varphi,k}},
\end{equation}

{Thus we define a map $J_{\varphi}$} over operators in $\mathcal{F}_{m}$ that is linear, invertible, preserves operator products and conjugation (see the Supplemental Material~\ref{app: JW transformation}), and use it to shift between the {fermionic and anyonic forms of} any operator. In other words, given $O\in\mathcal{F}_{m}$, we define the mapped operators via JWT by the following,
\begin{equation}\label{eq: def_JW}
O=:[O]_{0}=J_{\varphi}([O]_{\varphi}),\quad[O]_{\varphi}=J^{-1}_{\varphi}([O]_{0}).
\end{equation}

Thus, if $[U]_{0}$ is a single-particle change of basis over fermionic states, then $[U]_{\varphi}$ must have the same action in terms of fermionic-anyon states. For fermionic systems, an $N$-particle state $\ket{\phi}$ is \textit{separable}, i.e.,\ it has no particle entanglement, if there is an $N$-particle state in the Fock basis and a single-particle operator $U$ such that \cite{wiseman2003,Plastino_2009}
\begin{equation}\label{eq: N-anyons state}
\ket{\phi}=[U]_{0}(f^{\dagger}_{i_{1}}\cdots f^{\dagger}_{i_{N}})\ket{\textrm{vac}},
\end{equation}
where we can assume that $i_{1}<...<i_{m}$. These states are described by a single Slater determinant, with only exchange correlations due to symmetry \cite{debarba2014,Lourenco2019}. We now extend this notion for general fermionic anyons, leading to the central definition in this work:

\begin{definition}[Separable fermionic anyon states]
A pure $N$-particle fermionic-anyon state $\ket{\phi}$ is separable if and only if there is a {single-particle} fermionic operator $[U]_{0}$ such that
\begin{equation}
\ket{\phi}=[U]_{\varphi}(a^{\dagger}_{\varphi,i_{1}}\cdots {a}^{\dagger}_{\varphi,i_{N}})\ket{\textrm{vac}},
\end{equation} 
{where  $[U]_{\varphi} = J^{-1}_{\varphi}([U]_{0})$.}
\end{definition}

{Having defined a separability criterion, we investigate entanglement in fermionic anyons by adapting corresponding concepts for fermions. For instance, for fermionic systems, Ref.~\cite{gigena2015} shows that single-particle entanglement can be quantified through the minimization over all possible mode representations of
\begin{equation}\label{eq: min_ent}
    E_{SP}(\ket{\psi}) = \min_{f}\sum_i H(\langle f_i^{\dagger}f_i\rangle,\langle f_if_i^{\dagger}\rangle ),
\end{equation}
where $H(p,1-p)=-p\log p - (1-p)\log (1-p)$ is the Shannon binary entropy, and $f_i$ are the fermionic operators transformed according to the Bogoliubov transformation in Eq.~\eqref{eq:canonical}. For fermionic anyons, the entanglement can be obtained by mapping the anyonic state into a fermionic form, calculating the minimization of Eq.~\eqref{eq: min_ent} and translating the new state back into anyonic form. Since $J_{\varphi}$ is a *-algebra endomorphism, we obtain the following theorems (see the  Supplemental Material~\ref{app: single particle entanglement} for proof).
\begin{theorem}[Single-particle entanglement for fermionic anyons]
   For a fermionic-anyon state with fixed particle number, there exists a mode representation such that its single-particle reduced state has the same eigenvalues as the corresponding fermionic single-particle state, which minimizes Eq.~\eqref{eq: min_ent}.
\end{theorem}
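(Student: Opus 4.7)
The plan is to transfer the fermionic single-particle entanglement analysis to the anyonic setting by exploiting the fact that the fractional Jordan-Wigner transformation $J_\varphi$ is a $*$-algebra endomorphism, preserving operator products and adjoints. Given a pure $N$-anyon state $\ket{\psi} = \sum_{I_N} w_{I_N} a^\dagger_{\varphi,i_1}\cdots a^\dagger_{\varphi,i_N}\ket{\textrm{vac}}$, I associate the companion fermionic state $\ket{\psi_f} := \sum_{I_N} w_{I_N} f^\dagger_{i_1}\cdots f^\dagger_{i_N}\ket{\textrm{vac}}$ by reinterpreting the amplitudes in the fermionic Fock basis. Since the JWT phase strings act as the identity on the vacuum, a short computation on ordered Fock basis states yields the central identity
\begin{equation*}
\brakket{\psi}{J_\varphi^{-1}(O)}{\psi} = \brakket{\psi_f}{O}{\psi_f}
\end{equation*}
for every $O\in\mathcal{F}_m$.

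Next, I formalise the notion of mode representation for fermionic anyons via JWT: for any $m\times m$ unitary $U$, define the transformed operators $c_i := J_\varphi^{-1}\big(\sum_j U_{ij} f_j\big)$ as the $J_\varphi^{-1}$-images of the corresponding fermionic mode rotation. Since $J_\varphi^{-1}$ is linear and product-preserving,
\begin{equation*}
\brakket{\psi}{c^\dagger_i c_j}{\psi} = \sum_{k,l} U^*_{ik} U_{jl}\,\brakket{\psi_f}{f^\dagger_k f_l}{\psi_f},
\end{equation*}
with the right-hand side being the single-particle reduced density matrix of the companion state in the rotated fermionic basis. Consequently, the anyonic single-particle correlation matrix in any JWT-induced mode representation is unitarily similar to $\rho^f_{kl} := \brakket{\psi_f}{f^\dagger_k f_l}{\psi_f}$, and therefore has exactly the same spectrum.

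Finally, I invoke the fermionic single-particle entanglement theorem of Ref.~\cite{gigena2015}, which guarantees a unitary $U^\star$ diagonalising $\rho^f$ with eigenvalues $\lambda_1,\ldots,\lambda_m$ and attaining the minimum of Eq.~\eqref{eq: min_ent} at $\sum_i H(\lambda_i,1-\lambda_i)$. The JWT-induced mode representation associated with $U^\star$ then yields an anyonic single-particle reduced density matrix that is diagonal with the same eigenvalues, establishing the theorem. The main obstacle is verifying the expectation-value identity above: one must track how the JWT phase strings cancel the extra anyonic phases accumulated when bringing creation operators into their ordered positions, exploiting that the number-operator exponentials act trivially on the vacuum. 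Once this identity is secured, the rest is an immediate corollary of $J_\varphi$ being a $*$-algebra homomorphism combined with the existing fermionic minimisation result.
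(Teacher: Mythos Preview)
Your proposal is correct and follows essentially the same route as the paper's own proof: map the anyonic state to its fermionic companion via the fractional Jordan--Wigner transform, invoke the Gigena--Rossignoli minimisation to obtain the diagonalising fermionic mode basis, and pull that basis back to the anyonic side using that $J_\varphi^{-1}$ is a $*$-algebra homomorphism. The only difference is one of scope: the paper argues directly that the \emph{diagonal} expectation values $\langle \tilde a_k^\dagger \tilde a_k\rangle_\varphi=\langle \tilde f_k^\dagger \tilde f_k\rangle_0$ agree in the optimal basis (using that number operators are JWT-invariant), whereas you establish the stronger statement that the full correlation matrix $\langle c_i^\dagger c_j\rangle$ is unitarily similar to the fermionic one in every JWT-induced representation, from which the eigenvalue equality is immediate.
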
}

{Theorem 1 implies that there exists a fermionic-anyon mode representation that reflects the particle separability, even though the von Neumann entropy of the single-particle {reduced state}, obtained through the partial trace on another basis, does not characterize the separability. Such a representation ensures that the reduced state is diagonal and independent of the statistical parameter $\varphi$. Consequently, the entropy of the reduced state accurately reflects the entanglement of a single particle to the $N-1$ fermionic anyons. It is possible to generalize the single particle entanglement for mixed states $\rho = \sum_x p_x \ketbra{\psi_x}{\psi_x}$
\begin{equation}
    E(\rho) = \inf_{\{p_x, \ket{\psi_x}\}} \sum_x p_x E_{SP}(\ket{\psi_x})
\end{equation}
 where the $\ket{\psi_x}$ are given in Eq.~\eqref{eq: N-anyons state}. Therefore, for a one-dimensional system with $N$ fermionic anyons, the entanglement between two particles at modes $i$ and $j$ can be computed by taking the partial trace concerning the $N-2$ modes in the minimal entropic basis. This basis is obtained by applying the JWT to the fermionic space. However, when the two particles are in a pure state, it is possible to derive the analogue of a \textit{Schmidt decomposition} for fermionic anyons (as proved in Suplemental Material~\ref{app: slater decomposition}). This involves mapping the state using JWT and calculating the well-known Schliemann decomposition in the fermionic space \cite{schliemann_quantum_2001}}.  
\begin{theorem}[Schmidt decomposition for fermionic anyons]\label{theorem: SD for anyons}
Any pure state of two fermionic anyons with a fixed number of modes has a Schmidt decomposition with the same expansion coefficients as its Schliemann fermionic state counterpart.
\end{theorem}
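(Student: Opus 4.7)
\emph{Proof proposal.} The plan is to transport the Schliemann decomposition from fermions to fermionic anyons through the fractional Jordan-Wigner map $J_\varphi$. This works because $J_\varphi$ is a $*$-algebra isomorphism that preserves the vacuum and the number operators, so it establishes a canonical correspondence between anyonic and fermionic states at fixed particle number with identical Fock-basis coefficients.

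First, I would verify that for $i<j$ the Jordan-Wigner string phases in $J_\varphi(a^\dagger_{\varphi,i}) J_\varphi(a^\dagger_{\varphi,j})$ act trivially in succession on the vacuum, because at each step the exponentiated sum $e^{i\varphi\sum_{k<\cdot} n_k}$ sees only unoccupied modes. Consequently, $a^\dagger_{\varphi,i}a^\dagger_{\varphi,j}\ket{\textrm{vac}} = f^\dagger_i f^\dagger_j\ket{\textrm{vac}}$, and the anyonic state $\ket{\psi}=\sum_{i<j} w_{ij}\,a^\dagger_{\varphi,i} a^\dagger_{\varphi,j}\ket{\textrm{vac}}$ and its fermionic counterpart $\ket{\psi_F}=\sum_{i<j} w_{ij}\,f^\dagger_i f^\dagger_j\ket{\textrm{vac}}$ share identical coefficients $w_{ij}$ in the respective Fock bases.

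Second, I would invoke the standard Schliemann decomposition of a two-fermion pure state, which writes
\begin{equation*}
\ket{\psi_F}=\sum_{k=1}^{r} z_k\, b^\dagger_{2k-1} b^\dagger_{2k}\ket{\textrm{vac}},
\end{equation*}
where $b^\dagger_l=\sum_j V_{lj}\, f^\dagger_j$ is obtained from a unitary Bogoliubov rotation $V$ that brings the antisymmetric coefficient matrix $w$ into canonical block-diagonal form, and the non-negative $z_k$ are the Schliemann coefficients (the singular values of $w$). Third, I would apply the map $J^{-1}_\varphi$ to each $b^\dagger_l$, defining the rotated anyonic single-particle operators $[b^\dagger_l]_\varphi := J^{-1}_\varphi(b^\dagger_l) = \sum_j V_{lj}\, J^{-1}_\varphi(f^\dagger_j)$. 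Because $J^{-1}_\varphi$ preserves operator products and the vacuum, this immediately yields
\begin{equation*}
\ket{\psi}=\sum_{k=1}^{r} z_k\, [b^\dagger_{2k-1}]_\varphi [b^\dagger_{2k}]_\varphi\ket{\textrm{vac}},
\end{equation*}
with the same coefficients $z_k$, and the rotated operators $[b^\dagger_l]_\varphi$ satisfy the fermionic-anyon algebra of Eq.~\eqref{eq: anyon def} in this new basis.

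The main subtlety will be arguing that the image of the Schliemann decomposition under $J^{-1}_\varphi$ genuinely qualifies as a Schmidt decomposition in the fermionic-anyon sense developed in the Supplemental Material, i.e.\ that the rotated operators $[b^\dagger_l]_\varphi$ define a legitimate single-particle anyonic basis and that the $z_k$ inherit the usual uniqueness up to degeneracies and operational interpretation. Since $J_\varphi$ is an isometric $*$-algebra isomorphism, every algebraic, orthonormality, and inner-product relation characterizing the Schliemann decomposition on the fermionic side transports verbatim to the anyonic side; hence the fermionic-anyon Schmidt decomposition exists and its coefficients coincide numerically with those of its Schliemann fermionic counterpart.
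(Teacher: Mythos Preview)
Your overall strategy---transport the Schliemann decomposition through $J^{-1}_\varphi$---is the same as the paper's, and the first two steps are fine. The gap is in how you resolve the subtlety you yourself flag. Your assertion that ``the rotated operators $[b^\dagger_l]_\varphi$ satisfy the fermionic-anyon algebra of Eq.~\eqref{eq: anyon def}'' is false: precisely \emph{because} $J^{-1}_\varphi$ is a $*$-algebra homomorphism, it transports the algebraic relations of its arguments verbatim, so $[b^\dagger_l]_\varphi=J^{-1}_\varphi(b^\dagger_l)$ obey the \emph{fermionic} CAR inside $\mathcal{A}^\varphi_m$, not the $\varphi$-deformed relations. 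Your closing appeal to $J_\varphi$ being an isometric isomorphism therefore cuts the wrong way---it certifies that $[b^\dagger_l]_\varphi$ are fermionic-like, not anyonic. As written, $\sum_k z_k\,[b^\dagger_{2k-1}]_\varphi[b^\dagger_{2k}]_\varphi\ket{\textrm{vac}}$ is not yet a Schmidt decomposition in terms of anyonic single-particle operators.

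The paper closes this gap with one extra step. It introduces the \emph{induced} Bogoliubov transformation $B^\varphi_{U,0}=J^{-1}_\varphi\circ B_{U,0}\circ J_\varphi$, which is a $*$-endomorphism of $\mathcal{A}^\varphi_m$, and defines $\tilde a^\dagger_{\varphi,l}$ via $B^\varphi_{U,0}(\tilde a^\dagger_{\varphi,l})=a^\dagger_{\varphi,l}$; these \emph{do} satisfy the anyonic relations. Then it computes explicitly
\[
[\tilde f^\dagger_{2k-1}\tilde f^\dagger_{2k}]_\varphi
=\tilde a^\dagger_{\varphi,2k-1}\tilde a^\dagger_{\varphi,2k}\,
e^{-i\varphi\bigl(\bar n_{\varphi,2k-1}+2\sum_{l=1}^{2k-2}\bar n_{\varphi,l}\bigr)},
\]
with $\bar n_{\varphi,l}=\tilde a^\dagger_{\varphi,l}\tilde a_{\varphi,l}$, and observes that the exponential of transformed number operators acts trivially on $\ket{0}_\varphi$ because $B^\varphi_{U,0}$ (being number-preserving) maps annihilation operators to annihilation operators. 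That is the missing bridge between your $[b^\dagger_{2k-1}]_\varphi[b^\dagger_{2k}]_\varphi\ket{\textrm{vac}}$ and a genuine anyonic Schmidt form $\tilde a^\dagger_{\varphi,2k-1}\tilde a^\dagger_{\varphi,2k}\ket{0}_\varphi$.
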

The decomposition is obtained by a {dressed}  unitary transformation $[U_{SD}]_{{\varphi}}$ that maps a fermionic anyon state $\ket{\psi} = \sum_{m<n}w_{m,n}{a}_{m}^{\dagger} a_n^{\dagger}\ket{0}$ , written in a given basis, onto its Schmidt decomposition $\ket{\psi\pr}$ as 
\begin{equation}\label{eq. dressed schimidt decomposition operation}
   \ket{\psi\pr}=[U_{SD}]_{{\varphi}} \ket{\psi} =  \sum_{\mu}{\omega}_{\mu}\alpha^{\text{\tiny{(1)}}\dagger}_{2\mu}\alpha^{\text{\tiny{(2)}}\dagger}_{2\mu-1}\ket{\textrm{vac}},
\end{equation}
{where $[U_{SD}]_{\varphi}=J^{-1}_{\varphi}[U_{SD}]_{0}$, and $[U_{SD}]_{0}$ is a single-particle fermionic unitary operator that maps the fermionic state $J_{\varphi}(\ket{\psi})$, written in a given basis, in its Schliemann decomposition with coefficients given by $\omega_\mu$.} 

\emph{Fermionic linear optics and fermionic anyons.} To showcase what insights can be drawn from an entanglement theory for fermionic anyons, we apply the formal framework we proposed to particle-based quantum computing. {Specifically, we define a family of computational models based on two-mode ``linear-optical elements'', which reduce to well-known fermionic linear optics when $\varphi=0$, and show how our notion of separability closely tracks the regime of classical simulability of these models.}

Let $PS_{i}$, $BS_{i,j}$ and $PA_{i,j}$ be of the form 
\begin{align*}
PS_{i}(\theta)&=\exp[i\theta(a^{\dagger}_{\varphi,i}a_{\varphi,i})]\\
BS_{i,j}(\theta)&=\exp[i\theta(a^{\dagger}_{\varphi,i}a_{\varphi,j}+a^{\dagger}_{\varphi,j}a_{\varphi,i})]\\
PA_{i,j}(\theta)&=\exp[i\theta(a^{\dagger}_{\varphi,i}a^{\dagger}_{\varphi,j}+a_{\varphi,j}a_{\varphi,i})].
\end{align*}
We refer to these unitaries as \textit{Gaussian optical elements} or, by analogy with linear optics, phase shifters ($PS_{i}$), beam splitters ($BS_{i,j}$), and parametric amplifiers ($PA_{i,j}$). A product of Gaussian optical elements is called an optical circuit. When $\varphi=0$, this set of transformations acting on Fock states and followed by single-mode number detectors defines a computational model called \textit{fermionic linear optics} (FLO). When $\varphi=\pi$, they are called \textit{matchgates} \cite{jozsa_matchgates_2008}, which we refer to here as \textit{qubit linear optics} (QLO). For any other value of $\varphi$, we refer to quantum computing with optical circuits by \textit{fermionic-anyon linear optics} ($\Phi$LO). What operations are analogous to matchgates for fermionic anyons?

Let us use $J_{\varphi}$ to translate known FLO results into results for $\Phi$LO and QLO. First, we look at how fermionic optical elements transform under $J^{-1}_{\varphi}$. We are interested in invariant operations under $J^{-1}_{\varphi}$, i.e., that have the same operator decomposition in all particle systems. Since phase shifters are generated by Hamiltonians proportional to $f^{\dagger}_{i}f_{i}$, they must be invariant under the action of $J^{-1}_{\varphi}$---as must, in fact, be any operator whose fermionic form contains only products of number operators. 

Fermionic beam splitters are generated by Hamiltonians proportional to $f^{\dagger}_{i}f_{j}+f^{\dagger}_{j}f_{i}$. Those are transformed by $J^{-1}_{\varphi}$ into $a^{\dagger}_{\varphi,i}e^{i\varphi\sum^{j-1}_{k=i+1}n_{k}}a_{\varphi,j}+a^{\dagger}_{\varphi,j}e^{-i\varphi\sum^{j-1}_{k=i+1}n_{k}}a_{\varphi,i}.$
This form implies that $J^{-1}_{\varphi}$ leaves only nearest-neighbour beam-splitters invariant. It is known that all single-particle fermionic operators can be decomposed as products of phase shifters and nearest-neighbour beam splitters by using the $f\textsc{SWAP}$ gate, given by
\begin{equation}
f\textsc{SWAP}_{i,i+1}=\exp\left[i\frac{\pi}{2}(f^{\dagger}_{i}-f^{\dagger}_{i+1})(f_{i}-f_{i+1})\right],
\end{equation}
which is itself expressible as a product of nearest-neighbour beam splitters and phase shifters \cite{jozsa_matchgate_2010}. Therefore we conclude that, even if fermionic-anyon {single-particle operators} are complicated, they can always be decomposed in fermionic-anyon nearest-neighbour optical elements.

Fermionic parametric amplifiers are generated by Hamiltonians proportional to $f^{\dagger}_{i}f^{\dagger}_{j}+f_{j}f_{i}$. Their $J^{-1}_{\varphi}$ transforms are given by $e^{-i2\varphi\sum^{i-1}_{k=1}n_{k}}a^{\dagger}_{i}e^{-i\varphi\sum^{j-1}_{k=i+1}n_{k}}a^{\dagger}_{j}+e^{i2\varphi\sum^{i-1}_{k=1}n_{k}}a_{i}e^{i\varphi\sum^{j-1}_{k=i+1}n_{k}}a_{j}.$
The only case where a parametric amplifier is invariant under $J^{-1}_{\varphi}$ is when $\{i,j\}=\{1,2\}$. Nevertheless, we also show in the Supplemental Material \ref{app: fermionic anyons LO} that an arbitrary fermionic $PA_{i,j}$ can be decomposed in terms of $PA_{1,2}$ and $f\textsc{SWAP}$, implying a similar decomposition for their anyonic counterparts.

In \cite{divincenzo_fermionic_2005}, it was shown that FLO circuits are easy to simulate classically in the sense that if $[U]_{0}$ is an FLO circuit, there is a polynomial-time classical algorithm that computes the matrix elements of $[U]_{0}$ in the Fock basis. Now, since the Fock-basis elements of $[U]_{\varphi}$ are, by construction, the same as those of $[U]_{0}$, the same algorithm can efficiently compute the matrix elements of $[U]_{\varphi}$ for states in the fermionic-anyon Fock space. Therefore, any $\Phi$LO circuit composed only of $PA_{1,2}$ and nearest-neighbour beam splitters must be easy to simulate classically in the same sense. For the special case of QLO, this recovers {well-known} simulability results for circuits of nearest-neighbour matchgates \cite{knill_fermionic_2001}.

Given that all FLO circuits are easy to simulate classically represented either in fermionic or anyonic form, we might ask if all $\Phi$LO circuits are also easy to simulate (mode details are presented in Supplemental Material \ref{app: classically simulable}). The answer, however, is no, for the following reasons. A fermionic-anyon beam splitter is generated by a Hamiltonian proportional to $a_{\varphi,i}^{\dagger}a_{\varphi,j}+a_{\varphi,j}^{\dagger}a_{\varphi,i}$. Under $J_{\varphi}$, this gets transformed into $f^{\dagger}_{i}e^{-i\varphi\sum^{j-1}_{k=i+1}n_{k}}f_{j}+f^{\dagger}_{j}e^{i\varphi\sum^{j-1}_{k=i+1}n_{k}}f_{i},$
which only generates a fermionic Bogoliubov transformation if $j=i+1$ and, therefore, is not generally an FLO circuit. It was shown in \cite{tosta_quantum_2019} that non-nearest neighbour beam splitters allow for universal quantum computation with fermionic anyons for all $\varphi \neq 0$, {which also reduces to a known result for matchgates when $\varphi = \pi$} \cite{jozsa_matchgates_2008}. {Furthermore, it is known that almost all fermionic non-Gaussian operations (i.e., gates outside of FLO) can extend it to universality \cite{michal17}, from which follows the analogous statement for $\Phi$LO for any $\varphi \neq 0$.}

To summarize, the set of FLO circuits is strictly smaller than the set of $\Phi$LO (or QLO) circuits. Furthermore, the $J^{-1}_{\varphi}$ map sends FLO circuits into a small subset of $\Phi$LO circuits, which is particularly easy to simulate classically (when acting on the Fock basis). What about the computational power of these models with input states not on the Fock basis?

In \cite{hebenstreit_all_2019}, the authors show a magic-state injection protocol that uses only nearest-neighbour QLO operations to perform universal quantum computation. Furthermore, they also show that \emph{any} fermionic non-Gaussian state is a magic state for the same protocol. Since the transformations themselves are non-universal, we can identify a computational \emph{resource}, necessary for a quantum speedup, in the magic states and identify the set of fermionic Gaussian states as resource-free. This dichotomy matches that defined by the notion of separability: (pure) Gaussian fermionic states are also the free states if one views entanglement as the resource, as done in \cite{gigena_one-body_2020} based on a definition of one-body entanglement entropy.

{The proposed methods allow us to repurpose these previous results and draw similar conclusions for fermionic anyons. By writing the magic state injection protocol in terms of $J_{\varphi}^{-1}$ invariant optical elements, and subsequently applying $J^{-1}_{\varphi}\circ J_{\pi}$ to the corresponding circuit, the same injection protocol can use fermionic-anyon magic states to induce a non-FLO operation. Following the approach developed in Ref.\cite{gigena_one-body_2020}, our results imply that the notions of free states for both types of resources (computational power and entanglement) match for fermionic anyons as they do for fermions.}

{Our formalism can also be used to understand previous results about {matchgate circuits (i.e., QLO)}. References \cite{jozsa_matchgates_2008, brod_extending_2011,brod_geometries_2012}, for example, consider supplementing circuits of nearest-neighbour matchgates with other resources. The authors use a dual-rail encoding, where we can encode a logical 0 (resp.\ 1) qubit state as the $\ket{01}$ (resp.\ $\ket{10}$) state of two physical qubits. In that case, ref.\  \cite{jozsa_matchgate_2010} shows that the $f\textsc{swap}$ gate cannot be used to generate entanglement between the two logical qubits, whereas the \textsc{swap} gate can---a curious role reversal, given that the $f\textsc{swap}$ is a maximally entangling two-qubit gate and the \textsc{swap} is not entangling. Our formalism {provides an alternative interpretation that} resolves this conundrum neatly: there is a notion of entanglement between logical qubits in a matchgate circuit, {corresponding to the one we proposed}, when one views the state of a physical qubit as the occupation number of a fermionic-anyon mode (at $\varphi=\pi$). {This notion of entanglement would naturally differ from the standard definition of entanglement between the physical qubits, but would be more relevant to the computational complexity of matchgate circuits with non-Gaussian elements - for instance, it is an interesting question for future work whether this alternative notion of entanglement translates into a quantitative measure of the complexity of classical simulation of matchgate circuits.}}

%\clearpage
\emph{Conclusions.}
In summary, we have introduced a resource-theoretic framework for investigating the separability of fermionic anyons and their connection to quantum computing. We characterized the entanglement of   fermionic anyons, and showed that the concept of fermionic-anyon separability can be mapped to the free resources of matchgate circuits. Our framework was applied to particle-based quantum computing, revealing that fermionic-anyon linear-optical circuits can be expressed using nearest-neighbour beam splitters, phase shifters, and mode swaps. Additionally, we showed that universal quantum computation with fermionic anyons can be achieved by introducing non-separable states, similar to the magic-state injection protocol presented in Ref.~\cite{hebenstreit_all_2019}. 

Finally, we translated the statistical parameter $\varphi$ of the fermionic-anyon commutation relation into two well-established forms of universal quantum computation: fermionic and qubit-based. When $\varphi=0$, universal anyonic quantum computation reduces to fermionic linear optics; similarly, qubit linear optics can be obtained by interpreting a physical qubit as the occupation of a fermionic-anyon mode at $\varphi=\pi$. This approach creates a matchgate scheme where magic states are entangled as per our definition rather than the traditional notion for qubits. These notions are not equivalent, and our definition is instead a type of particle entanglement when one interprets qubits as occupation numbers of exotic particles \cite{WuLydar2002}. Nonetheless, we consider it to have already helped to reinterpret the results of \cite{jozsa_matchgates_2008} in a clearer manner. We leave it as a direction for future research to investigate further consequences of viewing qubit circuits via the lens of our definition of fermionic-anyon entanglement, {as well as the possible resourceful limitations and costs of such a quantum computational model. Naturally, one could ask: what quantum computational models exist for intermediate parameters $\varphi\in (0,\pi)$? We leave this question unanswered and also propose a potential generalization model for two-dimensional fermionic anyons, inviting further investigation.}

{To achieve a comprehensive quantum computation framework, we need to establish a measurement method specific to fermionic anyons. One can explore the measurement disturbance model presented in \cite{debarba2017} and adapt it for fermionic anyons employing JWT. Considering the techniques outlined in \cite{gigena_one-body_2020}, one could fully describe the theoretical resources available for fermionic anyons computation.}

\emph{Acknowledgements.}
  The authors acknowledge support from the Brazilian agency CNPq INCT-IQ through the project (465469/2014-0). AT also acknowledges support from the Serrapilheira Institute (grant number Serra-1709-17173). ACL also acknowledges support from FAPESC. DJB also acknowledges support from CNPq and FAPERJ (Grant No.\ $202.782/2019$). FI also acknowledges support from  CNPq (Grant No. $308637/2022-4$), FAPERJ (Grant No. E-$26/211.318/2019$ and E-$26/201.365/2022$) and Alexander von Humboldt Foundation. TD also acknowledges support from CNPq (Grant No. $441774/2023-7$ and $200013/2024-6$), European Research Council (Consolidator grant ‘Cocoquest’ 101043705) and ÖAW-JESH-Programme. 
% %%%%%%%%%%%%%%%%%%%%%%%%%%%%%%%%%%%%%%%%%%%%%%%%%%%%%%%%%%%%%%%
% %%%%%%%%%%%%%%%%%%%%%%%%%%%%%%%%%%%%%%%%%%%%%    
\let\oldaddcontentsline\addcontentsline% Store \addcontentsline
\renewcommand{\addcontentsline}[3]{}% Make \addcontentsline a no-op

\let\addcontentsline\oldaddcontentsline% Restore \addcontentsline

%%%%%%%%%%%%%%%%%%%%%%%%%%%%%%%%%%%%%%%%

%%%%%%%%%%%%%%%%%%%%%%%%%%%%%%%%%%%%%%%%

\newpage
\onecolumngrid
%\appendix
\newpage
%\starttocentries
%\vspace*{-5mm}
\pdfbookmark[0]{Supplemental Material}{Supplemental Material}
\vspace*{5mm}
\label{supplementalmaterial}
\phantomsection
\begin{center}
\begin{LARGE}
Supplemental Material
\end{LARGE}
\end{center}

{In this Supplemental Material we give further details of the analytical results which have been omitted from the main text. Specifically, we provide an expanded and mathematically rigorous discussion of the fermionic-anyon algebra, their Jordan-Wigner transformation and subsequent connections to fermionic linear optics and classical simulability.  We moreover present an illustrative example of such ideas based in a beam splitter circuit with four modes composed of two fermionic anyons. 
}

\vspace*{-5mm}

\renewcommand{\thesection}{\Alph{section}}
\renewcommand{\thesubsection}{\Alph{section}\arabic{subsection}}
\renewcommand{\thesubsubsection}{\Alph{section}\arabic{subsection}\alph{subsubsection}}
\makeatletter
\renewcommand{\p@subsection}{}
\renewcommand{\p@subsubsection}{}
\makeatother

% %%%%%

\tableofcontents

\section{Fermionic-anyon algebras and the Jordan-Wigner map}\label{app: anyon_description}

In this section, we formally describe the operator algebras of fermionic anyons as *-algebras and prove that they admit a particle interpretation. The formal *-algebra structure is what allows us to give a full characterisation of the statistical transmutation maps, e.g., Jordan-Wigner maps of standard and fractional varieties, as *-algebra homomorphisms. We finish by proving that statistical transmutation preserves the amplitudes of multiparticle states in the Fock-basis.

\subsection{Fermionic-anyon algebras as *-algebras}

Here we define what are *-algebras over the complex numbers, show a particular method of constructing them, and prove that the observable algebra of all fermionic anyons are examples of *-algebras under this construction.

\begin{defi}\label{def: *-algebra_def}
Let $V$ be a vector space over $\mathbb{C}$. We call $V$ an \emph{algebra} if it is equipped with bilinear product $(\cdot):V \times V\rightarrow V$ between vectors and a special vector $e$ that satisfies $e.v=v.e=v$ for any $v\in V$. We call $e$ the unit of the algebra. A \textit{*-algebra}, is an algebra with a unitary operation $\dagger:V\rightarrow V$, called the conjugate, that satisfies;
\begin{itemize}
     \item[1.] for all $v,u\in V$, $(v+u)^{\dagger}=v^{\dagger}+u^{\dagger}$,
    \item[2.] for all $v,u\in V$, $(vu)^{\dagger}=u^{\dagger}v^{\dagger}$,
    \item[3.] $e^{\dagger}=e$, and
    \item[4.] for any $v\in V$, $(v^{\dagger})^{\dagger}=v$.
\end{itemize}
\end{defi}

Essentially, the *-algebra is a generalisation of the concept of an algebra of observables, where physical observables are represented by self-conjugated elements. The *-algebras we consider here are those where all elements can be written as linear combinations of products of a small set of ``primitive'' elements. This idea is captured by the definition of a free *-algebra.

\begin{defi} \label{def: free_*_algebra}
The \emph{free *-algebra} $F(S)$ generated by the set of formal symbols $S=\{x_{i}|i\in\Lambda\}$, where $\Lambda$ is some index set, is the *-algebra whose elements are linear combinations of all possible strings over the symbols in $S$ and their formal conjugates, where the product is string concatenation. The elements of $S$ are called the generators of $F(S)$. When $\Lambda$ is a finite set, we say that $F(S)$ is \textit{finitely-generated} and the number of generators is called its rank.
\end{defi}

We can use the free *-algebra to define other *-algebras by imposing relations between its elements. This allows us to characterise the algebras of particle operators as special kinds of *-algebras.

\begin{defi}\label{def: ferm_any_algebra_def}
The \emph{fermionic-anyon algebra} over $m$ modes with statistical parameter $\varphi$, denoted by $\mathcal{A}^{\varphi}_{m}$, is the quotient of the free *-algebra generated by the set $\{a_{\varphi,i}|i\in\{1,...,m\}\}$ by the relations
\begin{align}
&a_{\varphi,i}a_{\varphi,j}^{\dagger}+e^{-i\varphi\epsilon_{ij}}a_{\varphi,j}^{\dagger}a_{\varphi,i}=\delta_{ij} \\
&a_{\varphi,i}a_{\varphi,j}+e^{i\varphi\epsilon_{ij}}a_{\varphi,j}a_{\varphi,i}=0, 
\end{align}
with $\epsilon_{ij}$ given by
\begin{equation*}
    \epsilon_{ij}=\left\{\begin{matrix}
1, & \textnormal{if} \ \ i<j\\ 
0, & \textnormal{if} \ \ i=j\\ 
-1, & \textnormal{if} \ \ i>j.
\end{matrix}\right.
\end{equation*}
\end{defi}

When $\varphi=0$, the above relations reduce to canonical anticommutation relations, which implies that the special case $\mathcal{A}^{0}_{m}$ is the *-algebra of fermionic operators, called $\mathcal{F}_{m}$, and that we can define $a_{0,i}:=f_{i}$. When $\varphi=\pi$, those same relations reduce to the commutation relations between raising and lowering operators for spin-1/2 systems, implying that $\mathcal{A}^{\pi}_{m}$ is the observable algebra of a system of $m$ qubits, which we refer to as $\mathcal{Q}_{m}$. Finally, note that $\mathcal{A}^{\varphi}_{m}$ is the exact same *-algebra as $\mathcal{A}^{\varphi+2\pi n}_{m}$ for all $n\in\mathbb{Z}$.

\subsection{*-Algebras and their particle interpretation}

Up to this point, we referred to the *-algebras $\mathcal{A}^{\varphi}_{m}$ as ``fermionic-anyon algebras'', but we only gave a physical interpretation to the special cases $\varphi=0,\pi$. Now we prove that the $\mathcal{A}^{\varphi}_{m}$ have a physical interpretation as observable algebras of particle systems, and are not just abstract constructions. This interpretation comes from the existence of a Fock-state representation.

\begin{defi}\label{def: exist_Fock_rep}
Let $V$ be a *-algebra of rank $m$ over $\mathbb{C}$. $V$ is said to possess a Fock representation if there is a set of  $m$ self-conjugate elements $n_{i}\in V$ and a set of $m$ generators $l_{i}\in V$ such that for all $i,j=1,...,m$
\begin{equation}
[n_{i},n_{j}]=0\text{, and }[n_{i},l_{j}]=-\delta_{i,j}l_{j}
\end{equation}
\end{defi}

If a *-algebra satisfies this definition, we can call the $n_{i}$ operators its \emph{number operators}, and the generators $l_{i}$ its \emph{annihilation operators}, while the *-algebra axioms guarantee that the $l^{\dagger}_{i}$ elements exist and behave as creation operators. This allows us to construct the associated Fock space.

\begin{defi}\label{def: cons_Fock_rep}
Let $V$ be a *-algebra of rank $m$ that possesses a Fock representation. The Fock space for $V$ is the vector space $\mathcal{H}_{V}$ with basis set given by all non-zero vectors of the form
\begin{equation}
l^{\dagger \alpha_{1}}_{1}\cdots l^{\dagger \alpha_{m}}_{m}\ket{0}_{V},
\end{equation}
where each $\alpha_{i}$ is a natural number between $0$ and $m_{i}$, $m_{i}$ is the highest natural number such that $l^{m}_{i}\neq0$, and $\ket{0}_{V}$ is the only state in $\mathcal{H}_{V}$ satisfying
\begin{equation}
l_{i}\ket{0}_{V}=0,
\end{equation}
for all $i=1,...,m$.
\end{defi}

With the definition and method for constructing a Fock representation, we now prove the main claim of this section.

\begin{prop}\label{prop: ferm_any_Fock_rep}
The algebra $\mathcal{A}^{\varphi}_{m}$ has a Fock space representation.
\end{prop}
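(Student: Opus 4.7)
The plan is to exhibit an explicit choice of number operators and annihilation operators inside $\mathcal{A}^{\varphi}_{m}$ and verify that they satisfy the two defining relations of Definition~\ref{def: exist_Fock_rep}. The natural candidates, by analogy with the fermionic and hardcore-bosonic endpoints, are
\begin{equation}
n_{i} := a_{\varphi,i}^{\dagger} a_{\varphi,i}, \qquad l_{i} := a_{\varphi,i},
\end{equation}
for $i=1,\dots,m$. Each $n_{i}$ is manifestly self-conjugate by the *-algebra axioms of Definition~\ref{def: *-algebra_def}, and the $l_{i}$ are generators of $\mathcal{A}^{\varphi}_{m}$ by construction.

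The first step is to extract two immediate consequences of the defining relations in Definition~\ref{def: ferm_any_algebra_def}. Setting $i=j$ in the second relation gives $2 a_{\varphi,i}^{2}=0$, i.e.\ $a_{\varphi,i}^{2}=0$, and setting $i=j$ in the first yields $a_{\varphi,i} a_{\varphi,i}^{\dagger} = 1 - a_{\varphi,i}^{\dagger} a_{\varphi,i}$. For $i\neq j$, the relations rearrange into
\begin{equation}
a_{\varphi,j} a_{\varphi,i} = - e^{-i\varphi\epsilon_{ij}} a_{\varphi,i} a_{\varphi,j}, \qquad a_{\varphi,j} a_{\varphi,i}^{\dagger} = -e^{i\varphi\epsilon_{ij}} a_{\varphi,i}^{\dagger} a_{\varphi,j},
\end{equation}
where the second uses $\epsilon_{ji}=-\epsilon_{ij}$.

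The second step is to verify the commutator $[n_{i}, l_{j}] = -\delta_{ij} l_{j}$. For $i=j$, a direct calculation using $a_{\varphi,i}^{2}=0$ and $a_{\varphi,i} a_{\varphi,i}^{\dagger}=1-n_{i}$ gives $[n_{i}, a_{\varphi,i}] = -(1-n_{i}) a_{\varphi,i} = -a_{\varphi,i}$, since $n_{i} a_{\varphi,i} = a_{\varphi,i}^{\dagger} a_{\varphi,i}^{2}=0$. For $i\neq j$, substituting the two rearranged relations yields
\begin{equation}
n_{i} a_{\varphi,j} = -e^{i\varphi\epsilon_{ij}} a_{\varphi,i}^{\dagger} a_{\varphi,j} a_{\varphi,i}, \qquad a_{\varphi,j} n_{i} = -e^{i\varphi\epsilon_{ij}} a_{\varphi,i}^{\dagger} a_{\varphi,j} a_{\varphi,i},
\end{equation}
so they cancel in the commutator.

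The final step is $[n_{i}, n_{j}]=0$. This is trivial for $i=j$; for $i\neq j$ one applies the conjugate of the previous step, $[n_{i}, a_{\varphi,j}^{\dagger}]=0$, together with $[n_{i}, a_{\varphi,j}]=0$, to conclude that $n_{i}$ commutes with the product $a_{\varphi,j}^{\dagger} a_{\varphi,j} = n_{j}$. Having matched both axioms of Definition~\ref{def: exist_Fock_rep}, the Fock space $\mathcal{H}_{\mathcal{A}^{\varphi}_{m}}$ is then constructed via Definition~\ref{def: cons_Fock_rep}, with $m_{i}=1$ since $a_{\varphi,i}^{2}=0$. I expect no genuine obstacle here: the entire content is algebraic bookkeeping with the two $\varphi$-deformed relations, and the only subtlety is keeping careful track of the signs of $\epsilon_{ij}$ when commuting operators past one another.
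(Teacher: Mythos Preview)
Your proof is correct and follows essentially the same approach as the paper: define $n_{i}=a_{\varphi,i}^{\dagger}a_{\varphi,i}$, verify $[n_{i},a_{\varphi,j}]=-\delta_{ij}a_{\varphi,j}$ directly from the deformed commutation relations, and deduce $[n_{i},n_{j}]=0$ from this. The paper carries out the commutator calculation in a single chain valid for all $i,j$ simultaneously rather than splitting into the $i=j$ and $i\neq j$ cases, but the algebraic content is identical; your version is in fact slightly more explicit about the self-conjugacy of $n_{i}$ and the identification $m_{i}=1$ in the Fock-space construction.
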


\begin{proof}
Consider the operator $n_{\varphi,i}=a^{\dagger}_{\varphi,i}a_{\varphi,i}$. From the commutation relations for fermionic anyons, we can see that
\begin{align*}
[n_{\varphi,i},a_{\varphi,j}]&=a^{\dagger}_{\varphi,i}a_{\varphi,i}a_{\varphi,j}-a_{\varphi,j}a^{\dagger}_{\varphi,i}a_{\varphi,i}\\&=-e^{i\varphi\epsilon_{i,j}}a^{\dagger}_{\varphi,i}a_{\varphi,j}a_{\varphi,i}-a_{\varphi,j}a^{\dagger}_{\varphi,i}a_{\varphi,i}\\
&=-\delta_{i,j}a_{\varphi,i}+a_{\varphi,j}a^{\dagger}_{\varphi,i}a_{\varphi,i}-a_{\varphi,j}a^{\dagger}_{\varphi,i}a_{\varphi,i}\\&=-\delta_{i,j}a_{\varphi,j}.
\end{align*}
Given the form of $n_{\varphi,i}$, the last equation implies that $n_{\varphi,i}n_{\varphi,j}=n_{\varphi,j}n_{\varphi,i}$. Therefore, the Fock-space $\mathcal{H}_{A^{\varphi}_{m}}$, which we call $\mathcal{H}_{\varphi}$ exists, and has a basis
\begin{equation}
a^{\dagger \alpha_{1}}_{\varphi,1}\cdots a^{\dagger \alpha_{m}}_{\varphi,m}\ket{0}_{\varphi},
\end{equation}
where $\ket{0}_{\varphi}$ is the vacuum state, and in this case $\alpha_{i}\in\{0,1\}$.
\end{proof}

\section{Jordan-Wigner transformation and *-algebra homomorphisms}\label{app: JW transformation}

Having a solid grasp on the properties of fermionic-anyon algebras as *-algebras, we now discuss what *-algebra homomorphisms are, how they are constructed, and why the Jordan-Wigner transform and their generalisations should be viewed as *-algebra homomorphisms.

\begin{defi}\label{def: *-alg_homo}
Let $V$ and $U$ be two *-algebras. A map $g:V\rightarrow U$ is a \textit{*-algebra homomorphism} if and only if
\begin{itemize}
    \item[1.] for all $v,u\in V$, $g(u+v)=g(u)+g(v)$,
    \item[2.] for all $v\in V$ and $c\in\mathbb{C}$, $g(cv)=cg(v)$,
    \item[3.] for all $v,u\in V$, $g(uv)=g(u)g(v)$,
    \item[4.] for all $v\in V$, $g(v^{\dagger})=g(v)^{\dagger}$,
    \item[5.] and $g(e)=e$,
\end{itemize}
where, for each equation, the operations on the left take place in $V$, and the operations on the right take place in $U$.
\end{defi}

As such, a *-algebra homomorphism is any map that preserves the *-algebra's conjugation, multiplication and addition. If our *-algebras are finitely generated, we can actually describe any *-algebra homomorphisms by specifying their action over its generators.

\begin{prop}\label{prop: img_gen_homo}
Let $V,U$ be *-algebras of rank $m$, and let $g:V\rightarrow U$ be an algebra homomorphism. Then, the image of any $\alpha\in V$ is a function of the images of $\{l_{i}|i=,1...,m\}\in V$, where $l_{i}$ are generators of $V$.
\end{prop}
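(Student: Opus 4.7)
The plan is to exploit the three algebraic properties of a *-algebra homomorphism (linearity, multiplicativity, and compatibility with conjugation) to reduce the computation of $g(\alpha)$ for an arbitrary $\alpha\in V$ to the computation of $g$ on the finite generating set $\{l_i\}$. The central observation is that, since $V$ is finitely generated as a *-algebra of rank $m$, every element of $V$ is (by definition of being generated by the $l_i$) expressible as a finite $\mathbb{C}$-linear combination of words built from the symbols $l_i$ and $l_i^{\dagger}$, possibly modulo the relations that define $V$ as a quotient of a free *-algebra.

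First, I would fix such a representation: write
\begin{equation*}
\alpha = \sum_{k} c_k \, w_k,
\end{equation*}
where each $c_k \in \mathbb{C}$ and each $w_k$ is a finite product of the form $w_k = y_{k,1} y_{k,2} \cdots y_{k,n_k}$ with every $y_{k,s} \in \{l_1,\dots,l_m,l_1^{\dagger},\dots,l_m^{\dagger}\}$. Applying properties~1 and~2 of Definition~\ref{def: *-alg_homo}, I obtain
\begin{equation*}
g(\alpha) = \sum_{k} c_k \, g(w_k),
\end{equation*}
so the problem reduces to evaluating $g$ on each monomial $w_k$. Next, applying property~3 iteratively,
\begin{equation*}
g(w_k) = g(y_{k,1})\, g(y_{k,2}) \cdots g(y_{k,n_k}),
\end{equation*}
so the problem reduces further to evaluating $g$ on each individual symbol $y_{k,s}$. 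Finally, using property~4, whenever $y_{k,s} = l_i^{\dagger}$ we have $g(y_{k,s}) = g(l_i)^{\dagger}$, so every factor is determined by the image $g(l_i)$ of some generator.

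Combining these reductions yields $g(\alpha)$ as an explicit function of $\{g(l_1),\dots,g(l_m)\}$, proving the claim. The one subtlety worth flagging, rather than a genuine obstacle, is well-definedness: different expressions for the same $\alpha \in V$ (owing to the quotient relations defining $V$) must lead to the same image under $g$. This is automatic once one knows $g$ is a *-algebra homomorphism in the first place, because then $g$ descends to the quotient by construction; still, it is worth a short remark in the final write-up so that the procedure of \emph{computing} $g(\alpha)$ from its values on generators does not seem to depend on a choice of representation.
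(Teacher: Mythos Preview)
Your proposal is correct and follows essentially the same approach as the paper: both expand an arbitrary element as a $\mathbb{C}$-linear combination of monomials in the generators and their conjugates, then push the homomorphism through using linearity, multiplicativity, and compatibility with $\dagger$. The only cosmetic difference is that the paper writes its monomials in a fixed normal-ordered form $l_1^{\dagger i_1}\cdots l_m^{\dagger i_m}l_1^{i_{m+1}}\cdots l_m^{i_{2m}}$, whereas you allow arbitrary words in the $l_i$ and $l_i^\dagger$; your version is arguably cleaner, and your remark on well-definedness under the quotient relations is a welcome addition that the paper omits.
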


\begin{proof}
If $\{l_{i}|i=,1...,m\}$ is a set of generators of $V$, then any element $\alpha\in V$ has the form
\begin{equation}
\alpha=\sum_{i_{1},...,i_{2m}\in I}c_{i_{1},...,i_{2m}}l^{\dagger i_{1}}_{1}\cdots{l}^{\dagger i_{m}}_{m}l^{i_{m+1}}_{1}\cdots{l}^{i_{2m}}_{m},
\end{equation}
where $c_{i_{1},...,i_{2m}}\in\mathbb{C}$, and where $I=\{(0,...,0),...,(\mu_{1},...,\mu_{2m})\}$ with $\mu_{i}$ being the largest natural number such that $l^{\mu_{i}}_{i}\neq0$. Then, by the definition of *-algebra homomorphisms, we have that
\begin{align*}
g(\alpha)&=\sum_{i_{1},...,i_{2m}\in I}g(c_{i_{1},...,i_{2m}}l^{\dagger i_{1}}_{1}\cdots{l}^{\dagger i_{m}}_{m}l^{i_{m+1}}_{1}\cdots{l}^{i_{2m}}_{m})\\
&=\sum_{i_{1},...,i_{2m}\in I}c_{i_{1},...,i_{2m}}g(l^{\dagger i_{1}}_{1}\cdots{l}^{\dagger i_{m}}_{m}l^{i_{m+1}}_{1}\cdots{l}^{i_{2m}}_{m})=\\
&=\sum_{i_{1},...,i_{2m}\in I}c_{i_{1},...,i_{2m}}g(l^{\dagger i_{1}}_{1})\cdots g({l}^{\dagger i_{m}}_{m})g(l^{i_{m+1}}_{1})\cdots g({l}^{i_{2m}}_{m})\\
&=\sum_{i_{1},...,i_{2m}\in I}c_{i_{1},...,i_{2m}}g(l_{1})^{\dagger i_{1}}\cdots g(l_{m})^{\dagger i_{m}}g(l_{1})^{i_{m+1}}\cdots g(l_{m})^{i_{2m}},
\end{align*}
as we wished to show.
\end{proof}

In fact, we can also show that *-algebra homomorphisms preserve all algebraic equations relating elements inside a *-algebra.

\begin{coro}\label{cor: val_alg_rel}
Let $V$ be a *-algebra and let $\alpha_{1},...,\alpha_{\nu}\in V$ be a finite-sized set of elements of $V$, such that
\begin{equation}
f(\alpha_{1},...,\alpha_{\nu})=0
\end{equation}
with $f$ being any formal power series. Then, if $U$ is a *-algebra and $g:V\rightarrow U$ a *-algebra homomorphism we have that
\begin{equation}
f(g(\alpha_{1}),...,g(\alpha_{\nu}))=0.
\end{equation}
\end{coro}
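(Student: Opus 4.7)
The plan is to reduce the corollary to Proposition~2 by showing that the homomorphism $g$ commutes with the formation of any formal expression, i.e., that $g\bigl(f(\alpha_{1},\ldots,\alpha_{\nu})\bigr) = f\bigl(g(\alpha_{1}),\ldots,g(\alpha_{\nu})\bigr)$. Once this is established the corollary is immediate: since $f(\alpha_{1},\ldots,\alpha_{\nu})=0$ by hypothesis, and $g(0)=0$ (a direct consequence of axiom~1 in Definition~3, since $g(0)=g(0+0)=g(0)+g(0)$), the right-hand side must vanish in $U$.

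First I would unpack what ``formal power series'' means operationally here. Any such expression is, by definition, a (possibly infinite) sum of monomials of the form $c_{\mathbf{i}}\,\beta_{1}^{(i_1)}\cdots \beta_{k}^{(i_k)}$, where each $\beta_j^{(i_j)}$ is one of the $\alpha_r$ or an $\alpha_r^{\dagger}$, and $c_{\mathbf{i}}\in\mathbb{C}$. Restricting first to the polynomial case (finite sums), I would proceed by induction on the length of the expression, with the base case being a single monomial. For a single monomial the five homomorphism axioms of Definition~3 are applied in sequence: axiom~3 distributes $g$ across each multiplicative factor, axiom~4 pushes $g$ through each $\dagger$, axiom~2 extracts the scalar $c_{\mathbf{i}}$, and axiom~5 handles any unit factor. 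The inductive step then uses axiom~1 to split a sum of monomials. This is essentially the same manipulation already performed in the proof of Proposition~2, but used here in reverse: rather than describing $g(\alpha)$ in terms of $g(l_i)$, we describe $g$ of a fixed polynomial expression $f$ in the $\alpha_r$ as the same expression $f$ evaluated on the images $g(\alpha_r)$.

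The main obstacle is the extension from polynomials to genuine formal power series with infinitely many nonzero terms. Strictly speaking, an abstract *-algebra carries no topology in which to discuss convergence of infinite series, so the statement as written requires either (i) interpreting ``formal power series'' to mean polynomials together with elementary functions (exponentials, etc.) defined via their Taylor expansions in a suitable completion of the algebra, or (ii) postulating that whenever $f(\alpha_{1},\ldots,\alpha_{\nu})$ converges in $V$, its image $f(g(\alpha_{1}),\ldots,g(\alpha_{\nu}))$ converges in $U$ and $g$ respects the limit. In the concrete settings relevant to this paper---the finite-dimensional Fock representation of $\mathcal{A}_{m}^{\varphi}$, $\mathcal{F}_{m}$ and $\mathcal{Q}_{m}$---this issue is moot, because the only infinite series of interest are exponentials of nilpotent or bounded number-operator polynomials (e.g.\ the string $e^{-i\varphi\sum_{k<j}f_{k}^{\dagger}f_{k}}$ appearing in the fractional Jordan--Wigner map), and on any Fock basis state these reduce to finite sums. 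Thus the polynomial case handled by the induction above suffices for all applications in the main text.
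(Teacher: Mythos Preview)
Your proposal is correct and follows essentially the same approach as the paper: the paper's proof is a two-sentence sketch stating that, since a formal power series involves only scalar multiplication, addition, product and conjugation, one can ``pass through'' these operations with $g$ until reaching the $\alpha_i$---precisely the induction over monomials via the homomorphism axioms that you spell out. Your treatment is in fact more careful than the paper's, as you explicitly note $g(0)=0$ and address the convergence subtlety for genuine infinite series, which the paper's proof does not mention.
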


\begin{proof}
We use a proof strategy similar to the one used in the previous proposition. Given that a formal power series involves only multiplication by complex numbers, and the *-algebra addition, product and conjugation operations, we will always be able to ``pass through'' these operations with the action of $g$ until we reach the $\alpha_{i}$ of $V$, proving the statement.
\end{proof}

The fact that *-algebra homomorphisms can be used to convert valid algebraic relations from one *-algebra to another is the main reason why we choose this machinery to describe the Jordan-Wigner transform and its generalisations, as we now explain. If we consider fermionic-anyon creation and annihilation operators as independently defined objects for every value of $\varphi$, then a Jordan-Wigner transformation should be able to map the creation and annihilation operators for a particular value $\varphi_{1}$ as functions of their counterparts of another value $\varphi_{2}$ in such a way that this expression in $\varphi_{2}$ obeys the commutation relations of creation and annihilation operators in $\varphi_{1}$. This is exactly what is accomplished by a *-algebra, and is what inspires the following definition.

\begin{defi}\label{def: exch_trans_map}
For any $\varphi_{1},\varphi_{2}\in[0,2\pi)$, let $J_{\varphi_{1},\varphi_{2}}:\mathcal{A}^{\varphi_{1}}_{m}\rightarrow\mathcal{A}^{\varphi_{2}}_{m}$ be the *-algebra homomorphism defined by
\begin{equation}
J_{\varphi_{1},\varphi_{2}}(a_{\varphi_{1},i})=a_{\varphi_{2},i}e^{i(\varphi_{2}-\varphi_{1})\sum_{k=1}^{i-1}n_{\varphi_{2},k}}.
\end{equation}
We call these *-algebra homomorphisms \textbf{exchange transmutation maps}. The $J_{\varphi,0}$ maps are called \textbf{fractional Jordan-Wigner transforms} and are denoted by $J_{\varphi}$, while the special case $J_{\pi}$ is called the \textbf{standard Jordan-Wigner transform}.
\end{defi}

We now prove that this *-algebra homomorphism is well-defined

\begin{prop}\label{prop: jor_wig_prop}
For any $\varphi_{1},\varphi_{2}\in[0,2\pi)$, we have that
\begin{align}
&J_{\varphi_{1},\varphi_{2}}(a_{\varphi_{1},i})J^{\dagger}_{\varphi_{1},\varphi_{2}}(a_{\varphi_{1},j})+e^{-i\varphi_{1}\epsilon_{ij}}J^{\dagger}_{\varphi_{1},\varphi_{2}}(a_{\varphi_{1},j})J_{\varphi_{1},\varphi_{2}}(a_{\varphi_{1},i})=\delta_{ij} \\
&J_{\varphi_{1},\varphi_{2}}(a_{\varphi_{1},i})J_{\varphi_{1},\varphi_{2}}(a_{\varphi_{1},j})+e^{i\varphi\epsilon_{ij}}J_{\varphi_{1},\varphi_{2}}(a_{\varphi_{1},j})J_{\varphi_{1},\varphi_{2}}(a_{\varphi_{1},i})=0, 
\end{align}
\end{prop}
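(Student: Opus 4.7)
The plan is to verify that the map $J_{\varphi_1,\varphi_2}$, initially prescribed only on the generators of the free $\ast$-algebra and extended multiplicatively via the $\ast$-algebra homomorphism axioms, respects the defining relations of $\mathcal{A}^{\varphi_1}_m$. By the universal property of the quotient construction used in Definition~\ref{def: ferm_any_algebra_def}, such compatibility on generators is exactly what is needed to guarantee that $J_{\varphi_1,\varphi_2}$ descends to a genuine $\ast$-algebra homomorphism on $\mathcal{A}^{\varphi_1}_m$. The statement of the proposition is precisely this check.

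I would write $b_i := J_{\varphi_1,\varphi_2}(a_{\varphi_1,i}) = a_{\varphi_2,i}\, e^{i\theta K_i}$ with $\theta := \varphi_2-\varphi_1$ and $K_i := \sum_{k=1}^{i-1} n_{\varphi_2,k}$. Since the $n_{\varphi_2,k}$ are self-adjoint and mutually commute (proof of Proposition~\ref{prop: ferm_any_Fock_rep}), the Jordan--Wigner string $e^{i\theta K_i}$ is unitary, so $b_i^\dagger = e^{-i\theta K_i}\, a_{\varphi_2,i}^\dagger$. The single most useful fact, already derived in the proof of Proposition~\ref{prop: ferm_any_Fock_rep}, is $[n_{\varphi_2,l},a_{\varphi_2,j}] = -\delta_{lj}\,a_{\varphi_2,j}$ and its adjoint. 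I would first distill these into a standalone auxiliary identity: $a_{\varphi_2,j}^{(\dagger)}$ commutes with $e^{i\alpha K_l}$ whenever $j \geq l$, while for $j < l$ it picks up a scalar phase $e^{\mp i\alpha}$ because $K_l$ then contains exactly one factor of $n_{\varphi_2,j}$. Stating this up front makes the subsequent computations purely algebraic bookkeeping.

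The verification then splits into two cases. For $i=j$, the strings $e^{i\theta K_i}$ and $e^{-i\theta K_i}$ sit next to each other in both $b_i b_i^\dagger$ and $b_i^\dagger b_i$ and cancel outright, reducing the first relation to the $\varphi_2$-anticommutator at the diagonal (where $\epsilon_{ii}=0$ and the phase factor is trivially $1$); similarly, $b_i b_i = e^{2i\theta K_i}a_{\varphi_2,i}^2 = 0$ since $a_{\varphi_2,i}^2=0$ by the $\varphi_2$-algebra at $i=j$. For $i\neq j$, without loss of generality $i<j$ (the other case follows by taking Hermitian adjoints or by relabeling), I would compute $b_i b_j^\dagger$ and $b_j^\dagger b_i$ separately: push the Jordan--Wigner strings past the bare $a^{(\dagger)}$-operators using the auxiliary identity---producing phases involving $\theta$---and then invoke the $\varphi_2$-anticommutation relation of Definition~\ref{def: ferm_any_algebra_def} to reorder $a_{\varphi_2,i}$ and $a_{\varphi_2,j}^\dagger$, producing a phase involving $\varphi_2$. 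Both expressions should collapse onto a common normal-ordered monomial proportional to $e^{-i\theta \sum_{k=i}^{j-1} n_{\varphi_2,k}}\, a_{\varphi_2,j}^\dagger a_{\varphi_2,i}$, so that $b_i b_j^\dagger + e^{-i\varphi_1}b_j^\dagger b_i$ reduces to scalar exponent arithmetic in $\theta$, $\varphi_2$, and $\varphi_1 = \varphi_2-\theta$. The second relation, $b_i b_j + e^{i\varphi_1 \epsilon_{ij}} b_j b_i = 0$, is handled identically, with the $\varphi_2$-commutation for two annihilators playing the role of the anticommutation used above.

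The main obstacle is the phase bookkeeping: each computation involves several distinct phase contributions (from each Jordan--Wigner string passing an $a^{(\dagger)}$-operator, and from the $\varphi_2$-statistical swap), and the target identity hinges on an exact cancellation. Isolating the auxiliary commutation identity in a preliminary lemma, and always writing intermediate expressions in a fixed normal-ordered form $(\text{string})\cdot(\text{normal-ordered }a,a^\dagger)$, is the key to making the final exponent arithmetic transparent and mechanical.
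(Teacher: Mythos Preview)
Your proposal is correct and follows essentially the same route as the paper: both proofs verify the relations by writing $b_i=a_{\varphi_2,i}e^{i(\varphi_2-\varphi_1)K_i}$, splitting into the diagonal case $i=j$ and the off-diagonal case $i<j$ (with $i>j$ by conjugation), and then pushing the Jordan--Wigner strings past the bare $a^{(\dagger)}$-operators using $[n_{\varphi_2,l},a_{\varphi_2,j}]=-\delta_{lj}a_{\varphi_2,j}$ before invoking the $\varphi_2$-algebra relations. Your choice to isolate the string-commutation rule as an explicit preliminary lemma is a cleaner organizational device than the paper's inline manipulations, but the underlying argument is the same.
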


\begin{proof}
This corollary is equivalent to affirming that the $\mathcal{A}^{\varphi_{2}}_{m}$ elements
\begin{equation}
a_{\varphi_{2},i}e^{i(\varphi_{2}-\varphi_{1})\sum_{k=1}^{i-1}n_{\varphi_{2},k}},
\end{equation}
obey the same commutation relations as $a_{\varphi_{1},i}$ for all $i$, which are generators of $\mathcal{A}^{\varphi_{1}}_{m}$. In the $i=j$ case we have that
\begin{align*}
\begin{split}
&\left(a_{\varphi_{2},i}e^{i(\varphi_{2}-\varphi_{1})\sum_{k=1}^{i-1}n_{\varphi_{2},k}}\right)\left(a^{\dagger}_{\varphi_{2},i}e^{-i(\varphi_{2}-\varphi_{1})\sum_{k=1}^{i-1}n_{\varphi_{2},k}}\right)+\left(a^{\dagger}_{\varphi_{2},i}e^{-i(\varphi_{2}-\varphi_{1})\sum_{k=1}^{i-1}n_{\varphi_{2},k}}\right)\left(a_{\varphi_{2},i}e^{i(\varphi_{2}-\varphi_{1})\sum_{k=1}^{i-1}n_{\varphi_{2},k}}\right)\\
=&\left(a_{\varphi_{2},i}e^{i(\varphi_{2}-\varphi_{1})n_{\varphi_{2},i}}\right)\left(a^{\dagger}_{\varphi_{2},i}e^{-i(\varphi_{2}-\varphi_{1})n_{\varphi_{2},i}}\right)+\left(a^{\dagger}_{\varphi_{2},i}e^{-i(\varphi_{2}-\varphi_{1})n_{\varphi_{2},i}}\right)\left(a_{\varphi_{2},i}e^{i(\varphi_{2}-\varphi_{1})n_{\varphi_{2},i}}\right)\\
=&\; e^{i(\varphi_{1}-\varphi_{2})}\left(a_{\varphi_{2},i}a^{\dagger}_{\varphi_{2},i}+a^{\dagger}_{\varphi_{2},i}a_{\varphi_{2},i}\right)\\
=&\;1
\end{split}
\end{align*}
{and along the same line of reasoning,}
\begin{align*}
&\left(a_{\varphi_{2},i}e^{i(\varphi_{2}-\varphi_{1})\sum_{k=1}^{i-1}n_{\varphi_{2},k}}\right)\left(a_{\varphi_{2},i}e^{i(\varphi_{2}-\varphi_{1})\sum_{k=1}^{i-1}n_{\varphi_{2},k}}\right)+\left(a_{\varphi_{2},i}e^{i(\varphi_{2}-\varphi_{1})\sum_{k=1}^{i-1}n_{\varphi_{2},k}}\right)\left(a_{\varphi_{2},i}e^{i(\varphi_{2}-\varphi_{1})\sum_{k=1}^{i-1}n_{\varphi_{2},k}}\right)=\\
=&\;2a_{\varphi_{2},i}\left(e^{i(\varphi_{2}-\varphi_{1})\sum_{k=1}^{i-1}n_{\varphi_{2},k}}\right)a_{\varphi_{2},i}\left(e^{i(\varphi_{2}-\varphi_{1})\sum_{k=1}^{i-1}n_{\varphi_{2},k}}\right)=\\
=&\;2(a_{\varphi_{2},i}^{2})\left(e^{2i(\varphi_{2}-\varphi_{1})\sum_{k=1}^{i-1}n_{\varphi_{2},k}}\right)\\
=&\;0,
\end{align*}
{On the other hand, for the $i<j$ case we see that}
\begin{align*}
&\left(a_{\varphi_{2},i}e^{i(\varphi_{2}-\varphi_{1})\sum_{k=1}^{i-1}n_{\varphi_{2},k}}\right)\left(a^{\dagger}_{\varphi_{2},j}e^{-i(\varphi_{2}-\varphi_{1})\sum_{k=1}^{j-1}n_{\varphi_{2},k}}\right)+e^{-i\varphi_{1}}\left(a^{\dagger}_{\varphi_{2},j}e^{-i(\varphi_{2}-\varphi_{1})\sum_{k=1}^{j-1}n_{\varphi_{2},k}}\right)\left(a_{\varphi_{2},i}e^{i(\varphi_{2}-\varphi_{1})\sum_{k=1}^{i-1}n_{\varphi_{2},k}}\right)\\
=& \; a_{\varphi_{2},i}a^{\dagger}_{\varphi_{2},j}\left(e^{-i(\varphi_{2}-\varphi_{1})\sum_{k=i-1}^{j-1}n_{\varphi_{2},k}}\right)+e^{-i\varphi_{1}}e^{-i(\varphi_{2}-\varphi_{1})}a^{\dagger}_{\varphi_{2},j}a_{\varphi_{2},i}\left(e^{-i(\varphi_{2}-\varphi_{1})\sum_{k=i-1}^{j-1}n_{\varphi_{2},k}}\right)\\
=&\left(a_{\varphi_{2},i}a^{\dagger}_{\varphi_{2},j}+e^{-i\varphi_{2}}a^{\dagger}_{\varphi_{2},j}a_{\varphi_{2},i}\right)e^{-i(\varphi_{2}-\varphi_{1})\sum_{k=i-1}^{j-1}n_{\varphi_{2},k}}\\
=& \; 0,
\end{align*}
{and similarly,}
\begin{align*} 
&\left(a_{\varphi_{2},i}e^{i(\varphi_{2}-\varphi_{1})\sum_{k=1}^{i-1}n_{\varphi_{2},k}}\right)\left(a_{\varphi_{2},j}e^{i(\varphi_{2}-\varphi_{1})\sum_{k=1}^{j-1}n_{\varphi_{2},k}}\right)+e^{i\varphi_{1}}\left(a_{\varphi_{2},j}e^{i(\varphi_{2}-\varphi_{1})\sum_{k=1}^{j-1}n_{\varphi_{2},k}}\right)\left(a_{\varphi_{2},i}e^{i(\varphi_{2}-\varphi_{1})\sum_{k=1}^{i-1}n_{\varphi_{2},k}}\right)=\\
=&\;a_{\varphi_{2},i}a_{\varphi_{2},j}\left(e^{i(\varphi_{2}-\varphi_{1})\left(\sum_{k=1}^{j-1}n_{\varphi_{2},k}+\sum_{l=1}^{i-1}n_{\varphi_{2},l}\right)}\right)+e^{i\varphi_{1}}e^{i(\varphi_{2}-\varphi_{1})}a_{\varphi_{2},j}a_{\varphi_{2},i}\left(e^{i(\varphi_{2}-\varphi_{1})\left(\sum_{k=1}^{j-1}n_{\varphi_{2},k}+\sum_{l=1}^{i-1}n_{\varphi_{2},l}\right)}\right)=\\
=&\left(a_{\varphi_{2},i}a_{\varphi_{2},j}+e^{i\varphi_{2}}a_{\varphi_{2},j}a_{\varphi_{2},i}\right)\left(e^{i(\varphi_{2}-\varphi_{1})\left(\sum_{k=1}^{j-1}n_{\varphi_{2},k}+\sum_{l=1}^{i-1}n_{\varphi_{2},l}\right)}\right)\\
=&\;0.
\end{align*}
Given that the proof for the $i>j$ case is obtained taking the conjugate of the $i<j$ case, we have successfully proven that $J_{\varphi_{1},\varphi_{2}}$ is well-defined for all $\varphi_{1},\varphi_{2}\in[0,2\pi)$.
\end{proof}

Therefore, these maps capture the desired behaviour of a Jordan-Wigner transform. Now, we can investigate their properties.

\begin{prop}\label{prop: group_struct}
For all $i\in\{1,...,m\}$ it holds that 
\begin{equation}
J_{\varphi_{1},\varphi_{2}}(n_{\varphi_{1},i})=n_{\varphi_{2},i},
\end{equation}
and
\begin{equation}\label{eq: group_clos}
J_{\varphi_{2},\varphi_{3}}\circ J_{\varphi_{1},\varphi_{2}}=J_{\varphi_{1},\varphi_{3}},
\end{equation}
for any $\varphi_{1},\varphi_{2},\varphi_{3} \in [0,2\pi)$.
\end{prop}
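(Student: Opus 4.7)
The plan is to prove the two claims separately, exploiting that $J_{\varphi_1,\varphi_2}$ is a *-algebra homomorphism together with the results already established in Propositions \ref{prop: ferm_any_Fock_rep}, \ref{prop: img_gen_homo}, and Corollary \ref{cor: val_alg_rel}.

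For the first identity, I would start by expanding $n_{\varphi_1,i}=a^{\dagger}_{\varphi_1,i}a_{\varphi_1,i}$ and using that $J_{\varphi_1,\varphi_2}$ preserves products and conjugation, obtaining
\begin{equation*}
J_{\varphi_1,\varphi_2}(n_{\varphi_1,i})=J_{\varphi_1,\varphi_2}(a_{\varphi_1,i})^{\dagger}J_{\varphi_1,\varphi_2}(a_{\varphi_1,i})=e^{-i(\varphi_2-\varphi_1)\sum_{k=1}^{i-1}n_{\varphi_2,k}}\,a^{\dagger}_{\varphi_2,i}a_{\varphi_2,i}\,e^{i(\varphi_2-\varphi_1)\sum_{k=1}^{i-1}n_{\varphi_2,k}}.
\end{equation*}
Since the number operators $n_{\varphi_2,k}$ mutually commute (as shown in the proof of Proposition \ref{prop: ferm_any_Fock_rep}), the operator $a^{\dagger}_{\varphi_2,i}a_{\varphi_2,i}=n_{\varphi_2,i}$ commutes with every $n_{\varphi_2,k}$ appearing in the exponent, so the two phase factors cancel and the expression collapses to $n_{\varphi_2,i}$, as required.

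For the composition law, I would invoke Proposition \ref{prop: img_gen_homo}: two *-algebra homomorphisms out of $\mathcal{A}^{\varphi_1}_m$ coincide iff they agree on the generators $a_{\varphi_1,i}$. Thus it suffices to show $J_{\varphi_2,\varphi_3}(J_{\varphi_1,\varphi_2}(a_{\varphi_1,i}))=J_{\varphi_1,\varphi_3}(a_{\varphi_1,i})$. Applying $J_{\varphi_2,\varphi_3}$ to the defining formula and using that it is a *-algebra homomorphism (so, by Corollary \ref{cor: val_alg_rel}, it commutes with the formal power series defining the exponential), I obtain
\begin{equation*}
J_{\varphi_2,\varphi_3}\bigl(a_{\varphi_2,i}e^{i(\varphi_2-\varphi_1)\sum_{k=1}^{i-1}n_{\varphi_2,k}}\bigr)=J_{\varphi_2,\varphi_3}(a_{\varphi_2,i})\,e^{i(\varphi_2-\varphi_1)\sum_{k=1}^{i-1}J_{\varphi_2,\varphi_3}(n_{\varphi_2,k})}.
\end{equation*}
By the first part, $J_{\varphi_2,\varphi_3}(n_{\varphi_2,k})=n_{\varphi_3,k}$, and by definition $J_{\varphi_2,\varphi_3}(a_{\varphi_2,i})=a_{\varphi_3,i}e^{i(\varphi_3-\varphi_2)\sum_{k=1}^{i-1}n_{\varphi_3,k}}$. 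Combining these, and again using that the $n_{\varphi_3,k}$ with $k<i$ commute among themselves so the two exponentials multiply by adding exponents, yields $a_{\varphi_3,i}e^{i(\varphi_3-\varphi_1)\sum_{k=1}^{i-1}n_{\varphi_3,k}}=J_{\varphi_1,\varphi_3}(a_{\varphi_1,i})$, establishing \cref{eq: group_clos}.

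The main subtlety I foresee is justifying rigorously that a *-algebra homomorphism interchanges with the exponential series. In our setting the operators in the exponent are self-adjoint idempotents, so the series truncates and reduces to a polynomial in the $n_{\varphi,k}$; Corollary \ref{cor: val_alg_rel} then applies directly. Beyond this, the remaining work is careful bookkeeping of which number operators commute with which, but since all exponentials involve only $n_{\varphi,k}$ for $k<i$ and these commute with each other and with $n_{\varphi,i}$, the manipulations are immediate.
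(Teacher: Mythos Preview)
Your proposal is correct and follows essentially the same route as the paper's proof: both compute $J_{\varphi_1,\varphi_2}(n_{\varphi_1,i})$ by applying the homomorphism to $a^{\dagger}_{\varphi_1,i}a_{\varphi_1,i}$ and cancelling the commuting phase factors, and both establish the composition law by checking it on generators and combining the two exponentials. Your write-up is in fact slightly more explicit than the paper's about why the exponentials commute and why the homomorphism passes through the exponential (the paper simply performs these steps without comment), but the underlying argument is identical.
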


\begin{proof}
To prove the first part, notice that $J_{\varphi_{1},\varphi_{2}}(a^{\dagger}_{\varphi_{1},i})=(J_{\varphi_{1},\varphi_{2}}(a_{\varphi_{1},i}))^{\dagger}$, from which it follows that
\begin{align*}
J_{\varphi_{1},\varphi_{2}}(a^{\dagger}_{\varphi_{1},i}a_{\varphi_{1},i})&=J_{\varphi_{1},\varphi_{2}}(a^{\dagger}_{\varphi_{1},i})J_{\varphi_{1},\varphi_{2}}(a_{\varphi_{1},i})\\&=a^{\dagger}_{\varphi_{2},i}e^{-i(\varphi_{2}-\varphi_{1})\sum_{k=1}^{i-1}n_{\varphi_{2},k}}a_{\varphi_{2},i}e^{i(\varphi_{2}-\varphi_{1})\sum_{k=1}^{i-1}n_{\varphi_{2},k}}\\&=a^{\dagger}_{\varphi_{2},i}a_{\varphi_{2},i}.
\end{align*}
Subsequently,
\begin{align*}
(J_{\varphi_{2},\varphi_{3}}\circ J_{\varphi_{1},\varphi_{2}})(a_{\varphi_{1},i})
&=J_{\varphi_{2},\varphi_{3}}\left(a_{\varphi_{2},i}e^{i(\varphi_{2}-\varphi_{1})\sum_{k=1}^{i-1}n_{\varphi_{2},k}}\right)\\
&=J_{\varphi_{2},\varphi_{3}}(a_{\varphi_{2},i})e^{i(\varphi_{2}-\varphi_{1})\sum_{k=1}^{i-1}J_{\varphi_{2},\varphi_{3}}(n_{\varphi_{2},k})}\\
&=a_{\varphi_{3},i}e^{i(\varphi_{2}-\varphi_{1})\sum_{k=1}^{i-1}n_{\varphi_{3},k}}e^{i(\varphi_{3}-\varphi_{2})\sum_{k=1}^{i-1}n_{\varphi_{3},k}}\\
&=a_{\varphi_{3},i}e^{i(\varphi_{3}-\varphi_{1})\sum_{k=1}^{i-1}n_{\varphi_{3},k}}\\
&=J_{\varphi_{1},\varphi_{3}}(a_{\varphi_{1},i}).
\end{align*}
\end{proof}

\begin{coro}\label{coro: path_prop}
For any $\varphi\in[0,2\pi)$ and $k,l\in\mathbb{Z}$, we have that $J_{\varphi+2\pi k,\varphi+ 2\pi l}=\mathds{1}_{\varphi}$, where $\mathds{1}_{\varphi}$ is the identity homomorphism over $\mathcal{A}^{\varphi}_{m}$. This implies that, for any $\varphi_{1},\varphi_{2}\in[0,2\pi)$,
\begin{equation}
J^{-1}_{\varphi_{1},\varphi_{2}}=J_{\varphi_{2},\varphi_{1}}.
\end{equation}
\end{coro}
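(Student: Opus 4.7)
The plan is to establish the first assertion directly from the definition of $J_{\varphi_1,\varphi_2}$ in Definition~\ref{def: exch_trans_map} and then bootstrap the inverse identity from the composition law of Proposition~\ref{prop: group_struct}.

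First I would observe that the defining relations in Definition~\ref{def: ferm_any_algebra_def} depend on $\varphi$ only through $e^{\pm i\varphi\epsilon_{ij}}$, which is invariant under $\varphi\mapsto\varphi+2\pi n$. Consequently $\mathcal{A}^{\varphi+2\pi k}_{m}=\mathcal{A}^{\varphi+2\pi l}_{m}=\mathcal{A}^{\varphi}_{m}$, and the generators coincide: $a_{\varphi+2\pi k,i}=a_{\varphi,i}$ and $n_{\varphi+2\pi k,i}=n_{\varphi,i}$. Applying Definition~\ref{def: exch_trans_map} to this setting gives
\begin{equation*}
J_{\varphi+2\pi k,\,\varphi+2\pi l}(a_{\varphi,i})=a_{\varphi,i}\,e^{\,i2\pi(l-k)\sum_{j=1}^{i-1}n_{\varphi,j}}.
\end{equation*}

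The core step is to show that the exponential factor collapses to the algebra unit. Setting $i=j$ in the fermionic-anyon relations yields $a_{\varphi,i}^{2}=0$ and $a_{\varphi,i}a_{\varphi,i}^{\dagger}=1-n_{\varphi,i}$, so a short calculation gives $n_{\varphi,i}^{2}=a_{\varphi,i}^{\dagger}(1-n_{\varphi,i})a_{\varphi,i}=n_{\varphi,i}$, establishing idempotency. Together with the pairwise commutativity of the $n_{\varphi,j}$ (already noted inside the proof of Proposition~\ref{prop: ferm_any_Fock_rep}), a Taylor expansion gives $e^{i\alpha n_{\varphi,i}}=1+(e^{i\alpha}-1)n_{\varphi,i}$, which equals the algebra unit whenever $\alpha\in 2\pi\mathbb{Z}$. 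Hence $e^{\,i2\pi(l-k)\sum_{j=1}^{i-1}n_{\varphi,j}}=\prod_{j=1}^{i-1}e^{\,i2\pi(l-k)n_{\varphi,j}}$ is the unit, so $J_{\varphi+2\pi k,\,\varphi+2\pi l}$ fixes each generator $a_{\varphi,i}$; by Proposition~\ref{prop: img_gen_homo} it then agrees with $\mathds{1}_{\varphi}$ on all of $\mathcal{A}^{\varphi}_{m}$.

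For the second claim, I would invoke the composition identity of Proposition~\ref{prop: group_struct} twice: $J_{\varphi_2,\varphi_1}\circ J_{\varphi_1,\varphi_2}=J_{\varphi_1,\varphi_1}$ and $J_{\varphi_1,\varphi_2}\circ J_{\varphi_2,\varphi_1}=J_{\varphi_2,\varphi_2}$. The first part (applied with $k=l=0$) identifies both right-hand sides with the corresponding identity homomorphisms, whence $J^{-1}_{\varphi_1,\varphi_2}=J_{\varphi_2,\varphi_1}$. The only delicate point in this argument is the purely algebraic idempotency of $n_{\varphi,i}$, but this is immediate from the consequence $a_{\varphi,i}^{2}=0$ of the anyonic commutation relations and does not require passing through any specific representation.
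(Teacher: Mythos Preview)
Your proof is correct and follows the same overall logic as the paper: the inverse identity is obtained from the composition law of Proposition~\ref{prop: group_struct} together with the fact that $J_{\varphi,\varphi}=\mathds{1}_{\varphi}$. The paper's own proof is a one-line appeal to Eq.~\eqref{eq: group_clos}, whereas you supply the explicit algebraic justification for the first assertion (idempotency of $n_{\varphi,i}$ from $a_{\varphi,i}^{2}=0$, hence $e^{i2\pi(l-k)\sum_j n_{\varphi,j}}=1$), which the paper leaves implicit. Your extra care here is warranted, since the collapse of the phase factor for general $k,l\in\mathbb{Z}$ is not literally a consequence of the composition law alone but of the spectral/idempotency property of the number operators; the paper presumably regards this as obvious from the Fock-representation discussion, while you establish it purely at the $*$-algebra level.
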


\begin{proof}
This is a direct consequence of Eq.~(\ref{eq: group_clos}).
\end{proof}

The main takeaway is that these properties show that all $\mathcal{A}^{\varphi}_{m}$ are isomorphic to each other as *-algebras. Therefore, exchange transmutation maps not only transfer true algebraic relations from one algebra to another, but they also act as a kind of ``generator basis transformation'' of the observable algebra of these particles, implying that the algebras $\mathcal{A}^{\varphi}_{m}$ are just alternate representations of the algebra of fermionic observables. In the next section, we investigate the consequences of this fact for the Fock representation of these particle systems.

\subsection{Invariance of state amplitudes}

Following up on the idea of fermionic anyons being alternate representations of the fermionic observable algebra, we now restrict our discussion to maps between fermions and fermionic anyons, and talk of ``operators'' in an abstract way, specifying their expression in creation and annihilation operators of a particular fermionic-anyon algebra as a ``basis choice''. Using this language, we can prove that fermionic Fock states have a ``basis-invariant'' description. This implies that amplitudes of fermionic states do not change under the change of exchange statistics, which is crucial for our concept of state separability discussed in the main text.

\begin{defi}
For any $\varphi\in[0,2\pi)$, let $O$ be an operator in $\mathcal{F}_{m}$. For all $\varphi\in[0,2\pi)$, the \emph{${\varphi}$-representation of O} is the operator defined by the following:
\begin{equation}
O=:[O]_{0}=J_{\varphi}([O]_{\varphi}),\quad[O]_{\varphi}=J^{-1}_{\varphi}([O]_{0}).
\end{equation}
\end{defi}

The last definition introduces the notation for describing the fractional Jordan-Wigner transform as a basis change in operator space. This allows us to completely specify any operator on any fermionic-anyon algebra by their expansion coefficients in the fermionic algebra. We now apply this to operators that create Fock-basis states.

\begin{prop}
Let $\ket{\mathbf{x}}_{0}$ be a general fermionic Fock state, parametrized as
\begin{equation} \label{eq:A20}
\ket{\mathbf{x}}_{0}=f^{\dagger x_{1}}_{1}\cdots{f}^{\dagger x_{m}}_{m}\ket{0}_{0}=[\mathbf{x}]_{0}\ket{0}_{0},
\end{equation}
where $\mathbf{x}=(x_{1},...,x_{m})$, and each $x_{i}$ is the occupation number of operator $n_{0,i}$. Then, the state
\begin{equation}
\ket{\mathbf{x}}_{\varphi}=[\mathbf{x}]_{\varphi}\ket{0}_{\varphi},
\end{equation}
is a fermionic-anyon Fock state with the same occupation numbers.
\end{prop}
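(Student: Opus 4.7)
The plan is to exploit the fact that, by Corollary~\ref{cor: val_alg_rel}, any algebraic identity in $\mathcal{F}_m$ is transported verbatim by $J_\varphi^{-1}$ to an identity in $\mathcal{A}_m^\varphi$, together with the number-operator invariance $J_\varphi^{-1}(n_{0,i}) = n_{\varphi,i}$ established in Proposition~\ref{prop: group_struct}. With these tools the proof reduces to checking a single eigenvalue equation in the fermionic picture and then transporting it.

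First I would derive the operator identity $[n_{0,i},[\mathbf{x}]_0] = x_i [\mathbf{x}]_0$ inside $\mathcal{F}_m$. Expanding $[\mathbf{x}]_0 = f_1^{\dagger x_1}\cdots f_m^{\dagger x_m}$ with each $x_j \in \{0,1\}$ and using the Leibniz rule for the commutator together with $[n_{0,i}, f_j^\dagger] = \delta_{ij} f_j^\dagger$, the only surviving term is the one at $j=i$, weighted by $x_i$. Applying $J_\varphi^{-1}$ to both sides, Corollary~\ref{cor: val_alg_rel} together with Proposition~\ref{prop: group_struct} yields
\begin{equation}
[n_{\varphi,i},[\mathbf{x}]_\varphi] \;=\; x_i\,[\mathbf{x}]_\varphi.
\end{equation}
Combined with the defining property $n_{\varphi,i}\ket{0}_\varphi = 0$ of the anyonic vacuum, this gives
\begin{equation}
n_{\varphi,i}\,[\mathbf{x}]_\varphi\ket{0}_\varphi \;=\; \bigl([\mathbf{x}]_\varphi n_{\varphi,i} + x_i [\mathbf{x}]_\varphi\bigr)\ket{0}_\varphi \;=\; x_i\,[\mathbf{x}]_\varphi\ket{0}_\varphi,
\end{equation}
so $[\mathbf{x}]_\varphi\ket{0}_\varphi$ is a simultaneous eigenvector of every $n_{\varphi,i}$ with the prescribed eigenvalue $x_i$.

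To conclude, I would argue that $[\mathbf{x}]_\varphi\ket{0}_\varphi$ is nonzero, so that it genuinely belongs to the Fock basis constructed in Proposition~\ref{prop: ferm_any_Fock_rep} rather than being trivial. This follows by computing the squared norm $\bra{0}_\varphi [\mathbf{x}]_\varphi^\dagger [\mathbf{x}]_\varphi \ket{0}_\varphi$: since $J_\varphi^{-1}$ preserves conjugation and products, this matrix element equals the vacuum expectation in $\mathcal{A}_m^\varphi$ of $J_\varphi^{-1}\bigl([\mathbf{x}]_0^\dagger [\mathbf{x}]_0\bigr)$. In $\mathcal{F}_m$ the CAR relations reduce $[\mathbf{x}]_0^\dagger [\mathbf{x}]_0$ to a polynomial in the number operators $n_{0,i}$ whose vacuum expectation value is $1$; transporting through $J_\varphi^{-1}$ keeps it a polynomial in $n_{\varphi,i}$, and the anyonic vacuum gives the same value.

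The only place that looks potentially delicate is that $J_\varphi^{-1}$ acting on each $f_j^\dagger$ attaches a Jordan--Wigner string $\exp\bigl(-i\varphi\sum_{k<j}n_{\varphi,k}\bigr)$, so one might worry that these strings produce spurious phases in $[\mathbf{x}]_\varphi\ket{0}_\varphi$. This is precisely the step that the homomorphism framework is designed to bypass: because the whole argument is phrased at the level of operator identities pulled back through $J_\varphi^{-1}$, the strings never need to be manipulated explicitly, and any phase that arises is absorbed into the definition of $[\mathbf{x}]_\varphi$ itself without affecting the occupation numbers.
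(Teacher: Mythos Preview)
Your argument is correct but proceeds by a genuinely different route than the paper. The paper computes $[\mathbf{x}]_\varphi$ explicitly: applying $J_\varphi^{-1}$ factor by factor and commuting the Jordan--Wigner strings to the right yields
\[
[\mathbf{x}]_\varphi = a_{\varphi,1}^{\dagger x_1}\cdots a_{\varphi,m}^{\dagger x_m}\,\exp\!\Bigl(i\varphi\textstyle\sum_{k=1}^{m-1}\bigl(\sum_{l>k}x_l\bigr)n_{\varphi,k}\Bigr),
\]
and the exponential acts as the identity on $\ket{0}_\varphi$, giving $[\mathbf{x}]_\varphi\ket{0}_\varphi = a_{\varphi,1}^{\dagger x_1}\cdots a_{\varphi,m}^{\dagger x_m}\ket{0}_\varphi$ on the nose. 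Your approach instead transports the commutator identity $[n_{0,i},[\mathbf{x}]_0]=x_i[\mathbf{x}]_0$ through the homomorphism and checks nonvanishing of the norm, which is cleaner and never touches the strings explicitly. The trade-off is that your argument shows $[\mathbf{x}]_\varphi\ket{0}_\varphi$ lies in the correct one-dimensional joint eigenspace of the $n_{\varphi,i}$ with unit norm, hence equals the canonical Fock vector only up to a phase; the paper's explicit computation pins down that phase as $1$. For the proposition as stated this is immaterial, but the very next corollary (invariance of Fock amplitudes, $\braket{\mathbf{x}}{\psi}_\varphi=\braket{\mathbf{x}}{\psi}_0$) does rely on the exact equality, so if you wanted to continue to that result you would eventually need the explicit calculation the paper performs.
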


\begin{proof}
We have that
\begin{align*}
[\mathbf{x}]_{\varphi}&=J^{-1}_{\varphi}(f^{\dagger x_{1}}_{1}\cdots{f}^{\dagger x_{i}}_{i}\cdots{f}^{\dagger x_{m}}_{m})\\&=\left(a^{\dagger x_{1}}_{\varphi,1}\right)\cdots\left(a^{\dagger x_{i}}_{\varphi,i}e^{ix_{i}\varphi\sum^{i-1}_{k=1}n_{\varphi,k}}\right)\cdots\left(a^{\dagger x_{m}}_{\varphi,m}e^{ix_{m}\varphi\sum^{m-1}_{k=1}n_{\varphi,k}}\right)
\\&=a^{\dagger x_{1}}_{\varphi,1}\cdots{a}^{\dagger x_{m}}_{\varphi,m}\left(e^{i\varphi\sum_{k=1}^{m-1}\left(\sum_{l=k+1}^{m}(x_{l})\right)n_{\varphi,k}}\right).
\end{align*}
Acting with $[x]_{\varphi}$ over the fermionic-anyon vacuum $\ket{0}_{\varphi}$ then gives us
\begin{align*}
[\mathbf{x}]_{\varphi}\ket{0}_{\varphi}&=a^{\dagger x_{1}}_{\varphi,1}\cdots{a}^{\dagger x_{m}}_{\varphi,m}\left(e^{i\varphi\sum_{k=1}^{m-1}\left(\sum_{l=k+1}^{m}(x_{l})\right)n_{\varphi,k}}\right)\ket{0}_{\varphi}\\&=a^{\dagger x_{1}}_{\varphi,1}\cdots{a}^{\dagger x_{m}}_{\varphi,m}\ket{0}_{\varphi}.
\end{align*}
Since each operator $a^{\dagger}_{\varphi,i}$ is a creation operator for $n_{\varphi,i}$, the state in the righthand side of the last equation is a fermionic-anyon Fock state with occupation numbers $\mathbf{x}=(x_{1},...,x_{m})$.
\end{proof}

Note that if the operators are not acting on the vacuum state in decreasing order, as shown in Eq.~\eqref{eq:A20}, some phases can appear. For example, consider the state $f^{\dagger }_{5}{f}^{\dagger}_{3}\ket{0}_{0}$. Applying $J^{-1}_{\varphi}$ on the creation operators leads to $a^{\dagger}_{\varphi,5}e^{(i\varphi\sum_{i=1}^{4})}{a}^{\dagger}_{\varphi,3}e^{(i\varphi\sum_{i=1}^{2})}\ket{0}_{\varphi}=e^{i\varphi}a^{\dagger}_{\varphi,5}{a}^{\dagger}_{\varphi,3}\ket{0}_{\varphi}$. This implies that even though $J^{-1}_{\varphi}$ preserves fermionic commutation relations, its action over fermionic operators that create physical states gives us fermionic-anyon states with the correct behaviour under particle exchange, as proven in the corollary bellow.

\begin{coro}\label{cor: ferm-any-stat-equiv}
For any fermionic state of the form
\begin{equation}
\ket{\psi}_{0}=\sum_{\mathbf{x}\in\{0,1\}^{m}}\psi_{\mathbf{x}}f^{\dagger x_{1}}_{1}\cdots f^{\dagger x_{m}}_{m}\ket{0}_{0}=[\psi]_{0}\ket{0}_{0},
\end{equation}
we have that
\begin{equation}
\ket{\psi}_{\varphi}=[\psi]_{\varphi}\ket{0}_{\varphi},
\end{equation}
is a fermionic-anyon state that satisfies
\begin{equation}
\braket{\mathbf{x}}{\psi}_{\varphi}=\braket{\mathbf{x}}{\psi}_{0}=\psi_{\mathbf{x}},
\end{equation}
for all $\varphi\in[0,2\pi)$ and all $\mathbf{x}\in\{0,1\}^{m}$.
\end{coro}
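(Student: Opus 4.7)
The plan is to reduce this corollary directly to the preceding proposition by a linearity argument. The map $J^{-1}_{\varphi}$ is a $*$-algebra homomorphism (Definition~A.4), hence $\mathbb{C}$-linear. Applying it to
\begin{equation*}
[\psi]_{0} = \sum_{\mathbf{x}\in\{0,1\}^{m}} \psi_{\mathbf{x}}\, f^{\dagger x_{1}}_{1}\cdots f^{\dagger x_{m}}_{m}
\end{equation*}
and using $J^{-1}_{\varphi}([\mathbf{x}]_{0}) = [\mathbf{x}]_{\varphi}$ yields
\begin{equation*}
[\psi]_{\varphi} = \sum_{\mathbf{x}\in\{0,1\}^{m}} \psi_{\mathbf{x}}\, [\mathbf{x}]_{\varphi}.
\end{equation*}

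Next, I would act with this expansion on the anyonic vacuum $\ket{0}_{\varphi}$ and invoke the preceding proposition term by term. That result identifies $[\mathbf{x}]_{\varphi}\ket{0}_{\varphi}$ with the fermionic-anyon Fock-basis state $\ket{\mathbf{x}}_{\varphi} = a^{\dagger x_{1}}_{\varphi,1}\cdots a^{\dagger x_{m}}_{\varphi,m}\ket{0}_{\varphi}$, which carries exactly the same occupation numbers as its fermionic counterpart $\ket{\mathbf{x}}_{0}$. Consequently $\ket{\psi}_{\varphi} = \sum_{\mathbf{x}} \psi_{\mathbf{x}} \ket{\mathbf{x}}_{\varphi}$, which by construction is a normalized fermionic-anyon state whenever $\ket{\psi}_{0}$ is normalized (the condition $\sum_{\mathbf{x}}|\psi_{\mathbf{x}}|^{2}=1$ is manifestly $\varphi$-independent).

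To extract the amplitudes, I would appeal to the orthonormality of the Fock basis on $\mathcal{H}_{\varphi}$. Since the number operators $n_{\varphi,i}$ mutually commute (Proposition~A.1) and, by the relations in Definition~A.2, each creation operator squares to zero so the joint spectrum is $\{0,1\}^{m}$, distinct occupation-number strings label orthogonal joint eigenvectors. Taking $\bra{\mathbf{x}}_{\varphi}$ of the expansion of $\ket{\psi}_{\varphi}$ isolates $\psi_{\mathbf{x}}$, and the identical argument in the fermionic sector ($\varphi = 0$) gives $\scpr{\mathbf{x}}{\psi}_{0} = \psi_{\mathbf{x}}$, proving the claim.

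I do not expect any real obstacle: the statement is essentially a linearity extension of the preceding proposition, with the amplitude identity coming for free from Fock-basis orthonormality. The only subtlety worth flagging is the consistency of normalization across different $\varphi$-sectors, but this is automatic from the canonical Fock construction in Definitions~A.3 and Proposition~A.1, together with the fact that the creation operators in each $\mathcal{A}^{\varphi}_{m}$ square to zero.
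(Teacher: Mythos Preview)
Your proposal is correct and follows essentially the same approach as the paper: apply the linearity of $J^{-1}_{\varphi}$ to $[\psi]_{0}$ and invoke the preceding proposition term by term. You spell out the Fock-basis orthonormality step to extract the amplitudes, which the paper leaves implicit, but the core argument is identical.
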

%\td{Here, again, if creation operators are not in decreasing order, some phases can appear in the projection of the state onto a given basis element.}

\begin{proof}
Just apply $J^{-1}_{\varphi}$ to $[\psi]_{0}$ and use the linearity $J^{-1}_{\varphi}$ of together with the result of proposition 5.
\end{proof}

This last result shows us that we can specify any fermionic-anyon state in the Fock basis, in terms of a fermionic state in the Fock basis, giving us the hint that we might be able to translate any construction used to study fermionic states to obtain results about the structure of fermionic-anyons states. In the next section, we discuss how this change of basis interacts with transformations that preserve commutation relations inside the same algebra.

\section{Fermionic-anyon separability}

Here we use the machinery developed in the last section to describe the behaviour of fermionic-anyon states under local changes of basis and show how it is used to define particle entanglement, by proving Theorem \ref{theorem: SM_SD for anyons}. First, we define what it means for a map over an *-algebra to be a canonical transformation, and describe local changes of basis in fermionic systems as a special type of canonical transformation. Next, we show how to build the local basis changes for fermionic anyons using the fractional Jordan-Wigner transform. Lastly, we show how to use these transformations to prove the existence of a fermionic-anyon Slater decomposition.

\subsection{Canonical transformations}

We call all maps that preserve the commutation relations of a fermionic-anyon algebra as canonical transformations, as in the following definition.

\begin{defi}
Let $\mathcal{A}^{\varphi}_{m}$ and let $g:\mathcal{A}^{\varphi}_{m}\rightarrow\mathcal{A}^{\varphi}_{m}$ be a function. We say that $g$ is a \textbf{canonical transformation} if and only if
\begin{align}
g(a_{\varphi,i})g(a_{\varphi,j}^{\dagger})+e^{-i\varphi\epsilon_{ij}}g(a_{\varphi,j}^{\dagger})g(a_{\varphi,i})&=\delta_{i,j},\\
g(a_{\varphi,i})g(a_{\varphi,j})+e^{i\varphi\epsilon_{ij}}g(a_{\varphi,j})g(a_{\varphi,i})&=0, 
\end{align}
\end{defi}

In other words, a canonical transformation is any map that transforms the set of *-algebra's generators into another set of generators. This implies that canonical transformations act as a change of variables in the description of a system of indistinguishable particles. Note that even though the statistical transmutation maps also preserve commutation relations, they just translate the observables of a particle system into a system with a different kind of particle, which is why we insist that canonical transformations need to be defined by endomorphisms, that is functions from an *-algebra to itself. Given that the main characteristic of a canonical transformation is to preserve a particular algebraic relation, it should be no surprise that the following is true.

\begin{prop}
All *-algebra homomorphisms from $\mathcal{A}^{\varphi}_{m}$ to itself are canonical transformations.
\end{prop}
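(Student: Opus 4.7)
The plan is to observe that this proposition is essentially a direct application of Corollary \ref{cor: val_alg_rel} to the two defining relations of $\mathcal{A}^{\varphi}_{m}$. Since $\mathcal{A}^{\varphi}_{m}$ is constructed by quotienting the free *-algebra generated by $\{a_{\varphi,i}\}$ by the fermionic-anyon (anti-)commutation relations, those relations are, by definition, algebraic identities that hold inside $\mathcal{A}^{\varphi}_{m}$. That is, for all $i,j$, the elements
\begin{equation}
R^{(1)}_{ij} := a_{\varphi,i}a^{\dagger}_{\varphi,j}+e^{-i\varphi\epsilon_{ij}}a^{\dagger}_{\varphi,j}a_{\varphi,i}-\delta_{ij}\,e, \qquad R^{(2)}_{ij} := a_{\varphi,i}a_{\varphi,j}+e^{i\varphi\epsilon_{ij}}a_{\varphi,j}a_{\varphi,i},
\end{equation}
are equal to $0$ in $\mathcal{A}^{\varphi}_{m}$.

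First, I would invoke Corollary \ref{cor: val_alg_rel}, taking the role of the formal power series $f$ to be each of the above polynomial expressions and the $\alpha$'s to be the generators $a_{\varphi,i}$ and their conjugates. The corollary then guarantees that the same polynomial relations, with $a_{\varphi,i}$ replaced by $g(a_{\varphi,i})$ and $a^{\dagger}_{\varphi,j}$ replaced by $g(a^{\dagger}_{\varphi,j})$, also evaluate to $0$ in the codomain $\mathcal{A}^{\varphi}_{m}$.

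Next, I would use property 4 of Definition \ref{def: *-alg_homo} — conjugation-preservation of *-algebra homomorphisms — to replace $g(a^{\dagger}_{\varphi,j})$ with $g(a_{\varphi,j})^{\dagger}$ throughout. Combining these two observations yields precisely
\begin{align}
g(a_{\varphi,i})g(a_{\varphi,j})^{\dagger}+e^{-i\varphi\epsilon_{ij}}g(a_{\varphi,j})^{\dagger}g(a_{\varphi,i}) &= \delta_{ij},\\
g(a_{\varphi,i})g(a_{\varphi,j})+e^{i\varphi\epsilon_{ij}}g(a_{\varphi,j})g(a_{\varphi,i}) &= 0,
\end{align}
which is exactly the definition of a canonical transformation.

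There is essentially no obstacle here: the content of the statement is that the defining relations of $\mathcal{A}^{\varphi}_{m}$ are, trivially, relations, and *-homomorphisms respect relations. The only subtlety worth flagging explicitly is the use of the conjugation-preservation axiom to reconcile the notational distinction between $g(a^{\dagger}_{\varphi,j})$, which appears naturally when transporting the relations via the homomorphism, and $g(a_{\varphi,j})^{\dagger}$, which appears in the definition of a canonical transformation.
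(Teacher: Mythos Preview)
Your proposal is correct and takes essentially the same approach as the paper, whose proof is the single line ``Just apply corollary 1 to the fermionic-anyon commutation relations.'' One minor note: the paper's definition of canonical transformation is already written with $g(a_{\varphi,j}^{\dagger})$ rather than $g(a_{\varphi,j})^{\dagger}$, so the conjugation-preservation step you flagged is not strictly needed (though it does no harm).
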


\begin{proof}
Just apply corollary 1 to the fermionic-anyon commutation relations.
\end{proof}

Therefore, we can look for transformations representing local basis changes among *-algebra homomorphisms. For the specific case of $\mathcal{F}_{m}$, local basis changes are represented by a special subset of known canonical transformations, which are defined below.

\begin{prop}
The *-algebra homomorphism $B_{U,V}:\mathcal{F}_{m}\rightarrow\mathcal{F}_{m}$ of the form
\begin{equation}
B_{U,V}(f_{i})=\sum^{m}_{j=1}U_{i,j}f_{j}+\sum^{m}_{k=1}V_{i,k}f^{\dagger}_{k},
\end{equation}
where $U$ and $V$ are matrices such that
\begin{align}
&UU^{\dagger}+VV^{\dagger}=\mathds{1},\\
&UV^{\intercal}+VU^{\intercal}=0,
\end{align}
is well-defined, and is called a \textbf{multi-mode fermionic Bogoliubov transformation}. When $V=0$, $B_{U,0}$ is called a \textbf{local change of basis}.
\end{prop}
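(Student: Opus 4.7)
The plan is to reduce the well-definedness of $B_{U,V}$ to a direct verification that its images of the generators satisfy the canonical anticommutation relations of $\mathcal{F}_m$. Since $B_{U,V}$ is specified as a *-algebra homomorphism acting on the generators $f_i$, and since by Proposition 2 and Corollary 1 (applied at $\varphi=0$) any *-algebra homomorphism out of $\mathcal{F}_m$ is fully determined by its action on the generators and preserves every algebraic identity, it is enough to show that the elements $\{B_{U,V}(f_i)\}$ and their conjugates still satisfy $\{f_i,f_j^{\dagger}\}=\delta_{ij}$ and $\{f_i,f_j\}=0$. Once these two relations hold, $B_{U,V}$ extends consistently to the whole free *-algebra and descends to the quotient $\mathcal{F}_m$.

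The first step is to write out the conjugate image explicitly,
\begin{equation*}
B_{U,V}(f_j^{\dagger}) \;=\; \sum_{l} U_{jl}^{*}\,f_l^{\dagger} + \sum_{m} V_{jm}^{*}\,f_m ,
\end{equation*}
using $B_{U,V}(f_j^{\dagger})=(B_{U,V}(f_j))^{\dagger}$, which is forced by the *-algebra homomorphism property. I would then expand the anticommutator $\{B_{U,V}(f_i),B_{U,V}(f_j^{\dagger})\}$ into four bilinear sums. The two homogeneous sums in $\{f_k,f_l\}$ and $\{f_k^{\dagger},f_l^{\dagger}\}$ vanish by the fermionic CAR, while the two mixed sums collapse via $\{f_k,f_l^{\dagger}\}=\delta_{kl}$ to
\begin{equation*}
\sum_{k} U_{ik}U_{jk}^{*} \;+\; \sum_{k} V_{ik}V_{jk}^{*} \;=\; (UU^{\dagger})_{ij}+(VV^{\dagger})_{ij},
\end{equation*}
which equals $\delta_{ij}$ exactly by the first hypothesis on $U,V$.

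Second, I would repeat the same expansion for $\{B_{U,V}(f_i),B_{U,V}(f_j)\}$. Now the surviving cross-terms come from the mixed products of $U$ with $V$, using $\{f_k,f_m^{\dagger}\}=\delta_{km}$, and give
\begin{equation*}
\sum_{k}\bigl(U_{ik}V_{jk}+V_{ik}U_{jk}\bigr) \;=\; (UV^{\intercal}+VU^{\intercal})_{ij},
\end{equation*}
which vanishes by the second hypothesis. This completes the check of both CAR relations and therefore proves that $B_{U,V}$ is a well-defined *-algebra endomorphism of $\mathcal{F}_m$; by the proposition preceding this one it is then a canonical transformation.

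I do not expect a real obstacle here: the argument is essentially a bookkeeping exercise in bilinear expansion, and the two stated matrix identities are precisely tuned to kill the only nontrivial contributions. The one place to be careful is in tracking complex conjugates when applying the *-operation, and in noting that the definition only constrains $B_{U,V}$ on the generators $f_i$ (its action on $f_i^{\dagger}$ is forced), so that the two conditions on $(U,V)$ are exactly the full content of well-definedness and no additional constraints are hidden.
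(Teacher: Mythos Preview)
Your proposal is correct and follows essentially the same approach as the paper: expand both anticommutators $\{B_{U,V}(f_i),B_{U,V}(f_j^{\dagger})\}$ and $\{B_{U,V}(f_i),B_{U,V}(f_j)\}$ into four bilinear sums, use the CAR to kill the homogeneous ones, and identify the surviving cross-terms with $(UU^{\dagger}+VV^{\dagger})_{ij}$ and $(UV^{\intercal}+VU^{\intercal})_{ij}$ respectively. The paper carries out these expansions explicitly line by line, but the structure of the argument is identical to yours.
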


\begin{proof}
To prove that $B_{U,V}$ is well defined, we just need to compute the canonical anticommutation relations and show that they are preserved. First, notice that
\begin{equation}
\begin{split}
&\{B_{U,V}(f_{i});B_{U,V}(f^{\dagger}_{j})\}=\left\{\sum^{m}_{k=1}U_{i,k}f_{k}+\sum^{m}_{l=1}V_{i,l}f^{\dagger}_{l};\sum^{m}_{p=1}\Bar{V}_{j,p}f_{p}+\sum^{m}_{q=1}\Bar{U}_{j,q}f^{\dagger}_{q}\right\}=\\
=&\sum^{m}_{k,p=1}U_{i,k}\Bar{V}_{j,p}\{f_{k};f_{p}\}+\sum^{m}_{k,q=1}U_{i,k}\Bar{U}_{j,q}\{f_{k};f^{\dagger}_{q}\}+\sum^{m}_{l,p=1}V_{i,l}\Bar{V}_{j,p}\{f^{\dagger}_{l};f_{p}\}+\sum^{m}_{l,q=1}V_{i,l}\Bar{U}_{j,q}\{f^{\dagger}_{l};f^{\dagger}_{q}\}=\\
=&\sum^{m}_{k,q=1}U_{i,k}\Bar{U}_{j,q}\delta_{k,q}+\sum^{m}_{l,p=1}V_{i,l}\Bar{V}_{j,p}\delta_{l,p}=\sum^{m}_{k=1}U_{i,k}\Bar{U}_{j,k}+\sum^{m}_{l=1}V_{i,l}\Bar{V}_{j,l}=(UU^{\dagger}+VV^{\dagger})_{i,j}=\delta_{i,j}.
\end{split}
\end{equation}
Lastly, we have that
\begin{equation}
\begin{split}
&\{B_{U,V}(f_{i});B_{U,V}(f_{j})\}=\left\{\sum^{m}_{k=1}U_{i,k}f_{k}+\sum^{m}_{l=1}V_{i,l}f^{\dagger}_{l};\sum^{m}_{p=1}U_{j,p}f_{p}+\sum^{m}_{q=1}V_{j,q}f^{\dagger}_{q}\right\}=\\
=&\sum^{m}_{k,p=1}U_{i,k}U_{j,p}\{f_{k};f_{p}\}+\sum^{m}_{k,q=1}U_{i,k}V_{j,q}\{f_{k};f^{\dagger}_{q}\}+\sum^{m}_{l,p=1}V_{i,l}U_{j,p}\{f^{\dagger}_{l};f_{p}\}+\sum^{m}_{l,q=1}V_{i,l}V_{j,q}\{f^{\dagger}_{l};f^{\dagger}_{q}\}=\\
=&\sum^{m}_{k,q=1}U_{i,k}V_{j,q}\delta_{k,q}+\sum^{m}_{l,p=1}V_{i,l}U_{j,p}\delta_{l,p}=\sum^{m}_{k=1}U_{i,k}V_{j,k}+\sum^{m}_{l=1}V_{i,l}U_{j,l}=(UV^{\intercal}+VU^{\intercal})_{i,j}=0.
\end{split}
\end{equation}
\end{proof}

These transformations are not just canonical, but they also send generators into linear combinations of generators and their conjugates. This implies that local basis changes for fermions just redefine creation operators as linear combinations of the originals, which is exactly what happens when redefining the basis of states occupied by a single particle system. This behaviour is what we aim for when looking for a fermionic-anyon local basis change in the next section.

\subsection{Changes-of-basis induced by fractional Jordan-Wigner transforms}

From the characterisation of *-algebra homomorphisms as canonical transformations, we find that there is a natural way to define local basis changes for fermionic-anyons with any exchange parameter $\varphi$. 

\begin{defi}\label{def: bogo_ferm_any}
Given a Bogoliubov transformation for fermions $B_{U,V}:\mathcal{F}_{m}\rightarrow\mathcal{F}_{m}$ , we call the *-algebra homomorphism $B^{\varphi}_{U,V}:\mathcal{A}^{\varphi}_{m}\rightarrow\mathcal{A}^{\varphi}_{m}$ defined by
\begin{equation}
B^{\varphi}_{U,V}=\left(J^{-1}_{\varphi}\circ B_{U,V}\circ J_{\varphi}\right)    
\end{equation}
an \textbf{induced Bogoliubov transformation} over $\mathcal{A}^{\varphi}_{m}$. When $V=0$, we call it an \textbf{induced local change of basis}.
\end{defi}

Note that this is completely analogous to the similarity transformation between linear operators induced by a basis change in the target vector space. Our observations about the exchange transmutation maps being akin to basis changes are thus made completely clear. The next task is to compute the action of fermionic-anyon Bogoliubov transformations over generators

\begin{prop}
The map $B^{\varphi}_{U,V}:\mathcal{A}^{\varphi}_{m}\rightarrow\mathcal{A}^{\varphi}_{m}$ acts over generators as
\begin{equation}
B^{\varphi}_{U,V}(a_{\varphi,i})=\left(\sum^{m}_{j=1}U_{i,j}a_{\varphi,j}e^{i\varphi\sum^{j-1}_{l=1}n_{\varphi,l}}+\sum^{m}_{k=1}V_{i,k}a^{\dagger}_{\varphi,k}e^{-i\varphi\sum^{k-1}_{l=1}n_{\varphi,l}}\right)e^{-i\varphi\sum^{i-1}_{q=1}\Bar{n}_{\varphi,q}},
\end{equation}
where
\begin{align*}
\Bar{n}_{\varphi,q}&=[B_{U,V}(n_{0,q})]_{\varphi}\\&=\left[\sum^{m}_{j_{1},j_{2}=1}U_{q,j_{1}}U^{*}_{q,j_{2}}f_{j_{1}}f^{\dagger}_{j_{2}}+\sum^{m}_{j_{1},j_{2}=1}U_{q,j_{1}}V^{*}_{q,j_{2}}f_{j_{1}}f_{j_{2}}+\sum^{m}_{j_{1},j_{2}=1}V_{q,j_{1}}U^{*}_{q,j_{2}}f^{\dagger}_{j_{1}}f^{\dagger}_{j_{2}}+\sum^{m}_{j_{1},j_{2}=1}V_{q,j_{1}}V^{*}_{q,j_{2}}f^{\dagger}_{j_{1}}f_{j_{2}}\right]_{\varphi}
\end{align*}
\end{prop}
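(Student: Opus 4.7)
The plan is to unwind the definition $B^{\varphi}_{U,V} = J^{-1}_{\varphi} \circ B_{U,V} \circ J_{\varphi}$ applied to the generator $a_{\varphi,i}$, step by step, exploiting that each of the three constituent maps is a $*$-algebra homomorphism and therefore commutes with any formal power series, including the exponential, in its arguments (by \cref{cor: val_alg_rel}). First I would apply $J_{\varphi}$ via \cref{eq: jw-trad} to obtain $J_{\varphi}(a_{\varphi,i}) = f_{i}\, e^{-i\varphi \sum_{l=1}^{i-1} n_{0,l}}$, then apply $B_{U,V}$ to get $B_{U,V}(f_{i})\, e^{-i\varphi \sum_{l=1}^{i-1} B_{U,V}(n_{0,l})}$, with the exponential passing through intact as a product.

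Next I would apply $J^{-1}_{\varphi}$ to this product. Acting on the exponential, it yields $e^{-i\varphi \sum_{q=1}^{i-1} \bar{n}_{\varphi,q}}$ with $\bar{n}_{\varphi,q} = [B_{U,V}(n_{0,q})]_{\varphi}$, which is exactly how the rightmost factor of the statement emerges. On the prefactor $B_{U,V}(f_{i}) = \sum_{j} U_{i,j} f_{j} + \sum_{k} V_{i,k} f^{\dagger}_{k}$, linearity reduces the task to evaluating $J^{-1}_{\varphi}(f_{j})$ and $J^{-1}_{\varphi}(f^{\dagger}_{k})$. The first is given directly by \cref{eq: jw-trad-inv}; for the second I would invoke the $*$-property to write $J^{-1}_{\varphi}(f^{\dagger}_{k}) = (J^{-1}_{\varphi}(f_{k}))^{\dagger} = e^{-i\varphi \sum_{l=1}^{k-1} n_{\varphi,l}}\, a^{\dagger}_{\varphi,k}$, and then reorder using the identity $[n_{\varphi,l}, a^{\dagger}_{\varphi,k}] = \delta_{lk}\, a^{\dagger}_{\varphi,k}$ derived inside the proof of \cref{prop: ferm_any_Fock_rep}. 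Since the sum runs only over $l < k$, that commutator vanishes identically and the exponential slips freely to the right of $a^{\dagger}_{\varphi,k}$, producing exactly the first factor of the statement.

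To pin down the explicit expression for $\bar{n}_{\varphi,q}$, I would expand $B_{U,V}(n_{0,q}) = B_{U,V}(f^{\dagger}_{q}) B_{U,V}(f_{q})$ by distributing both factors over their defining sums, collecting the four bilinear combinations proportional to $UU^{*}$, $UV^{*}$, $VU^{*}$, and $VV^{*}$. Applying the $[\cdot]_{\varphi}$ bracket to the resulting fermionic polynomial then gives the stated formula verbatim: because $J^{-1}_{\varphi}$ is linear, this last step merely relabels each fermionic monomial as its anyonic counterpart and requires no further computation.

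The hard part is really just careful bookkeeping: tracking the precise placement of each exponential factor relative to the creation and annihilation operators, and verifying at each step that the range of summation inside every exponential stays strictly below the mode of the adjacent operator so that the relevant commutators vanish. No novel algebraic identity is required beyond those already established; the proposition follows mechanically from the $*$-algebra homomorphism structure of $J_{\varphi}$ and $B_{U,V}$ combined with the fermionic-anyon number-operator commutation relations.
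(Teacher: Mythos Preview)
Your proposal is correct and follows essentially the same route as the paper's proof: apply $J_{\varphi}$, then $B_{U,V}$, then $J^{-1}_{\varphi}$, pushing each homomorphism through the exponential at every stage. If anything, you are slightly more careful than the paper, which silently writes $[f^{\dagger}_{k}]_{\varphi} = a^{\dagger}_{\varphi,k}e^{-i\varphi\sum_{l<k}n_{\varphi,l}}$ without justifying the operator ordering; your explicit invocation of $[n_{\varphi,l},a^{\dagger}_{\varphi,k}]=\delta_{lk}a^{\dagger}_{\varphi,k}$ for $l<k$ fills that small gap.
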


\begin{proof}
Let's compute $B^{\varphi}_{U,V}(a_{\varphi,i})$. 
\begin{align*}
B^{\varphi}_{U,V}(a_{\varphi,i})&=\left(J^{-1}_{\varphi}\circ B_{U,V}\circ J_{\varphi}\right)(a_{\varphi,i})\\&=\left(J^{-1}_{\varphi}\circ B_{U,V}\right)\left(f_{i}e^{-i\varphi\sum^{i-1}_{q=1}n_{0,q}}\right)\\&=J^{-1}_{\varphi}\left(B_{U,V}(f_{i})B_{U,V}\left(e^{-i\varphi\sum^{i-1}_{q=1}n_{0,q}}\right)\right)\\
&=J^{-1}_{\varphi}\left(B_{U,V}(f_{i})e^{-i\varphi\sum^{i-1}_{q=1}B_{U,V}(n_{0,q})}\right)\\&=J^{-1}_{\varphi}\left(\sum^{m}_{j=1}U_{i,j}f_{j}+\sum^{m}_{k=1}V_{i,k}f^{\dagger}_{k}\right)\left(e^{-i\varphi\sum^{i-1}_{q=1}J^{-1}_{\varphi}(B_{U,V}(n_{0,q}))}\right)\\
&=\left(\sum^{m}_{j=1}U_{i,j}[f_{j}]_{\varphi}+\sum^{m}_{k=1}V_{i,k}[f^{\dagger}_{k}]_{\varphi}\right)e^{-i\varphi\sum^{i-1}_{q=1}[B_{U,V}(n_{0,q})]_{\varphi}}\\&=\left(\sum^{m}_{j=1}U_{i,j}a_{\varphi,j}e^{i\varphi\sum^{j-1}_{l=1}n_{\varphi,l}}+\sum^{m}_{k=1}V_{i,k}a^{\dagger}_{\varphi,k}e^{-i\varphi\sum^{j-1}_{l=1}n_{\varphi,l}}\right)e^{-i\varphi\sum^{i-1}_{q=1}\Bar{n}_{\varphi,q}}
\end{align*}
\end{proof}

It is perplexing that a transformation so complex is the correct transformation for implementing a local basis change for fermionic anyons. Nevertheless, by all the considerations made up to this moment, it is necessarily the case $B^{\varphi}_{U,0}$ induces a map that acts over single-particle fermionic-anyon states in the same way as $B_{U,0}$ does for fermionic states, implying that these maps have the same physical interpretation acting over the Fock-space. With all of these results in place, we are now able to prove our main theorem.

\subsection{Single-particle entanglement for fermionic-anyons}\label{app: single particle entanglement}

First, let us restate the definition of separability for fermions and fermionic anyons in terms of the more general notation we have been using so far.

\begin{defi}
Let $\ket{\psi}_{0}\in\mathcal{H}_{\mathcal{F}_{m}}$ be a pure fermionic state. We say that $\ket{\psi}_{0}$ is separable, if there is a local change of basis $B_{U,0}:\mathcal{F}_{m}\rightarrow\mathcal{F}_{m}$ such that
\begin{equation}\label{eq: fer_sep}
B_{U,0}([\psi]_{0})=[\mathbf{x}]_{0},
\end{equation}
with $[\mathbf{x}]_{0}=f^{\dagger x_{1}}_{1}\cdots f^{\dagger x_{m}}_{m}$, for some $\mathbf{x}\in\{0,1\}^{m}$.
\end{defi}

This definition implies capturing the notion that a state with no ``particle to particle" correlations should be locally equivalent to some Fock basis state, which is a state where each mode has a well-defined occupation number. Now, using our machinery, we are in a position to state the definition of fermionic-anyon separability.

\begin{defi}
Let $\ket{\phi}_{\varphi}\in\mathcal{H}_{\mathcal{A}^{\varphi}_{m}}$ be a pure fermionic-anyon state. We say that $\ket{\phi}_{\varphi}$ is separable, if there is a local change of basis $B^{\varphi}_{U,0}:\mathcal{A}^{\varphi}_{m}\rightarrow\mathcal{A}^{\varphi}_{m}$ such that
\begin{equation}
B^{\varphi}_{U,0}([\phi]_{\varphi})=[\mathbf{x}]_{\varphi},
\end{equation}
with $[\mathbf{x}]_{0}=a^{\dagger x_{1}}_{\varphi,1}\cdots a^{\dagger x_{m}}_{\varphi,m}$, for some $\mathbf{x}\in\{0,1\}^{m}$.
\end{defi}

This definition comes from just applying the fractional Jordan-Wigner transform on both sides of the Eq.~\eqref{eq: fer_sep}, and using the $\varphi$-representation of the operators creating the states. 

{\begin{defi}[Single-particle entanglement for fermionic \cite{gigena2015}]
    For a $N$-fermionic system with $m$-modes as shown in Eq.\eqref{cor: ferm-any-stat-equiv}, the single-particle entanglement can be quantified through the minimization over all the possible mode representations 
\begin{equation}\label{eq: min_ent}
    E_{SP}(\ket{\psi}) = \min_{f}\sum_i H(\langle f_i^{\dagger}f_i\rangle,\langle f_if_i^{\dagger}\rangle ),
\end{equation}
where $H(p,1-p)=-p\log p - (1-p)\log (1-p)$ is the Shannon binary entropy, and $f_i$ are the fermionic operators, transformed accordingly to the Bogoliubov transformation.
\end{defi}}

{\begin{teo}[Single-particle entanglement for fermionic-anyons]
   For a fermionic-anyons state with a fixed number of particles, there exists a mode representation such that the single particle of fermionic-anyons has the same eigenvalues as the mapped fermionic single-particle state, that minimizes Eq.\eqref{eq: min_ent}
\end{teo}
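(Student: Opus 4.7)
The plan is to transfer the known fermionic result on single-particle entanglement to the fermionic-anyon setting by using the fractional Jordan-Wigner map $J_{\varphi}$ as a \emph{basis change} on the operator algebra, and to carefully track what this does to number operators and their expectation values. The key facts I will use are: (i) $J_{\varphi}$ is a $*$-algebra homomorphism, so it preserves all algebraic identities (Corollary on valid algebraic relations); (ii) $J_{\varphi}$ preserves number operators, i.e.\ $J_{\varphi}(n_{\varphi,i})=n_{0,i}$ (Proposition on group structure); (iii) Fock-basis amplitudes of states are invariant under the rewriting $[\psi]_{\varphi}\ket{0}_{\varphi}\leftrightarrow[\psi]_{0}\ket{0}_{0}$ (Corollary on fermionic-anyon statistical equivalence).

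First, take a fermionic-anyon state $\ket{\psi}_{\varphi}=[\psi]_{\varphi}\ket{0}_{\varphi}$ with fixed particle number $N$, and produce its fermionic partner $\ket{\psi}_{0}=[\psi]_{0}\ket{0}_{0}$ with identical Fock amplitudes. Apply the fermionic result of \cite{gigena2015}: there is a particle-number-preserving local change of basis $B_{U,0}\colon\mathcal{F}_{m}\to\mathcal{F}_{m}$ whose transformed number operators $\tilde{n}_{0,i}=B_{U,0}(f_{i})^{\dagger}B_{U,0}(f_{i})$ have expectation values $\lambda_{i}=\bra{\psi}_{0}\tilde{n}_{0,i}\ket{\psi}_{0}$ equal to the eigenvalues of the one-body reduced density matrix of $\ket{\psi}_{0}$, and such that $\sum_{i}H(\lambda_{i},1-\lambda_{i})$ attains the minimum in Eq.~\eqref{eq: min_ent}.

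Second, define the fermionic-anyon mode representation by the induced Bogoliubov map $B^{\varphi}_{U,0}=J^{-1}_{\varphi}\circ B_{U,0}\circ J_{\varphi}$ of Definition~\ref{def: bogo_ferm_any}, which is a canonical transformation and so a legitimate change of mode representation. Let $\tilde{a}_{\varphi,i}:=B^{\varphi}_{U,0}(a_{\varphi,i})$ and $\tilde{n}_{\varphi,i}:=\tilde{a}^{\dagger}_{\varphi,i}\tilde{a}_{\varphi,i}$. Because $B^{\varphi}_{U,0}$ is a $*$-algebra homomorphism and because $J^{-1}_{\varphi}$ commutes with taking number operators (by fact~(ii) above applied inside $J^{-1}_{\varphi}\bigl(B_{U,0}(f_{i})^{\dagger}B_{U,0}(f_{i})\bigr)$), one obtains $\tilde{n}_{\varphi,i}=[\tilde{n}_{0,i}]_{\varphi}$. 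Finally, the expectation value $\bra{\psi}_{\varphi}\tilde{n}_{\varphi,i}\ket{\psi}_{\varphi}$ equals $\bra{\psi}_{0}\tilde{n}_{0,i}\ket{\psi}_{0}=\lambda_{i}$, which I verify by expanding both sides in the respective Fock bases and using fact~(iii) together with the observation that a product of number operators is mapped to itself by $J_{\varphi}^{-1}$ (it contains no bare creation/annihilation generators on which the Jordan-Wigner string would act nontrivially). Hence in the induced fermionic-anyon mode representation the single-particle reduced state has eigenvalues $\{\lambda_{i}\}$, matching those of the optimal fermionic state.

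The main obstacle I anticipate is justifying the very last step cleanly: the matrix element identity $\bra{\psi}_{\varphi}[O]_{\varphi}\ket{\psi}_{\varphi}=\bra{\psi}_{0}[O]_{0}\ket{\psi}_{0}$ for $O$ a product of number operators must be argued from the $*$-algebra properties of $J_{\varphi}$ plus the amplitude-preservation corollary, since there is no direct Hilbert-space isomorphism intertwining $J_{\varphi}$ on operators and some unitary on states in general. The cleanest route is to note that $[O]_{\varphi}\ket{\psi}_{\varphi}$ and $[O]_{0}\ket{\psi}_{0}$ again have identical Fock amplitudes (apply the corollary to the state $[O]_{0}\ket{\psi}_{0}$), and then take the overlap with $\ket{\psi}_{\varphi}$ and $\ket{\psi}_{0}$ respectively. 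A related technical point is to check that the local-change-of-basis subclass $B_{U,0}$ (rather than a full Bogoliubov transformation with $V\neq 0$) is what is needed, which follows from the fixed particle-number hypothesis.
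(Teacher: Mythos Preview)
Your proposal is correct and follows essentially the same route as the paper: map the anyonic state to its fermionic partner via $J_{\varphi}$, invoke the Gigena--Rossignoli result to obtain the diagonalizing fermionic mode basis $\{\tilde f_i\}$, and transport this basis back to the anyonic side via the induced map $B^{\varphi}_{U,0}=J_{\varphi}^{-1}\circ B_{U,0}\circ J_{\varphi}$, using the $*$-homomorphism property and number-operator invariance to equate expectation values. Your treatment is in fact more explicit than the paper's own proof, which simply asserts $\langle\tilde a_k^{\dagger}\tilde a_k\rangle_{\varphi}=\langle\tilde f_k^{\dagger}\tilde f_k\rangle_{0}$ by citing invariance of number operators under the Jordan--Wigner map. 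One small remark: your aside that ``a product of number operators is mapped to itself by $J_{\varphi}^{-1}$'' is not what is actually doing the work, since $\tilde n_{0,i}=\sum_{j,k}U_{ij}^{*}U_{ik}\,f_j^{\dagger}f_k$ is not such a product in the original basis; the correct justification is precisely the general amplitude-preservation argument you give in your final paragraph, and that suffices on its own.
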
}

{\begin{proof}
For a $N$ particles fermionic-anyon state $\ket{\psi}_{\varphi}$, there exist a $N$-fermionic state $\ket{\psi}_{0} = J_{\varphi}[\ket{\psi}_{\varphi}]$. For a given fermionic mode representation described by the anti-symmetric operator $\{f_k^{\dagger},f_k \}_k$, the single-particle fermionic state $\rho_{\text{sp}}^{(0)}$ with elements
\begin{eqnarray}
    (\rho_{\text{sp}}^{(0)})_{kl} = \langle f_l^{\dagger} f_k \rangle\subtiny{0}{0}{\ket{\psi}_{0}}.
\end{eqnarray}
Moreover, there exists a mode representation $\{\tilde{f}_k^{\dagger},\tilde{f}_k \}_k$, that minimizes the entropy in Eq.\eqref{eq: min_ent}, and diagonalizes $\rho_{\text{sp}}^{(0)}$ \cite{gigena2015}. It is connected with $\{f_k^{\dagger},f_k \}_k$ by a Bogouliubov transformation $U$, such that $\tilde{f}_k^{\dagger} = U {f}_k^{\dagger} U^{\dagger}$, then
\begin{equation}
    (\tilde{rho}_{\text{sp}}^{(0)})_{kl} = (U\rho_{\text{sp}}^{(0)}U^{\dagger})_{kl} = \langle \tilde{f}_l^{\dagger} \tilde{f}_k \rangle\subtiny{0}{0}{\ket{\psi}_{0}}\delta_{k,l} = \omega_k \delta_{k,l},
\end{equation}
where $\omega_k$ are the eigenvalues of single-particle state. The Shannon entropy of $\{\omega_k\}_k$ coincides with the von Neumann entropy of $\rho_{\text{sp}}^{(0)}$, as expected. On the other hand,  as a consequence of Corollary.\ref{cor: ferm-any-stat-equiv}, the operator number $N_k = f_k^{\dagger} f_k$ is invariant under JWT. Therefore, there exists a fermionic-anyon mode representation $\{\tilde{a}_k^{\dagger},\tilde{a}_k \}_k$ such that
\begin{equation}
    \langle \tilde{a}_k^{\dagger}\tilde{a}_k\rangle\subtiny{0}{0}{\ket{\psi}_{\varphi}}  = \langle \tilde{f}_k^{\dagger} \tilde{f}_k \rangle\subtiny{0}{0}{\ket{\psi}_{0}},
\end{equation}
for $\tilde{f}_k = J_{\varphi}[\tilde{a}_k]$. It implies that there exists a single particle mode representation that reflects the entanglement between the particles, despite the fact that the von Neumann entropy of the single particle, obtained by means of the partial trace in any basis, does not represent the separability of the fermionic-anyons system. 
\end{proof}}

\subsection{Fermionic-anyon Slater decomposition}\label{app: slater decomposition}{app: single particle entanglement}
In the special case of general two-particle fermionic states, we are not just able to describe separability, but also able to define a normal form.
\begin{defi}
Let $\ket{\psi}_{0}$ be a state of the form
\begin{equation}
\ket{\psi}_{0}=\frac{1}{2}\sum_{i<j}c_{i,j}f^{\dagger}_{i}f^{\dagger}_{j}\ket{0}_{0}=\sum^{m}_{i,j=1}v_{i,j}f^{\dagger}_{i}f^{\dagger}_{j}\ket{0}_{0},
\end{equation}
where $v_{j,i}=-v_{i,j}$, $\sum^{m}_{i,j=1}\Bar{v}_{i,j}v_{j,i}=-1/2$ and $v_{i,j}=1/2(c_{i,j})$ for $i<j$. Then, the \textit{Slater decomposition} of $\ket{\psi}_{0}$ has the form
\begin{equation}
\ket{\psi}_{0}=\sum^{\mu}_{k=1}z_{k}\Tilde{f}^{\dagger}_{2k-1}\Tilde{f}^{\dagger}_{2k}\ket{0}_{0},
\end{equation}
where $B_{U,0}(\Tilde{f}^{\dagger}_{i})=f^{\dagger}_{i}$ with $U$ being such that $UVU^{\intercal}=Z$, where $V$ is the antisymmetric matrix with coefficients $v_{i,j}$ and $Z=\diag{Z_{0},Z_{1},...,Z_{\mu}}$ where
\begin{equation}
Z_{0}=0\text{, and }Z_{i}=\begin{bmatrix}
0 & z_{i}\\
-z_{i} & 0
\end{bmatrix}
\end{equation}
\end{defi}

We now show that this normal form also exists for general two-particle fermionic anyons states and that the coefficients of the expansion are the same as the analogous expansion for a fermionic state with the same initial amplitudes. In other words, now it is time to prove the following Theorem.
\begin{teo}[Schmidt decomposition for fermionic anyons]\label{theorem: SM_SD for anyons}
Any pure state of two fermionic anyons with a fixed number of modes has a Schmidt decomposition with the same expansion coefficients as its Schliemann fermionic state counterpart.
\end{teo}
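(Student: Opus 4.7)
The plan is to prove the theorem by pulling back the standard Schliemann decomposition from the fermionic algebra $\mathcal{F}_m$ to the anyonic algebra $\mathcal{A}^{\varphi}_m$ using the fractional Jordan-Wigner transform, and leveraging the fact that amplitudes in the Fock basis are preserved under this map (Corollary \ref{cor: ferm-any-stat-equiv}).

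First, I would fix a pure two-anyon state $\ket{\psi}_{\varphi} = \sum_{i<j} w_{i,j}\, a^{\dagger}_{\varphi,i} a^{\dagger}_{\varphi,j}\ket{0}_{\varphi}$ with $\sum_{i<j}|w_{i,j}|^2 = 1$, and form its fermionic counterpart $\ket{\psi}_{0}$ with the \emph{same} Fock-basis amplitudes, i.e.\ $[\psi]_{0} = \sum_{i<j} w_{i,j}\, f^{\dagger}_{i} f^{\dagger}_{j}$, so that $[\psi]_{\varphi} = J^{-1}_{\varphi}([\psi]_{0})$. By the classical Schliemann result \cite{schliemann_quantum_2001}, there exists a single-particle unitary $U$ (from the diagonalisation $U V U^{\intercal} = Z$ of the antisymmetric coefficient matrix $V$) and an associated local change of basis $B_{U,0}:\mathcal{F}_m \to \mathcal{F}_m$ such that
\begin{equation}
B_{U,0}([\psi]_{0})\ket{0}_0 = \sum_{\mu} z_{\mu}\, \tilde{f}^{\dagger}_{2\mu-1}\tilde{f}^{\dagger}_{2\mu}\ket{0}_0,
\end{equation}
with Schliemann coefficients $z_\mu$ determined by the spectrum of $V$.

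Next, I would apply $J^{-1}_{\varphi}$ to both sides and use the induced Bogoliubov machinery from Definition \ref{def: bogo_ferm_any}, setting $[U_{SD}]_{\varphi} := B^{\varphi}_{U,0} = J^{-1}_{\varphi} \circ B_{U,0} \circ J_{\varphi}$. Since $J^{-1}_{\varphi}$ is a $*$-algebra homomorphism, it preserves linearity, products and conjugation, and since both sides of the above equation, when acted on the vacuum, are operator expressions whose transformed versions annihilate $\ket{0}_\varphi$ consistently (as $J_\varphi$ fixes the vacuum and the number operators), the image of $\ket{\psi}_\varphi$ under $[U_{SD}]_\varphi$ will be precisely $\sum_{\mu} z_{\mu}\, [\tilde{f}^{\dagger}_{2\mu-1}\tilde{f}^{\dagger}_{2\mu}]_{\varphi}\ket{0}_\varphi$. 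Identifying the primed anyonic operators $\alpha^{(1)\dagger}_{2\mu}, \alpha^{(2)\dagger}_{2\mu-1}$ with $J^{-1}_\varphi(\tilde{f}^{\dagger}_{2\mu-1}), J^{-1}_\varphi(\tilde{f}^{\dagger}_{2\mu})$ (up to the JWT string phases that commute through consistently on the vacuum sector), one reads off that the expansion coefficients $\omega_\mu$ coincide with the $z_\mu$.

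Finally, to close the argument I would invoke Corollary \ref{cor: ferm-any-stat-equiv} directly at the level of amplitudes: since JWT preserves Fock-basis amplitudes of pure states, the squared norms $|\omega_\mu|^2$ must match $|z_\mu|^2$ term-by-term for any two-particle sector (the Schmidt coefficients being determined by these amplitudes up to global phases). The main obstacle I anticipate lies not in the algebraic manipulation itself, but in carefully tracking the JWT phase strings $e^{i\varphi \sum_k n_{\varphi,k}}$ that dress the transformed ladder operators on the anyonic side; one must verify that these strings, when acting on two-particle states, collapse to well-defined Fock phases that can be absorbed into the definition of the $\alpha^{(s)\dagger}_{\mu}$ without altering the magnitudes $|\omega_\mu|$. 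Once this bookkeeping is done --- which follows routinely from the Fock representation of Proposition \ref{prop: ferm_any_Fock_rep} and the commutation rules of $n_{\varphi,k}$ with the creation operators --- the identification of anyonic Schmidt coefficients with fermionic Schliemann coefficients is immediate.
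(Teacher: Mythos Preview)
Your proposal is correct and follows essentially the same route as the paper: map the two-anyon state to its fermionic counterpart via $J_{\varphi}$, apply the Schliemann decomposition there, and pull the result back through $J^{-1}_{\varphi}$ using the induced Bogoliubov transformation $B^{\varphi}_{U,0}$ of Definition~\ref{def: bogo_ferm_any}. The paper carries out exactly the bookkeeping you flag as the main obstacle, computing $[\tilde{f}^{\dagger}_{2k-1}\tilde{f}^{\dagger}_{2k}]_{\varphi}$ explicitly and showing that the accumulated JWT phase string $e^{-i\varphi(\bar{n}_{\varphi,2k-1}+2\sum_{l}\bar{n}_{\varphi,l})}$ acts trivially on $\ket{0}_{\varphi}$ because the transformed annihilation operators still kill the vacuum; this is precisely the verification you anticipate, and once done the coefficients $z_{\mu}$ carry over unchanged.
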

\begin{proof}
First, let $\ket{\psi}_{\varphi}$ be a general fermionic anyon two-particle state as bellow
\begin{equation}
\ket{\psi}_{\varphi}=\sum^{m}_{i,j=1}v_{i,j}a^{\dagger}_{\varphi,i}a^{\dagger}_{\varphi,j}\ket{0}_{\varphi}.
\end{equation}
We have that,
\begin{align*}
\sum^{m}_{i,j=1}v_{i,j}a^{\dagger}_{\varphi,i}a^{\dagger}_{\varphi,j}\ket{0}_{\varphi}&=\sum^{m}_{i,j=1}v_{i,j}[f^{\dagger}_{i}f^{\dagger}_{j}]_{\varphi}\ket{0}_{\varphi}\\&=\sum^{m}_{i,j=1}v_{i,j}[B_{U,0}(\Tilde{f}^{\dagger}_{k}\Tilde{f}^{\dagger}_{l})]_{\varphi}\ket{0}_{\varphi}\\&=\sum^{m}_{i,j=1}v_{i,j}\sum^{m}_{k,l=1}U_{i,k}U_{j,l}[\Tilde{f}^{\dagger}_{k}\Tilde{f}^{\dagger}_{l}]_{\varphi}\ket{0}_{\varphi}\\&
=\sum^{m}_{k,l=1}(UVU^{\intercal})_{k,l}[\Tilde{f}^{\dagger}_{k}\Tilde{f}^{\dagger}_{l}]_{\varphi}\ket{0}_{\varphi}\\&=\sum^{m}_{k,l=1}(Z)_{k,l}[\Tilde{f}^{\dagger}_{k}\Tilde{f}^{\dagger}_{l}]_{\varphi}\ket{0}_{\varphi}\\&=\sum^{\mu}_{k=1}z_{k}[\Tilde{f}^{\dagger}_{2k-1}\Tilde{f}^{\dagger}_{2k}]_{\varphi}\ket{0}_{\varphi}.
\end{align*}
Now, consider that
\begin{align*}
[\Tilde{f}^{\dagger}_{2k-1}\Tilde{f}^{\dagger}_{2k}]_{\varphi}&=[B_{U^{\dagger},0}(f^{\dagger}_{2k-1}f^{\dagger}_{2k})]_{\varphi}=J^{-1}_{\varphi}(B_{U^{\dagger},0}(f^{\dagger}_{2k-1}f^{\dagger}_{2k}))\\&=B^{\varphi}_{U^{\dagger},0}\left(a^{\dagger}_{\varphi,2k-1}a^{\dagger}_{\varphi,2k}e^{-i\varphi(n_{\varphi,2k-1}+2\sum^{2k-2}_{l=1}n_{\varphi,l})}\right)\\
&=B^{\varphi}_{U^{\dagger},0}(a^{\dagger}_{\varphi,2k-1})B^{\varphi}_{U^{\dagger},0}(a^{\dagger}_{\varphi,2k})B^{\varphi}_{U^{\dagger},0}\left(e^{-i\varphi(n_{\varphi,2k-1}+2\sum^{2k-2}_{l=1}n_{\varphi,l})}\right)\\&=\Tilde{a}^{\dagger}_{\varphi,2k-1}\Tilde{a}^{\dagger}_{\varphi,2k}e^{-i\varphi(\Bar{n}_{\varphi,2k-1}+2\sum^{2k-2}_{l=1}\Bar{n}_{\varphi,l})},
\end{align*}
where we have defined $B^{\varphi}_{U,0}(\Tilde{a}^{\dagger}_{\varphi,i})=a^{\dagger}_{\varphi,i}$, and $\Bar{n}_{\varphi,i}=\Tilde{a}^{\dagger}_{\varphi,i}\Tilde{a}_{\varphi,i}$, for all $i$. Thefore, we have that
\begin{align*}
\ket{\psi}_{\varphi}&=\sum^{m}_{i,j=1}v_{i,j}a^{\dagger}_{\varphi,i}a^{\dagger}_{\varphi,j}\ket{0}_{\varphi}\\&=\sum^{\mu}_{k=1}z_{k}\Tilde{a}^{\dagger}_{\varphi,2k-1}\Tilde{a}^{\dagger}_{\varphi,2k}e^{-i\varphi(\Bar{n}_{\varphi,2k-1}+2\sum^{2k-2}_{l=1}\Bar{n}_{\varphi,l})}\ket{0}_{\varphi}\\&=\sum^{\mu}_{k=1}z_{k}\Tilde{a}^{\dagger}_{\varphi,2k-1}\Tilde{a}^{\dagger}_{\varphi,2k}\ket{0}_{\varphi},
\end{align*}
where we used that
\begin{equation}
\ket{0}_{\varphi} = e^{-i\varphi(\Bar{n}_{\varphi,2k-1}+2\sum^{2k-2}_{l=1}\Bar{n}_{\varphi,l})}\ket{0}_{\varphi},
\end{equation}
which is true since $B^{\varphi}_{U,0}$ maps annihilation operators to annihilation operators, thus they still annihilate the fermionic-anyon vacuum.
\end{proof}

\section{Fermionic linear optics}

{Here we introduce the set of fermionic dynamics called fermionic linear optics, known to be classically simulable \cite{terhal_classical_2002}. Then we describe the set of fermionic operators that generate fermionic linear optics and give a special generating set that involves only nearest-neighbour operators.}
We begin by introducing the algebra of quadratic fermionic operators.

\begin{prop}
The quadratic fermionic monomials $f^{\dagger}_{i}f_{j}$, $f^{\dagger}_{i}f^{\dagger}_{j}$ and $f_{i}f_{j}$, together with the identity operator $\mathds{1}$, generate a Lie-algebra with commutation relations given by
\begin{align}
[f_{i}f_{j};f_{k}f_{l}]&=[f^{\dagger}_{i}f^{\dagger}_{j};f^{\dagger}_{k}f^{\dagger}_{l}]=0\\
[f^{\dagger}_{i}f_{j};f^{\dagger}_{k}f^{\dagger}_{l}]&=\delta_{j,l}f^{\dagger}_{i}f^{\dagger}_{k}\\
[f^{\dagger}_{i}f_{j};f^{\dagger}_{k}f_{l}]&=\delta_{j,k}f^{\dagger}_{i}f_{l}-\delta_{i,l}f^{\dagger}_{k}f_{j}\\
[f^{\dagger}_{i}f^{\dagger}_{j};f_{k}f_{l}]&=(\delta_{k,i}\delta_{l,j}-\delta_{l,i}\delta_{k,j})\mathds{1}+(\delta_{j,k}f^{\dagger}_{i}f_{l}+\delta_{l,i}f^{\dagger}_{j}f_{k})-(\delta_{k,i}f^{\dagger}_{j}f_{l}+\delta_{l,j}f^{\dagger}_{i}f_{k})
\end{align}
\end{prop}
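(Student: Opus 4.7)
The plan is a direct verification of each bracket using the canonical anticommutation relations of $\mathcal{F}_m$,
\[
\{f_i, f_j^\dagger\} = \delta_{ij}, \qquad \{f_i, f_j\} = \{f_i^\dagger, f_j^\dagger\} = 0,
\]
together with the standard Leibniz-type identity
\[
[AB, CD] = A\{B,C\}D - \{A,C\}BD + CA\{B,D\} - C\{A,D\}B,
\]
which reduces every commutator of quadratic monomials to a sum of terms obtained from anticommutators. This identity follows from two applications of $[AB,C] = A\{B,C\} - \{A,C\}B$, itself a one-line expansion.

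With this in hand, the first equation is immediate: in $[f_i f_j, f_k f_l]$ and $[f_i^\dagger f_j^\dagger, f_k^\dagger f_l^\dagger]$ every anticommutator produced by the identity is between two operators of the same type and therefore vanishes. For $[f_i^\dagger f_j, f_k^\dagger f_l]$ the only surviving anticommutators are $\{f_j, f_k^\dagger\} = \delta_{jk}$ and $\{f_i^\dagger, f_l\} = \delta_{il}$, yielding $\delta_{jk} f_i^\dagger f_l - \delta_{il} f_k^\dagger f_j$ directly. For $[f_i^\dagger f_j, f_k^\dagger f_l^\dagger]$ the surviving anticommutators are $\{f_j, f_k^\dagger\}$ and $\{f_j, f_l^\dagger\}$, and after collecting terms one obtains a linear combination of $f_i^\dagger f_l^\dagger$ and $f_i^\dagger f_k^\dagger$ (which I would compute carefully and compare with the printed formula).

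The main obstacle is $[f_i^\dagger f_j^\dagger, f_k f_l]$, where all four anticommutators produced by the Leibniz identity are non-trivial, and produce mixed products such as $f_k f_i^\dagger$ that must be rearranged once more, via $\{f_i^\dagger, f_k\} = \delta_{ik}$, to reach normal-ordered form. It is precisely this secondary reordering that generates the scalar contribution, with the antisymmetric coefficient $\delta_{ki}\delta_{lj} - \delta_{li}\delta_{kj}$ inherited from the antisymmetry of $f_k f_l$ under index exchange. The remaining pieces collect into the four $f^\dagger f$ terms of the stated expression; the bookkeeping here is the only part of the argument that demands attention to signs.

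Finally, Lie-algebra closure is automatic once the four bracket relations hold, since the span of $\{\mathds{1}\}\cup\{f_i^\dagger f_j,\, f_i^\dagger f_j^\dagger,\, f_i f_j\}_{i,j}$ is then closed under the commutator, and the commutator bracket on any associative algebra is antisymmetric and satisfies the Jacobi identity. In particular, this identifies the resulting Lie algebra with $\mathfrak{so}(2m)\oplus\mathbb{R}\mathds{1}$ in the usual Majorana presentation, which is the structural fact exploited in the next section to describe fermionic linear optics.
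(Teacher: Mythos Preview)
Your approach is correct and matches the paper's: the paper's proof consists of the single sentence ``This is proven by straightforward calculation using the canonical anticommutation relations,'' and your proposal is precisely that calculation, made explicit via the Leibniz-type identity $[AB,CD]=A\{B,C\}D-\{A,C\}BD+CA\{B,D\}-C\{A,D\}B$. Your caution about comparing the $[f_i^\dagger f_j,f_k^\dagger f_l^\dagger]$ computation with the printed formula is well placed, since the full computation yields two terms $\delta_{jk}f_i^\dagger f_l^\dagger-\delta_{jl}f_i^\dagger f_k^\dagger$ rather than the single term displayed in the statement.
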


\begin{proof}
This is proven by straightforward calculation using the canonical anticommutation relations.
\end{proof}

\begin{coro}
The sub-algebra generated by the fermionic number operators $n_{0,i}$, is the maximal commuting sub-algebra of the Lie-Algebra of quadratic fermionic monomials. The commutation relations
\begin{subequations}
\begin{align}
[n_{0,i};f^{\dagger}_{k}f_{l}]&=(\delta_{i,k}-\delta_{i,l})f^{\dagger}_{k}f_{l}\\
[n_{0,i};f_{k}f_{l}]&=-(\delta_{i,k}+\delta_{i,l})f_{k}f_{l}\\
[n_{0,i};f^{\dagger}_{k}f^{\dagger}_{l}]&=(\delta_{i,k}+\delta_{i,l})f^{\dagger}_{k}f^{\dagger}_{l}
\end{align}
\end{subequations}
imply that this sub-algebra is the $\mathfrak{so}_{2m}$ Lie-algebra.
\end{coro}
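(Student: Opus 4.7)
The plan is to establish the corollary in three steps: verify the three displayed commutation relations by direct application of the canonical anticommutation relations (CARs), prove that $\{n_{0,1},\ldots,n_{0,m}\}$ is a maximal abelian subalgebra of the quadratic Lie algebra, and identify that Lie algebra with $\mathfrak{so}_{2m}$ via a Majorana change of basis. For the first step I would expand each commutator using the derivation rule $[AB,C]=A[B,C]+[A,C]B$ and the odd-element identity $[X,YZ]=\{X,Y\}Z-Y\{X,Z\}$, applied with $X\in\{f_{i},f^{\dagger}_{i}\}$. Every commutator then collapses to a short sum of Kronecker deltas times the original quadratic monomial, with weight $\delta_{i,k}-\delta_{i,l}$ in the first relation and $\pm(\delta_{i,k}+\delta_{i,l})$ in the other two; mutual commutativity $[n_{0,i},n_{0,j}]=0$ is the specialisation $k=l=j$ of the first.

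For maximality I would observe that every quadratic monomial outside the span of the number operators is a weight vector of $\mathrm{ad}(n_{0,i})$ with a nonzero weight: a monomial $f^{\dagger}_{k}f_{l}$ with $k\neq l$ has weight $e_{k}-e_{l}\in\mathbb{R}^{m}$, while $f^{\dagger}_{k}f^{\dagger}_{l}$ and $f_{k}f_{l}$ with $k<l$ carry weights $\pm(e_{k}+e_{l})$. Since these are all nonzero, no element outside the span of $\{n_{0,1},\ldots,n_{0,m},\mathds{1}\}$ commutes with every $n_{0,i}$, settling maximality and already exposing the $D_{m}$ root system that anticipates the next step.

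To identify the full quadratic algebra with $\mathfrak{so}_{2m}$, I would introduce Majorana operators $c_{2i-1}=f_{i}+f^{\dagger}_{i}$ and $c_{2i}=-i(f_{i}-f^{\dagger}_{i})$, which satisfy $\{c_{a},c_{b}\}=2\delta_{ab}\mathds{1}$. The antisymmetric bilinears $M_{ab}=\tfrac{1}{4}[c_{a},c_{b}]$ with $1\le a<b\le 2m$ span the traceless part of the quadratic fermionic operators, giving exactly $\binom{2m}{2}=m(2m-1)=\dim\mathfrak{so}_{2m}$ independent elements, and a direct Clifford computation yields
\begin{equation}
[M_{ab},M_{cd}]=\delta_{bc}M_{ad}-\delta_{ac}M_{bd}-\delta_{bd}M_{ac}+\delta_{ad}M_{bc}.
\end{equation}
These are the structure constants of $\mathfrak{so}_{2m}$ in its defining presentation, so $M_{ab}\mapsto E_{a,b}-E_{b,a}$ is a Lie algebra isomorphism onto the antisymmetric real matrices of size $2m$. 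Under this map, $n_{0,i}-\tfrac{1}{2}\mathds{1}$ is proportional to $M_{2i-1,2i}$, so $\{n_{0,i}\}$ lands on the standard block-diagonal Cartan of $\mathfrak{so}_{2m}$, whose rank $m$ confirms maximality one final time.

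The main obstacle is notational rather than conceptual: tracking signs and factors of $i$ in the Majorana identities, and cleanly isolating the central generator $\mathds{1}$ (which appears explicitly in the $[f^{\dagger}_{i}f^{\dagger}_{j};f_{k}f_{l}]$ relation of the preceding proposition) from the semisimple part $\mathfrak{so}_{2m}$. Once that splitting is made explicit, the corollary reduces to two standard facts --- that quadratic Majorana bilinears close into $\mathfrak{so}_{2m}$, and that $\{-iM_{2i-1,2i}\}_{i=1}^{m}$ forms a Cartan subalgebra of rank $m$ --- both established by the computations sketched above.
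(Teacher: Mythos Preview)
Your proposal is correct and substantially more detailed than what the paper provides. The paper's own proof consists entirely of a citation---``See Chapter 21 of Classical Groups for Physicists''---so there is no argument to compare against; you have supplied the content that the paper outsources to Wybourne.

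The route you take (direct verification of the displayed commutators via the derivation rule, identification of the Cartan subalgebra through the weight decomposition $\pm e_k\pm e_l$, and the Majorana change of basis $c_{2i-1}=f_i+f_i^\dagger$, $c_{2i}=-i(f_i-f_i^\dagger)$ to exhibit the $\mathfrak{so}_{2m}$ structure constants) is exactly the standard argument one finds in the cited reference, so in spirit your approach and the paper's are the same---you have simply unpacked what the citation points to. The one cosmetic point worth tidying is the phrase ``traceless part'': more precisely, the central element $\mathds{1}$ splits off and the quotient (equivalently, the span of the $M_{ab}$) is the semisimple piece isomorphic to $\mathfrak{so}_{2m}$, while the full quadratic algebra is $\mathfrak{so}_{2m}\oplus\mathbb{C}\mathds{1}$. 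Once that is stated cleanly, nothing further is needed.
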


\begin{proof}
See Chapter 21 of Classical Groups for Physicists  \cite{wybourne1974}.
\end{proof}

Given the above relations, we are now in order to define the fermionic linear-optical operators, as follows.
\begin{defi}
The Lie-group generated by taking the exponential of the hermitian operators
\begin{equation}
\left\{n_{0,i},\frac{1}{2}(f^{\dagger}_{i}f_{j}+f^{\dagger}_{j}f_{i}),\frac{i}{2}(f^{\dagger}_{i}f_{j}-f^{\dagger}_{j}f_{i}),\frac{1}{2}(f^{\dagger}_{i}f^{\dagger}_{j}+f_{j}f_{i}),\frac{i}{2}(f^{\dagger}_{i}f^{\dagger}_{j}-f_{j}f_{i})|i,j=1,\dots,m\right\},
\end{equation}
is called the group of fermionic linear-optical operators, or \textbf{FLO} operators for short, which is isomorphic to $U(1)\cross SO(2M)$. The group elements
\begin{align}
PS_{i}(\phi)&=\exp[i\phi(f^{\dagger}_{i}f_{i})]\\
BS_{i,j}(\theta)&=\exp[i\theta(f^{\dagger}_{i}f_{j}+f^{\dagger}_{j}f_{i})]\\
PA_{i,j}(\nu)&=\exp[i\nu(f^{\dagger}_{i}f^{\dagger}_{j}+f_{j}f_{i})],
\end{align}
are generators of the FLO group, and are called by special names. The operator $PS_{i}(\nu)$ is called a \textit{phase-shifter}, $BS_{i,j}(\theta)$ is called a \textit{beam-splitter}, and $PA_{i,j}(\nu)$ is called a \textit{parametric amplifier}. 
\end{defi}

{Having characterised the generators of the fermionic linear-optical dynamics, we now look for the smallest possible description of this set. In order to do this, we need to define another special element.}

\begin{defi}
The \textbf{fermionic swap} is the fermionic linear-optical element given by
\begin{equation}
f\textsc{SWAP}_{i,j}=\exp\left[i\frac{\pi}{2}(f^{\dagger}_{i}-f^{\dagger}_{j})(f_{i}-f_{j})\right]=\mathds{1}-(n_{0,i}+n_{0,j})+(f^{\dagger}_{i}f_{j}+f^{\dagger}_{j}f_{i}),
\end{equation}
and its action over creation operators is
\begin{equation}
(f\textsc{SWAP}_{i,j})f^{\dagger}_{l}(f\textsc{SWAP}_{i,j})=\begin{cases}
f^{\dagger}_{j}&\text{, if }l=i,\\
f^{\dagger}_{i}&\text{, if }l=j,\\
f^{\dagger}_{l}&\text{, if }l\neq i,j.
\end{cases}
\end{equation}
\end{defi}

{Using this definition, we can provide a simple description of fermionic linear optics.}

\begin{prop}
The set of fermionic linear-optical elements
\begin{equation}
\left\{PS_{1}(\phi),BS_{1,2}(\theta),PA_{1,2}(\nu),f\textsc{SWAP}_{i,i+1}|i=1,\dots,m\right\}
\end{equation}
generates the FLO group.
\end{prop}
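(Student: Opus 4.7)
The strategy is to leverage the conjugation action of the fermionic swaps to transport the three ``canonical'' generators $PS_1$, $BS_{1,2}$, $PA_{1,2}$ to analogous generators acting on any pair of modes, and then to use conjugation by phase shifters to flip between the ``symmetric'' combinations $f^\dagger_i f_j + f^\dagger_j f_i$, $f^\dagger_i f^\dagger_j + f_j f_i$ (which appear in the definitions of $BS_{i,j}$ and $PA_{i,j}$) and the ``antisymmetric'' ones $i(f^\dagger_i f_j - f^\dagger_j f_i)$, $i(f^\dagger_i f^\dagger_j - f_j f_i)$ (which are needed to generate the full Lie algebra $\mathfrak{u}(1)\oplus\mathfrak{so}(2m)$). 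Since $e^{g X g^{-1}} = g\,e^X g^{-1}$, conjugating exponentials by elements of the generated subgroup produces new exponentials of transformed generators, and all of these lie in the generated subgroup by construction.

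I would first observe that any product of adjacent swaps $f\textsc{SWAP}_{i,i+1}$ implements an arbitrary permutation $\sigma$ of the mode labels when conjugated onto creation/annihilation operators, by the formula for $(f\textsc{SWAP}_{i,j})f^\dagger_l(f\textsc{SWAP}_{i,j})$ stated in the excerpt combined with the fact that adjacent transpositions generate the symmetric group $S_m$. Conjugating $PS_1(\phi) = \exp[i\phi\,n_{0,1}]$ by such a permutation yields $PS_k(\phi)$ for any $k$. Similarly, conjugating $BS_{1,2}(\theta)$ (respectively $PA_{1,2}(\nu)$) by the permutation sending $1 \mapsto i$, $2 \mapsto j$ yields $BS_{i,j}(\theta)$ (respectively $PA_{i,j}(\nu)$), because the generator of $BS$ and $PA$ is symmetric in its indices, so the resulting operator is well-defined for any ordered pair.

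Next, to obtain the antisymmetric generators, I would compute $PS_i(\pi/2)\,f^\dagger_i\,PS_i(-\pi/2) = i f^\dagger_i$, which follows from $[n_{0,i}, f^\dagger_i] = f^\dagger_i$. Conjugating $BS_{i,j}(\theta)$ by $PS_i(\pi/2)$ thus replaces its generator with $i(f^\dagger_i f_j - f^\dagger_j f_i)$, providing access to the exponential of this antisymmetric combination; conjugating $PA_{i,j}(\nu)$ by $PS_i(\pi/2)$ similarly yields the exponential of $i(f^\dagger_i f^\dagger_j - f_j f_i)$. Together with the already-obtained $PS_i$, $BS_{i,j}$, $PA_{i,j}$, we have realised one-parameter subgroups generated by each of the hermitian operators in the defining list of FLO generators.

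Finally I would conclude using the fact that $U(1)\times SO(2m)$ is a connected Lie group, so any group element is a finite product of exponentials of Lie-algebra generators; since each such exponential lies in the subgroup generated by $\{PS_1, BS_{1,2}, PA_{1,2}, f\textsc{SWAP}_{i,i+1}\}$, the entire FLO group does as well. The main obstacle, and the only place where care is required, is the sign/phase bookkeeping when conjugating quadratic fermionic operators by the $f\textsc{SWAP}$ and by $PS_i(\pi/2)$: one must verify that the anti-commutation relations conspire so that the transported operator is exactly of the claimed form, with the correct sign in front of each monomial. The identity for $(f\textsc{SWAP}_{i,j})f^\dagger_l(f\textsc{SWAP}_{i,j})$ given in the excerpt already handles the swap case, and the $PS_i(\pi/2)$ case reduces to a direct application of $[n_{0,i}, f^\dagger_j] = \delta_{ij}f^\dagger_j$, so no new technical input is required.
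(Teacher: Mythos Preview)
Your proof is correct and follows essentially the same approach as the paper: conjugate $PS_1$, $BS_{1,2}$, $PA_{1,2}$ by products of nearest-neighbour fermionic swaps to obtain $PS_i$, $BS_{i,j}$, $PA_{i,j}$ for arbitrary modes.

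The one difference is that you go further than the paper does. The paper's proof stops after obtaining all $PS_i$, $BS_{i,j}$, $PA_{i,j}$, implicitly relying on the earlier assertion (stated in the definition of FLO but not proved there) that these three families already generate the group. You instead explicitly produce the ``antisymmetric'' generators $i(f^\dagger_i f_j - f^\dagger_j f_i)$ and $i(f^\dagger_i f^\dagger_j - f_j f_i)$ by conjugating with $PS_i(\pi/2)$, and then invoke connectedness of $U(1)\times SO(2m)$ to close the argument. Your version is more self-contained; the paper's is shorter but leaves that step to the reader.
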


\begin{proof}
We prove this by showing how to build any distant-mode phase-shifter, beam-splitter and parametric amplifier. First, notice that
\begin{equation}
f\textsc{SWAP}_{i,i+1}f\textsc{SWAP}_{i+1,i+2}f\textsc{SWAP}_{i,i+1}=f\textsc{SWAP}_{i,i+2},
\end{equation}
for any $i=1,\dots,m$. This implies that for $i<j$ with $i,j=1,\dots,m$,
\begin{equation}
f\textsc{SWAP}_{i,j}=\underbrace{f\textsc{SWAP}_{i,i+1}\cdots f\textsc{SWAP}_{j-2,j-1}}_\text{$(j-i-1)$ times}f\textsc{SWAP}_{j-1,j}\underbrace{f\textsc{SWAP}_{j-2,j-1}\cdots f\textsc{SWAP}_{i,i+1}}_\text{$(j-i-1)$ times}.
\end{equation}
Therefore, we can built any fermionic swap using only nearest-neighbour fermionic swaps. This also implies that we can build any other phase-shifter, beam-splitter or parametric amplifier since
\begin{align*}
PS_{i}(\phi)&=f\textsc{SWAP}_{1,i}PS_{1}(\phi)f\textsc{SWAP}_{1,i},\\
BS_{i,j}(\theta)&=f\textsc{SWAP}_{1,i}f\textsc{SWAP}_{2,j}BS_{1,2}(\theta)f\textsc{SWAP}_{2,j}f\textsc{SWAP}_{1,i},\\
PA_{i,j}(\nu)&=f\textsc{SWAP}_{1,i}f\textsc{SWAP}_{2,j}PA_{1,2}(\nu)f\textsc{SWAP}_{2,j}f\textsc{SWAP}_{1,i},   
\end{align*}
which proves the proposition.
\end{proof}

\section{Fermionic-anyonic linear optics and matchgates}\label{app: fermionic anyons LO}

{Having characterised fermionic linear-optical dynamics, we now move on to their anyonic generalisation and the connection to matchgates, a special class of quantum circuits. This is done by studying which fermionic linear-optical operators have an invariant form under the fractional Jordan-Wigner homomorphism.}

\begin{defi}
The set of fermionic anyon operators generated by finite products of
\begin{align}
PS_{i}(\theta)&=\exp[i\theta(a^{\dagger}_{\varphi,i}a_{\varphi,i})]\\
BS_{i,j}(\theta)&=\exp[i\theta(a^{\dagger}_{\varphi,i}a_{\varphi,j}+a^{\dagger}_{\varphi,j}a_{\varphi,i})]\\
PA_{i,j}(\theta)&=\exp[i\theta(a^{\dagger}_{\varphi,i}a^{\dagger}_{\varphi,j}+a_{\varphi,j}a_{\varphi,i})],
\end{align}
is called the set of fermionic-anyon linear-optical dynamics, or $\mathbf{\Phi LO}$ for short. For the particular case of $\varphi=\pi$, it is called the set of qubit linear-optical dynamics, or $\mathbf{QLO}$, since for $\varphi=\pi$ the fermionic-anyons creation and annihilation operators behave like spin-1/2 raising and lowering operators. In particular, $\mathbf{QLO}$ coincides with what is called in the literature by \textbf{matchgate circuits}.
\end{defi}

\section{Proof that Bogoliubov transformations for fermionic anyons are classically simulable}\label{app: classically simulable}

The proof that Bogoliubov transformations for anyons are classically simulable is as follows. For any state in the fermionic-anyon Fock basis $\ket{\vb{x}}_{a}$, we know that
\begin{equation}
\ket{\vb{x}}_{a}=J^{-1}(\ket{\vb{x}}_{f}),
\end{equation}
where $\ket{\vb{x}}_{f}$ is a state in the fermionic Fock-basis with the same occupation numbers. Since $J^{-1}$ is a *- algebra homomorphism we also have that $\bra{\vb{x}}_{a}=J^{-1}(\bra{\vb{x}}_{f})$. 

Therefore, we must have that
\begin{equation}
\mel{\vb{y}}{M}{\vb{x}}_{a}=J^{-1}(\bra{\vb{y}})J^{-1}(M')J^{-1}(\ket{\vb{x}})=J^{-1}(\mel{\vb{y}}{M'}{\vb{x}}_{f}),
\end{equation}
where $M'$ is a fermionic Bogoliubov transformation. Now, by the fermionic-anyon commutation relations, we know that
\begin{equation}
J^{-1}(\braket{\vb{y}}{\vb{x}}_{f})=\braket{\vb{y}}{\vb{x}}_{a}=\prod_{i=1}^{n}\delta_{x_{i},y_{i}}=\braket{\vb{y}}{\vb{x}}_{f},
\end{equation}
for any bit-strings $\vb{x}$ and $\vb{y}$. This fact, together with the linearity of $J^{-1}$, implies that 
\begin{equation}
J^{-1}(\mel{\vb{y}}{M'}{\vb{x}}_{f})=\mel{\vb{y}}{M'}{\vb{x}}_{f}.
\end{equation}

Now, since Bogoliubov transformations for fermions are simulated efficiently by classical computers, there must exist a polynomial-time algorithm for computing $\abs{\mel{\vb{y}}{M'}{\vb{x}}_{f}}^{2}$. Since $\abs{\mel{\vb{y}}{M'}{\vb{x}}_{f}}^{2}=\abs{\mel{\vb{y}}{M}{\vb{x}}_{a}}^{2}$, the same polynomial-time algorithm must efficiently simulate the fermionic-anyon Bogoliubov transform $M$, proving our claim.
%\end{proof}

%%%%%
\section{An example with fermionic-anyon beam-splitter}\label{app: example}

For general $N$-modes fermionic systems, a single particle operation is written as a unitary operation in the form 
\begin{align}\label{eq: sp trans fermions}
{f^{\prime}}^{\dagger}_{i}=\sum_{j=1}^{m}U_{ij}f^{\dagger}_{j},
\end{align}
where $U_{ij}$ are the elements of the unitary transformation. For fermions this transformation characterizes a change of basis on the fermions, it is also a canonical transformation as it does not affect the fermionic commutation relation. An example of transformation like Eq.\ \eqref{eq: sp trans fermions} is the fermionic beam splitter
\begin{equation}\label{eq: bs fermions}
{BS}\suptiny{0}{0}{\mathcal{F}}_{m,n}(\theta)=\exp{\left(i\theta(f^{\dagger}_{n}f_{m}+f^{\dagger}_{m}f_{n})\right)}
\end{equation} 
where $m$ and $n$ are mode variables and $\cos{\theta}$ is the transmission amplitude. As shown in \cite{tosta_quantum_2019}, it is possible to define a fermionic anyons version of Eq.\ \eqref{eq: bs fermions}, with the same fashion 
\begin{equation}\label{eq: bs anyons}
{BS}\suptiny{0}{0}{\mathcal{A}}_{m,n}(\theta)=\exp{\left(i\theta(a^{\dagger}_{n}a_{m}+a^{\dagger}_{m}a_{n})\right)}
\end{equation}
where $a^{\dagger}_{n}(a_{m})$ are the anyonic creation(annihilation) operators over mode $n(m)$. By using the commutation relations for fermionic anyons, the expansion in series of the beam-splitter unitary above can be written as 
\begin{equation}\label{eq: anyonic_beam_splitter}
{BS}\suptiny{0}{0}{\mathcal{A}}_{m,n}(\theta)=1+i\sin{\theta}(a^{\dagger}_{n}a_{m}+a^{\dagger}_{m}a_{n})+(\cos{\theta}-1)(a^{\dagger}_{n}a_{m}+a^{\dagger}_{m}a_{n})^{2}. 
\end{equation}
 If we restrict the operations in the computational model for fermionic anyons, given in \cite{tosta_quantum_2019}, to allow only beam-splitters between nearest-neighbour modes we obtain a model with the same computational power to the \textit{Matchgates} model (see \cite{brod_extending_2011,brod_geometries_2012}), which is easy to simulate classically. Therefore we can try to see if the canonical transformations generated by multimode interferometers built using only nearest-neighbour beams-splitters and general phase-shifters preserve the usual notion of a separable state in the anyonic case. To illustrate let us consider a system with four modes two anyons, described by the state 
\begin{align}
    \label{eq: ex initial state}
    \ket{\psi}=\frac{1}{\sqrt{2}}( a_{1}^{\dagger}a_{2}^{\dagger}+ 
   a_{1}^{\dagger}a_{4}^{\dagger})\ket{0}.
\end{align}
Considering the usual notion of separability for identical particles, a two anyons pure state $\ket{\psi}$  is separable if it can be written as a \emph{single Slater permanent}, for any two modes $i$ and $j$,  
\begin{equation}
    \ket{\Psi} = a_i^{\dagger} a_j^{\dagger}\ket{0}. 
\end{equation}

Therefore, considering this definition of a separable state for two anyons pure state, the state in Eq.~\eqref{eq: ex initial state} is separable.  As the beam splitter is a single particle transformation, we expect it preserves the separability of $\ket{\psi}$, in Eq.~\eqref{eq: ex initial state}. Performing the beam splitter in Eq.~\eqref{eq: bs anyons} over modes 1 and 2 on state $\ket{\psi}$, the action results in the state 
\begin{equation}\label{eq: ex output state}
   \ket{\psi\subtiny{0}{0}{\theta}} = \frac{1}{\sqrt{2}}\left( a_{1}^{\dagger}a_{2}^{\dagger}
    + \cos\theta a_{1}^{\dagger}a_{4}^{\dagger} + i\sin\theta a_{2}^{\dagger}a_{4}^{\dagger}\right)\ket{0},
\end{equation}
where $\ket{\psi\subtiny{0}{0}{\theta}}={BS}\suptiny{0}{0}{\mathcal{A}}_{m,n}(\theta)\ket{\psi}$.
In order to check if the output remains separable, one can calculate the von Neumann entropy of the single-particle density matrix. Before calculating the single-particle density matrix, we need to do a digression to explain how to apply the partial trace. A general quantum state of $N$ fermionic anyons can be written  as
\begin{equation}
 \ket{\psi} = \sum_{I_n} w_{I_N} a_{i_1}^\dagger ... a_{i_N}^\dagger \ket{\textrm{vac}}
\end{equation}
where $\ket{\textrm{vac}}$ is the vacuum state, $I_N = (i_1,...,i_N)$ is a shorthand for the list of particle indices, and 
$\sum_{I_N} |w_{I_N}|^2=1$. We can also describe it by its density matrix $ \rho = \ketbra{\psi}{\psi}$, whose elements we write as
\begin{equation}\label{eq.dm.elements}
  \rho(I_N;J_N) = \langle \textrm{vac}| a_{I_N}
  \rho\, \, 
  a_{J_N}^\dagger |\textrm{vac} \rangle,
\end{equation}
where $ a_{I_N}^\dagger =  a_{i_1}^\dagger \ldots  a_{i_N}^\dagger$ and $ a_{J_N}^\dagger =  a_{j_1}^\dagger \ldots  a_{j_N}^\dagger$.

Consider a system composed of $N$ fermionic anyons, and a bipartition of the particles into complementary subsets $A$ and $B$, composed of $N_{A}$ and $N_B$ particles, respectively, such that $A \bigcup \,B = \{1,\ldots,N\}$ and $A \bigcap B = \emptyset$.
The partial trace over the set of particles in $B$ (similarly for $A$) is performed by integration of their corresponding degrees of freedom. 
Using the notation of Eq.~\eqref{eq.dm.elements} we have that
\begin{eqnarray}\label{eq.partial.trace.elementwise}
  \rho_{N_A}(I_{N_A}; J_{N_A}) 
 & = & \sum_{i_k=j_k | k \in B }  \rho(I_{N};J_{N}),
\end{eqnarray}
representing the reduced state of $N_A$ particles. 

Consider a simple case of $N=2$ particles, with $x =  \{1\} $ and $y = \{2\} $. 
Tracing out particle x we obtain
\begin{equation}\label{eq: partial trace x}
 \rho_y(i_2;j_2) = \bra{\textrm{vac}}  a_{i_2} \left( \sum_{i_1}  a_{i_1} \ketbra{\psi}{\psi}  a^\dagger_{i_1} \right)  a^\dagger_{j_2} \ket{\textrm{vac}},
\end{equation}
directly from Eq.~\eqref{eq.dm.elements}. On the other hand, by tracing out particle y we obtain
\begin{equation}
\label{eq: partial trace y}
 \rho_x(i_1;j_1) = \bra{\textrm{vac}}  a_{i_1} \left( \sum_{i_2} e^{i \phi (\epsilon_{i_2 i_1} + \epsilon_{j_1 i_2}) }  a_{i_2} \ketbra{\psi}{\psi}  a^\dagger_{i_2} \right)  a^\dagger_{j_1} \ket{\textrm{vac}}.
\end{equation}

Now, we can apply the partial trace in the state in Eq.~\eqref{eq: ex output state} to verify the separability. The anyonic algebra creates some changes in the notion of inner product in Fock space. There is a dependence in the algebraic phase referent to the spacial mode to be traced out. Therefore, the single-particle density matrix of  $\ket{\psi\subtiny{0}{0}{\theta}} $ carries this dependence explicitly
\begin{align}\label{eq: single particle output}
    \rho\subtiny{0}{0}{sp}=\frac{1}{4}\left(
    \scalemath{0.65}{
\begin{array}{cccc}
 1+ \cos ^2(\theta ) & i \sin (\theta ) \cos (\theta ) & 0 &  i \sin (\theta ) e^{i \phi} \\
 - i  \sin (\theta ) \cos (\theta ) &  1 + \sin ^2(\theta ) & 0 & \cos(\theta ) \\
 0 & 0 & 0 & 0 \\
 - i  e^{-i \phi } \sin (\theta ) &  \cos (\theta ) & 0 & 1  \\
\end{array}}
\right).
\end{align}
The dependence of $\phi$ in the reduced density matrix brings a misleading notion of entanglement in this state, as we can see in Fig.~\ref{fig: entanglement entropy x and y}, where the entanglement was calculated with the von Neumann entropy.
\begin{figure}[t]
    \centering
    \includegraphics[scale=0.3]{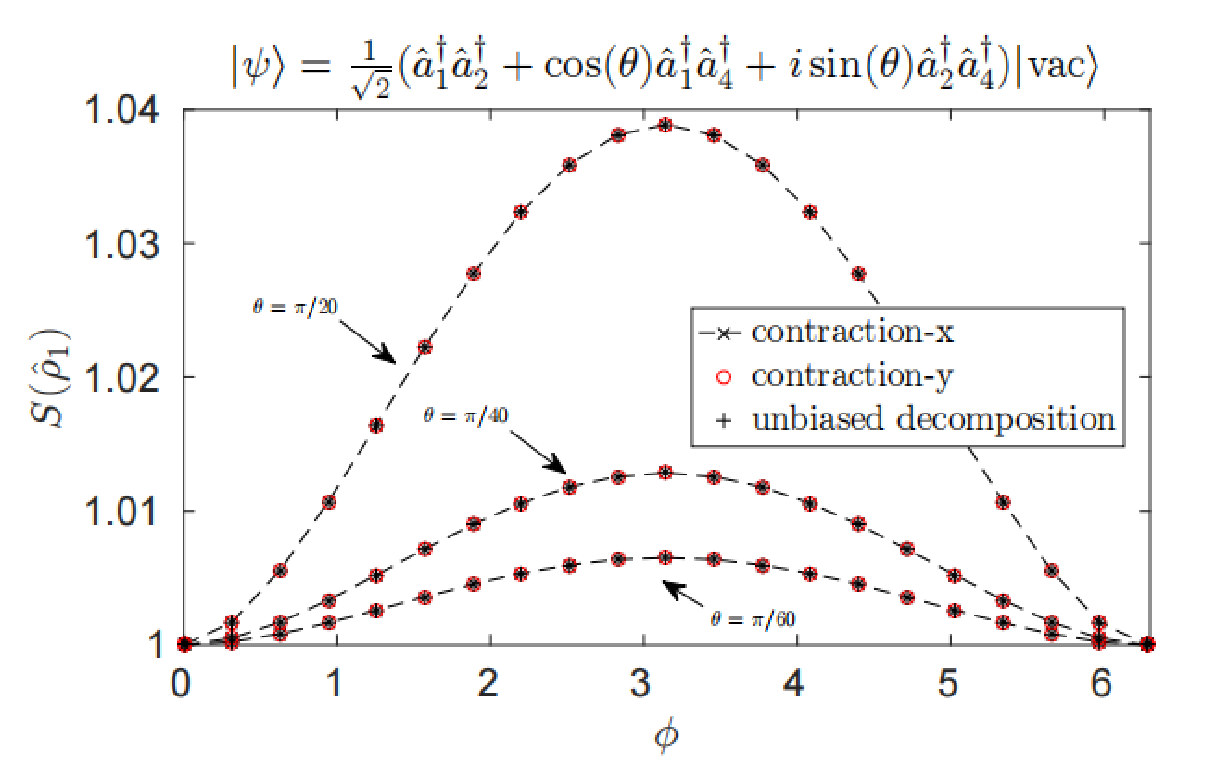}
    \caption{The von Neumann entropy of the single particle state for different values of $\theta$ in the function of $\phi$. The label $x$ indicates partial trace according to Eq.~\eqref{eq: partial trace x}, and the label $y$ indicates partial trace according to Eq.~\eqref{eq: partial trace y}.}
    \label{fig: entanglement entropy x and y}
\end{figure}
The entanglement behaviour varies in function of $\theta$ and $\phi$.
The dependence in $\phi$ arises from the terms $ \rho(1;4)$ and $ \rho(4;1)$, for $$ \rho(1;4)=-i\sin(\theta)\bra{0}a_{1}a_{2}a_{1}^{\dagger}a_{2}^{\dagger}\ketbra{0}{0}a_{4}a_{2}a_{2}^{\dagger}a_{4}^{\dagger}\ket{0}=i\sin(\theta)\exp(i\phi)\bra{0}a_{1}a_{1}^{\dagger}\ketbra{0}{0}a_{4}a_{4}^{\dagger}\ket{0}=i\sin(\theta)\exp(i\phi).$$ Also, similarly, the term $$ \rho(4;1)=i\sin(\theta)\bra{0}a_{4}a_{2}a_{2}^{\dagger}a_{4}^{\dagger}\ketbra{0}{0}a_{2}a_{1}a_{2}^{\dagger}a_{1}^{\dagger}\ket{0}=-i\sin(\theta)\exp(-i\phi)\bra{0}a_{4}a_{4}^{\dagger}\ketbra{0}{0}a_{1}a_{1}^{\dagger}\ket{0}=-i\sin(\theta)\exp(-i\phi)$$ has dependence in $\phi$. 
Actually, the $\phi$ dependence in the coherence of the single-particle density matrices reflects its eigenvalues and consequently the entanglement. This dependence could be suppressed if the single-particle density matrix is already diagonal, after the partial trace, It would analogous to obtaining the Schmidt decomposition for anyonic fermions.  

Considering now, one performs the permutation operation on $\ket{\psi\subtiny{0}{0}{\theta}}$, resulting
\begin{equation}
    \M{P}\ket{\psi\subtiny{0}{0}{\theta}} = \exp(-i\phi) \frac{1}{\sqrt{2}}\left( a_{2}^{\dagger}a_{1}^{\dagger}
    + \cos\theta a_{4}^{\dagger}a_{1}^{\dagger} + i\sin\theta a_{4}^{\dagger}a_{2}^{\dagger}\right)\ket{0},
\end{equation}
it carries a global phase without any implication in the global density matrix. Although, the single particle state gets some local phases, resulting 
\begin{align}\label{eq: single particle output permuted}
    \rho\suptiny{0}{0}{\M{P}}\subtiny{0}{0}{sp}=\frac{1}{4}\left(
    \scalemath{0.65}{
\begin{array}{cccc}
 1+ \cos ^2(\theta ) & i \sin (\theta ) \cos (\theta ) & 0 &  i \sin (\theta ) e^{-i \phi} \\
 - i  \sin (\theta ) \cos (\theta ) &  1 + \sin ^2(\theta ) & 0 & \cos(\theta ) \\
 0 & 0 & 0 & 0 \\
 - i  e^{+i \phi } \sin (\theta ) &  \cos (\theta ) & 0 & 1  \\
\end{array}}
\right),
\end{align}
where $\rho\suptiny{0}{0}{\M{P}}\subtiny{0}{0}{sp} = \sum_{k}a_k \M{P}\ketbra{\psi\subtiny{0}{0}{\theta}}{\psi\subtiny{0}{0}{\theta}}\M{P}^{\dagger}a_k^{\dagger}$.
Besides anyonic permutation relations result on different single-particle states, the von Neumann of the single-particle density matrices are the same, as shown in Fig.~\ref{fig: entanglement entropy x and y}. The action of permutations acts as we are tracing out another particle, as shown in Eq.~\eqref{eq: partial trace x} and Eq.~\eqref{eq: partial trace y} the trace of one particle or another can add a phase due to the permutation of anyons particles. To see this, considering the terms that have a dependence in $\phi$ in the single-particle density matrix, we will apply the partial trace as calculated Eq.~\eqref{eq: partial trace y} for the state Eq.~\eqref{eq: ex output state}. For the element 
\begin{align}
 \rho\subtiny{0}{0}{sp}(1;4)=-i\sin(\theta)\bra{0}a_{1}\exp(i\phi(\epsilon_{21}+\epsilon_{42}))a_{2}a_{1}^{\dagger}a_{2}^{\dagger}\ketbra{0}{0}a_{4}a_{2}a_{2}^{\dagger}a_{4}^{\dagger}\ket{0}=&i\sin(\theta)\exp(-i\phi)\bra{0}a_{1}a_{1}^{\dagger}\ketbra{0}{0}a_{4}a_{4}^{\dagger}\ket{0} \nonumber \\ \nonumber
=&i\sin(\theta)\exp(-i\phi)=  \rho\suptiny{0}{0}{\M{P}}\subtiny{0}{0}{sp}(1;4). 
\end{align} 
And the term 
\begin{align}
 \rho\subtiny{0}{0}{sp}(4;1)=i\sin(\theta)\bra{0}a_{4}\exp(i\phi(\epsilon_{24}+\epsilon_{12}))a_{2}a_{2}^{\dagger}a_{4}^{\dagger}\ketbra{0}{0}a_{2}a_{1}a_{2}^{\dagger}a_{1}^{\dagger}\ket{0}=&-i\sin(\theta)\exp(i\phi)\bra{0}a_{4}a_{4}^{\dagger}\ketbra{0}{0}a_{1}a_{1}^{\dagger}\ket{0} \nonumber \\ \nonumber
=&-i\sin(\theta)\exp(i\phi)= \rho\suptiny{0}{0}{\M{P}}\subtiny{0}{0}{sp}(4;1).
\end{align}
We show that in this way that the terms $ \rho\suptiny{0}{0}{\M{P}}\subtiny{0}{0}{sp}(1;4)$ and $ \rho\suptiny{0}{0}{\M{P}}\subtiny{0}{0}{sp}(4;1)$ of the single-particle density matrix of the state with permutation Eq.~\eqref{eq: single particle output permuted} are obtained tracing out the particle (y) instead of the particle (x) of the state Eq.~\eqref{eq: ex output state}.   

However, applying the fractional Jordan-Wigner transform (FJWT) in the state of the Eq.~\eqref{eq: ex output state}
\begin{equation}\label{eq: ex output state fermions}
   J\ket{\psi\subtiny{0}{0}{\theta}} = \exp(-i\phi)\frac{1}{\sqrt{2}}\left( f_{1}^{\dagger}f_{2}^{\dagger}
    + \cos\theta f_{1}^{\dagger}f_{4}^{\dagger} + i\sin\theta f_{2}^{\dagger}f_{4}^{\dagger}\right)\ket{0},
\end{equation}
%\red{TD. Use sempre a mesma notação usada no restante do texto.} 
and after tracing out one particle we get the following single particle density matrix 
\begin{align}\label{eq: single particle output fermions}
    \rho\subtiny{0}{0}{sp}\suptiny{0}{0}{f}=\frac{1}{4}\left(
    \scalemath{0.65}{
\begin{array}{cccc}
 1 + \cos ^2(\theta ) & i \sin (\theta ) \cos (\theta ) & 0 &  i \sin (\theta ) \\
 - i  \sin (\theta ) \cos (\theta ) &  1 + \sin ^2(\theta ) & 0 & \cos(\theta ) \\
 0 & 0 & 0 & 0 \\
 - i  \sin (\theta ) &  \cos (\theta ) & 0 & 1  \\
\end{array}}
\right),
\end{align}
it is independent of $\phi$. The single-particle density matrix is obtained through partial trace $\rho \suptiny{0}{0}{f} \subtiny{0}{0}{sp} =\sum_{k}\bra{0}f_k \ketbra{\psi\subtiny{0}{0}{\theta}}{\psi\subtiny{0}{0}{\theta}}f_k^{\dagger}\ket{0}$, the partial trace is independent of the particles being trace out, such that for tracing out particle x
\begin{equation}\label{eq: partial trace x fermions}
 \rho_y(i_2;j_2) = \bra{\textrm{vac}} f_{i_2} \left( \sum_{i_1} f_{i_1} \ketbra{\psi}{\psi} f^\dagger_{i_1} \right) f^\dagger_{j_2} \ket{\textrm{vac}}.
\end{equation}
And tracing out particle y we obtain
\begin{equation}
\label{eq: partial trace y fermions}
 \rho_x(i_1;j_1) = \bra{\textrm{vac}}  -f_{i_1} \left( \sum_{i_2} f_{i_2} \ketbra{\psi}{\psi} f^\dagger_{i_2} \right)  -f^\dagger_{j_1} \ket{\textrm{vac}} = \bra{\textrm{vac}}  f_{i_1} \left( \sum_{i_2} f_{i_2} \ketbra{\psi}{\psi} f^\dagger_{i_2} \right)  f^\dagger_{j_1} \ket{\textrm{vac}} =  \rho_y(i_2;j_2).
\end{equation}
 Also, the permutation operation on $J(\mathcal{P}\ket{\psi\subtiny{0}{0}{\theta}})$ only gives a minus sign resulting  
 \begin{equation}\label{eq: ex output state fermions permuted}
   J(\mathcal{P}\ket{\psi\subtiny{0}{0}{\theta}}) = -\exp(-i\phi)\frac{1}{\sqrt{2}}\left( f_{2}^{\dagger}f_{1}^{\dagger}
    + \cos\theta f_{4}^{\dagger}f_{1}^{\dagger} + i\sin\theta f_{4}^{\dagger}f_{2}^{\dagger}\right)\ket{0},
\end{equation}
the minus sign does not change the single-particle density matrix, because does not affect the density matrix of the global state. Furthermore, the partial trace for fermions does not generate phases because it is independent of the ordering of the particles, then the single-particle density matrices are independent of $\phi$ and the permutations. Now, for fermions the term 
$$ \rho\subtiny{0}{0}{sp}\suptiny{0}{0}{f}(1;4)=-i\sin(\theta)\bra{0}f_{1}f_{2}f_{1}^{\dagger}f_{2}^{\dagger}\ketbra{0}{0}f_{4}f_{2}f_{2}^{\dagger}f_{4}^{\dagger}\ket{0}=i\sin(\theta)\bra{0}f_{1}f_{1}^{\dagger}\ketbra{0}{0}f_{4}f_{4}^{\dagger}\ket{0}=i\sin(\theta),$$
and the term $$ \rho\subtiny{0}{0}{sp}\suptiny{0}{0}{f\M{P}}(1;4)=-i\sin(\theta)\bra{0}f_{1}f_{2}f_{2}^{\dagger}f_{1}^{\dagger}\ketbra{0}{0}f_{2}f_{4}f_{2}^{\dagger}f_{4}^{\dagger}\ket{0}=i\sin(\theta)\bra{0}f_{1}f_{1}^{\dagger}\ketbra{0}{0}f_{4}f_{4}^{\dagger}\ket{0}=i\sin(\theta)= \rho\subtiny{0}{0}{sp}\suptiny{0}{0}{f}(1;4).$$
 
 The eigenvalues of the single-particle density matrix in the fermionic representation are $\omega_{\mu}=\{1/2,1/2,0,0\}$, so the state in Eq.~\eqref{eq: single particle output fermions} is not entangled according to Slater decomposition \cite{schliemann_quantum_2001}, since it has Slater rank equals to one. The single particle density matrices of the Eq.\eqref{eq: single particle output} and Eq.~\eqref{eq: single particle output permuted} present a von Neumann entropy dependent on $\phi$ and $\theta$ as shown in Fig.~\ref{fig: entanglement entropy x and y}, while in the fermionic representation, the von Neumann entropy is equal to one independently of values of $\phi$ and $\theta$. Using Theorem \ref{theorem: SM_SD for anyons} we know that the state in the fermionic representation after the application of the Slater decomposition can be mapped, with inverse FJWT, to a state of anyons with the same coefficients. Therefore, the Schmidt coefficients of anyons are equal to $\omega_{\mu}$, independently of $\phi$ and $\theta$, such that the state in Eq.~\eqref{eq: ex output state} is not entangled for all $\phi$ and $\theta$.

\end{document}